\documentclass[10pt,journal,compsoc,twocolumn]{IEEEtran}
\pdfoutput=1

\usepackage[nocompress]{cite}
\usepackage{threeparttable}
\usepackage{booktabs}
\usepackage{float}
\usepackage{extarrows}
\usepackage{graphicx}
\usepackage{epstopdf}
\usepackage{xcolor}
\usepackage{epsfig}

\usepackage{amsmath}
\usepackage{amssymb}
\usepackage{amsthm}
\usepackage{color}
\usepackage{url}
\usepackage{array}
\usepackage{multirow}
\usepackage{algorithm}
\usepackage{algorithmic}
\usepackage{bigstrut}
\usepackage[T1]{fontenc}
\usepackage[justification=centering]{caption}

\usepackage{rotating}

\renewcommand{\algorithmicrequire}{\quad \textbf{Input:}} 
\renewcommand{\algorithmicensure}{\quad \textbf{Output:}} 

\newtheorem{definition}{Definition}
\newtheorem{theorem}{Theorem}
\newtheorem{lemma}{Lemma}
\newtheorem{corollary}{Corollary}
\newtheorem{proposition}{Proposition}

\newtheorem{remark}{Remark}
\newtheorem{assumption}{Assumption}

\numberwithin{equation}{section}
\numberwithin{definition}{section}
\numberwithin{theorem}{section}
\numberwithin{lemma}{section}
\numberwithin{corollary}{section}
\numberwithin{proposition}{section}
\numberwithin{remark}{section}
\numberwithin{assumption}{section}
\newlength{\figurewidth}
\setlength{\figurewidth}{\linewidth}

\begin{document}

\title{A Generalized Tensor Formulation for Hyperspectral Image Super-Resolution Under General Spatial Blurring}

\author{Yinjian~Wang,
	Wei~Li,~\IEEEmembership{Senior Member,~IEEE}, 
	Yuanyuan~Gui,
	Qian Du,~\IEEEmembership{Fellow,~IEEE},\\
	and James~E.~Fowler,~\IEEEmembership{Fellow,~IEEE}
	\IEEEcompsocitemizethanks{\IEEEcompsocthanksitem This paper is supported by the National Key R\&D Program of China (Grant no. 2021YFB3900502). (Corresponding author: Wei Li.)
	}
	\IEEEcompsocitemizethanks{\IEEEcompsocthanksitem Y. Wang, W. Li, and Y. Gui are with the School of Information and Electronics,
		Beijing Institute of Technology, and Beijing Key Laboratory of Fractional
		Signals and Systems, 100081 Beijing, China (e-mail: yinjw@bit.edu.cn, liwei089@ieee.org, 953647315@qq.com).
	}
	\IEEEcompsocitemizethanks{\IEEEcompsocthanksitem Q.~Du and J.~E.~Fowler are with the Department of Electrical and Computer Engineering, Mississippi State University, Starkville, MS 39762 USA (e-mail: du@ece.msstate.edu, fowler@ece.msstate.edu).
}}

{}

\IEEEcompsoctitleabstractindextext{%

\begin{abstract}
Hyperspectral super-resolution is commonly accomplished by the fusing
of a hyperspectral imaging of low spatial resolution with a
multispectral image of high spatial resolution, and many tensor-based
approaches to this task have been recently proposed. Yet, it is
assumed in such tensor-based methods that the spatial-blurring
operation that creates the observed hyperspectral image from the
desired super-resolved image is separable into independent horizontal
and vertical blurring. Recent work has argued that such separable
spatial degradation is ill-equipped to model the operation of real
sensors which may exhibit, for example, anisotropic blurring.  To
accommodate this fact, a generalized tensor formulation based on a
Kronecker decomposition is proposed to handle any general
spatial-degradation matrix, including those that are not separable as
previously assumed. Analysis of the generalized formulation reveals
conditions under which exact recovery of the desired super-resolved
image is guaranteed, and a practical algorithm for such recovery,
driven by a blockwise-group-sparsity regularization, is proposed.
Extensive experimental results demonstrate that the proposed
generalized tensor approach outperforms not only traditional
matrix-based techniques but also state-of-the-art tensor-based
methods; the gains with respect to the latter are especially
significant in cases of anisotropic spatial blurring.
\end{abstract}
\begin{IEEEkeywords}
Image fusion, hyperspectral super-resolution, tensor factorization, recoverability, sparse coding, nonconvex surrogate.
\end{IEEEkeywords}}
\maketitle
\IEEEdisplaynotcompsoctitleabstractindextext
\IEEEpeerreviewmaketitle

\IEEEraisesectionheading{\section{Introduction}\label{sec:introduction}}
\IEEEPARstart{H}{yperspectral} imagery is designed to capture a
densely-sampled spectral signature for each pixel in an image,
providing much finer spectral information than other imaging
modalities. However, limited by the trade-off in current optical
imaging systems, detailed spectral information comes at the cost of
low spatial resolution in a hyperspectral image (HSI) \cite{7946218,10252045}.  This fact has
severely restricted the use of HSI in such applications as
classification, anomaly detection, and object tracking. To deal with
this issue, HSI super-resolution (HSR) has been the
subject of substantial recent work, and HSR can be most effectively
achieved through the fusion of an HSI with an image
possessing high spatial resolution, such as a multispectral image (MSI).

From the perspective of data reconstruction, HSR aims to recover the
super-resolution HSI (SRI) from the observed HSI and MSI.  Since the
earliest HSR studies (e.g., \cite{MKM1999,HEW2004}), the spatial
degradation from the SRI to the HSI, as well as the spectral
degradation from the SRI to the MSI, are modeled in a manner of a
matrix operation. In this article, we formally refer to such modeling
as the matrix formulation of HSR (i.e., MF-HSR). In the literature,
MF-HSR is the most common framework for HSR, and it explicitly
reflects the inherent ill-posedness of the HSR problem, since the
degradation equations therein are undetermined. To alleviate such
ill-posedness, much existing HSR work has taken into account various
priors induced from the intrinsic spatial and spectral correlations of
the SRI to develop MF-HSR methods through various means such as
spectral unmixing \cite{KMW2011,YYI2012,LBS2015,WBD2016,LMC2018},
sparse/low-rank representation
\cite{ASM2014,DFS2016,HSZ2018,SBA2015,HYX2020,XZB2021,DL2019,LLL2021},
non-local similarities \cite{HYL2022}, Bayesian learning
\cite{ASM2015,WDT2015,ZCZ2022,DQX2023}, and deep learning
\cite{NUS2022,DGL2023,XZZ2022}.

Although MF-HSR fusion models can be effective, they all rely on the
reshaping of the HSI into a 2D matrix, or 1D vector, for
processing. However, there is increasing interest
\cite{HYL2022,DFL2017,CHH2020,KFS2018} in instead considering the HSI
as a 3D tensor in order to exploit higher-order characteristics.
Accordingly, many efforts have been devoted to adopting
tensor-analysis tools to build tensor-based HSR models (e.g.,
\cite{LDF2018,KFS2018,PUC2020,DFH2021,BZX2021,WGH2020,XWC2020a,YXZ2023}).
To
adequately exploit the low-dimensional structure of high-dimensional
HSI data, the resulting tensor formulation of HSR (TF-HSR) usually
performs factorization in every single dimension. Thus, the spatial
domain is decomposed into two orthogonal dimensions which precludes
the design of TF-based methods following the framework of MF-HSR,
namely, the expressing of the spatial degradation as a single matrix
operation.  To address this problem, \cite{DFL2017,KFS2018} made the
additional assumption that the spatial-degradation matrix is separable
in the sense of being a Kronecker product of two independent vertical
and horizontal degradation matrices.  Using such Kronecker-separable
spatial degradation, a multitude of TF-HSR methods (e.g.,
\cite{LDF2018,KFS2018,PUC2020,DFH2021,BZX2021,WGH2020,XWC2020,YXZ2023})
have appeared in recent literature and achieved state-of-the-art
performance.

Yet, the success of a TF-HSR is highly dependent on the assumption of
separability of the spatial-degradation matrix.  As pointed out in
\cite{KFS2018}, the separable assumption holds for the most commonly
used average and Gaussian blurring kernels.  However, it has been
argued recently \cite{YZX2022,HCP2023,GX2023} that such a simple
average kernel or isotropic Gaussian kernel (IGK) is insufficient to
characterize blurring processes exhibited by real sensors. For HSI in
particular, sensor motion may result in anisotropic blurring
ill-captured by the usual IGK (e.g., \cite{ZVC2017,WM2018,ZSJ2014}).
In such cases, more general kernels, such as an anisotropic Gaussian
kernel (AGK), may more accurately model real-world spatial blurring.
Under this observation, several questions naturally arise.  Does the
separable assumption still hold under more realistic (anisotropic)
blurring kernels? If it does not, is there a criterion according to
which one can tell when TF-HSR is suitable, and when is not?
And, for more realistic blurring kernels for which TF-HSR fails, what
should be done to still take advantage of tensors in HSR?

In addition, another open issue for TF-HSR is blind HSR, i.e., in
which HSR is conducted without knowledge of the spatial- and
spectral-degradation matrices.  MF-HSR models have achieved this goal
by various techniques, since MF-HSR is based on the physical
interpretation of the blurring process. However, such is not the case
for TF-HSR models, wherein modeling of spatial blurring is
mathematically-driven. As a result, it is not known how to best
estimate the pair of matrices that constitute separable spatial
blurring, and most TF-HSR techniques (e.g.,
\cite{KFS2018,PUC2020,DFH2021,BZX2021}) resort to simply entirely
ignoring degradation information such that they remain agnostic to it.

In this paper, we examine the TF-HSR framework under the
assumption of general blurring and determine
that TF-HSR is not feasible under such conditions.
Thus, we propose a generalization of the TF-HSR
problem, which we refer to as generalized TF-HSR, or GTF-HSR.
The conditions for exact recovery of the
proposed GTF-HSR determined, and  
a two-stage optimization strategy is devised for
iterative estimation under a block-sparsity prior. 
Specifically, the primary contributions of this paper are:
\begin{itemize}
	\item
	We present GTF-HSR, which is based on the Kronecker
	decomposition. Leveraging the property of such decomposition, we
	obtain a generalized separable condition on the
	spatial-degradation matrix for GTF and establish an equivalency
	between the proposed GTF-HSR and MF-HSR. That is, the proposed
	GTF-HSR can be applied to as many cases as can MF-HSR, and, most
	importantly, when TF-HSR cannot.
	\item
	The proposed capacity of GTF-HSR to achieve exact SRI recovery
	(i.e., its \textit{recoverability}) is analyzed, and the conditions for
	exact recovery are determined. We also deduce the conditions when
	exact recovery by TF-HSR is impossible, which further emphasizes
	the benefit of our generalized approach.
	\item
	We invoke a blockwise group sparsity (BGS) as a new, higher-level
	prior.  The proposed BGS characterizes the grouped property of the
	sparsity in the core tensor of a Tucker decomposition of the
	SRI. With the help of a tensor-unfolding strategy as well as a
	nonconvex surrogate, BGS is easily imposed upon the core tensor
	to regularize the GTF-HSR problem, managing to explore the
	multi-linear properties of higher-order data in a compact form
	in an algorithmic framework we call BGS-GTF-HSR.
	\item
	To tackle the resulting regularized large-scale nonconvex
	BGS-GTF-HSR optimization problem, a two-stage optimization
	strategy consisting of subspace identification and BGS coding is
	devised. In the first stage, we cascade the two spatial subspace
	bases extracted from the MSI and HSI through singular-value
	decompostion (SVD) and sparse-dictionary learning, estimating the
	spectral subspace from the HSI alone via SVD. In the second stage,
	the BGS coefficients are estimated by an alternating-directions
	optimization.
	\item
	Extensive experiments on both synthetic and real-world datasets
	for blind and non-blind HSR problems demonstrate the superiority
	of the proposed GTF-HSR method compared to not only traditional MF-HSR
	methods but also to state-of-the-art TF-HSR methods.
\end{itemize}

The remainder of the paper is organized as follows.  First in
Sec.~\ref{sec:background}, we overview requisite background on tensors
as well as existing MF-HSR and TF-HSR frameworks.  Then, in
Sec.~\ref{sec:gtfhsr}, we introduce GTF-HSR, analyze it and its
recoverabiity, and compare it analytically to the TF-HSR
strategy. Sec.~\ref{sec:bgsgtfhsr} then introduces the BGS-GTF-HSR
optimization algorithm, while Sec.~\ref{sec:results} presents a body
of experimental results.  Finally, Sec.~\ref{sec:conclusions}
concludes the manuscript.

\section{Background}
\label{sec:background}
In this section, we overview pertinent aspects of tensor mathematics
and the Tucker decomposition, as well as formulations of HSR found in prior
literature. {Also, we briefly introduce AGK.}
\subsection{Tensors and Notation}
\label{sec:notation}
In this paper, a scalar is denoted as $a$, a vector is denoted as
$\mathbf{a}$, a matrix is denoted as $\mathbf{A}$, and an $N$-order
tensor is denoted as $\mathcal{A}$. For two tensors $\mathcal{A}$ and
$\mathcal{B}$, we denote their inner product as
$\left\langle\mathcal{A},\,\mathcal{B}\right\rangle=\sum_{i_1,i_2,\dots,
	i_N}\mathcal{A}_{i_1,i_2,\dots,i_N}\mathcal{B}_{i_1,i_2,\dots,i_N}$
where $\mathcal{A}_{i_1,i_2,\cdots,i_N}$ is the element at location
$(i_1,i_2,\dots, i_N)$ of $\mathcal{A}$. The Frobenius norm of a
tensor is then defined by $\left\Vert\mathcal{A}\right\Vert_F =
\sqrt{\left\langle\mathcal{A},\,\mathcal{A}\right\rangle}$. The
Kronecker product between matrices is denoted by $\otimes$, while
$\mathsf{col}\left\{A\right\}$ and $\mathsf{null}\left\{A\right\}$
denote the column and null spaces of a matrix, respectively. If a
variable is drawn from some absolutely continuous distribution, we
call it ``generic'' after \cite{Del2008}.  For an $N$-order tensor
$\mathcal{A}\in\mathbb{R}^{I_1\times I_2\times\cdots\times I_N}$, its
mode-$n$ unfolding is a matrix denoted by
$\mathbf{A}_{[n]}\in\mathbb{R}^{I_n\times I_1\cdots I_{n-1} 
	I_{n+1} \cdots I_N}.$ The mode-$n$ product (denoted $\times_n$)
between tensor $\mathcal{A}\in\mathbb{R}^{I_1\times
	I_2\times\cdots\times I_N}$ and matrix
$\mathbf{U}\in\mathbb{R}^{J\times I_n}$ is a tensor defined such that
$\left(\mathcal{A}\times_n\mathbf{U}\right)_{[n]}=\mathbf{U}\mathbf{A}_{[n]}.$

In the sequel, we use the notation $\mathtt{reshape}(\cdot)$,
$\mathtt{permute}(\cdot)$, $\mathtt{svd}(\cdot)$, and
$\mathtt{soft}(\cdot)$ to denote tensor operators with the corresponding
functionality of their \textsc{Matlab} namesakes.  Accordingly, we
define the vectorization and unvectorization operators as
$\mathsf{Vec}\left(\cdot\right)=\mathtt{reshape}\left(\cdot,[\,],1\right)$
and
$\mathsf{Unv}\left(\cdot|J,K\right)=\mathtt{reshape}\left(\cdot,J,K\right)$,
respectively.

Finally, we define the Tucker decomposition of an arbitrary $N$-order
tensor $\mathcal{Z}\in\mathbb{R}^{I_1\times I_2\times\cdots\times
	I_N}$ as
\begin{equation}
	\mathcal{Z}=\mathcal{G}\times_1 \mathbf{U}_1\times_2
	\mathbf{U}_2\times _3\cdots\times_N \mathbf{U}_N ,
	\label{eq:td}
\end{equation}
where $\mathcal{G}\in\mathbb{R}^{J_1\times J_2\times\cdots\times J_N}$
is the core tensor, and $\left\{\mathbf{U}_n\in\mathbb{R}^{I_n\times
	J_n}\right\}_{n=1}^N$ are the factor matrices. In line with this
decomposition, the Tucker-rank of $\mathcal{Z}$ is defined in a
multi-rank form as
\begin{equation}
	\operatorname{rank}_{T}\left\{\mathcal{Z}\right\} =
	\left(\operatorname{rank}\left\{\mathbf{Z}_{[1]}\right\},\,
	\operatorname{rank}\left\{\mathbf{Z}_{[2]}\right\},\,\cdots,\,
	\operatorname{rank}\left\{\mathbf{Z}_{[N]}\right\}\right) .
\end{equation}
Note that Tucker decomposition exists if and only if $J_n\geq
\operatorname{rank}\left\{\mathbf{Z}_{[n]}\right\},\,\forall n$. Two
important calculation rules are
\begin{equation}
	\mathbf{Z}_{[n]}=\mathbf{U}_n\mathbf{G}_{[n]}\left(\mathbf{U}_N\otimes\cdots\otimes\mathbf{U}_{n+1}\otimes\mathbf{U}_{n-1}\otimes\cdots\otimes\mathbf{U}_1\right)^T,
\end{equation}
and
\begin{equation}
	\mathsf{Vec}\left(\mathcal{Z}\right)=\left(\mathbf{U}_N\otimes\cdots\otimes\mathbf{U}_1\right)\mathsf{Vec}\left(\mathcal{G}\right).
\end{equation}
The reader is referred to, e.g., \cite{KB2009}, for greater
elaboration on the definitions and notations presented in this
section.
\subsection{Existing Formulations of HSR}
We now describe mathematically the two main frameworks for HSR as
existing in prior literature: MF-HSR and TF-HSR.  These formulations
are built on the assumption---common in the literature---of a
spectral-degradation matrix $\mathbf{R}$ that comprises a
spectral-downsampling process, along with $\mathbf{D}$, a matrix that
encapsulates the hyperspectral sensor's spatial blurring coupled with
the subsampling entailed by the imaging process.
\begin{definition}[MF-HSR]
	\label{Def: Matrix Formulation}
	{\it Given HSI $\mathcal{X}\in \mathbb{R} ^{m_1\times
			m_2\times S}$, MSI $\mathcal{Y}\in \mathbb{R} ^{M_1\times
			M_2\times s}$, spectral-degradation matrix $\mathbf{R}\in
		\mathbb{R}^{s\times S}$, and spatial-degradation matrix
		$\mathbf{D}\in \mathbb{R}^{M_1M_2\times m_1m_2}$, with $m_1<
		M_1$, $m_2< M_2$, and $s< S$, the MF-HSR problem seeks the
		most appropriate SRI $\mathcal{Z}\in \mathbb{R} ^{M_1\times
			M_2\times S}$, such that
		\begin{equation}
			\begin{aligned}
				\mathbf{X}_{[3]} & =\mathbf{Z}_{[3]}\mathbf{D}, \\
				\mathbf{Y}_{[3]} & =\mathbf{R}\mathbf{Z}_{[3]}.
			\end{aligned}
			\label{eq: Matrix Formulation}
	\end{equation}}
\end{definition}

\begin{definition}[TF-HSR]
	{\it Given HSI $\mathcal{X}\in \mathbb{R} ^{m_1\times m_2\times S}$,
		MSI $\mathcal{Y}\in \mathbb{R} ^{M_1\times M_2\times s}$,
		spectral-degradation matrix $\mathbf{R}\in \mathbb{R}^{s\times
			S}$, and spatial-degradation matrices $\mathbf{P}_1\in
		\mathbb{R}^{m_1\times M_1}$ and $\mathbf{P}_2\in
		\mathbb{R}^{m_2\times M_2}$, with $m_1< M_1$, $m_2< M_2$, and $s<
		S$, the TF-HSR problem seeks the most appropriate SRI
		$\mathcal{Z}\in \mathbb{R} ^{M_1\times M_2\times S}$, such that
		\begin{equation}
			\begin{aligned}
				\mathcal{X} & =\mathcal{Z}\times_1\mathbf{P}_1\times_2\mathbf{P}_2, \\
				\mathcal{Y} & =\mathcal{Z}\times_3\mathbf{R}.
			\end{aligned}
			\label{eq: Tensor Formulation}
	\end{equation}}
	\label{Def: Tensor Formulation}
\end{definition}
Additionally, TF-HSR implicitly requires a separable
spatial-degradation operator:
\begin{assumption}
	{\it Suppose $\mathbf{D}\in \mathbb{R}^{M_1M_2\times m_1m_2}$ is the
		spatial-degradation matrix in MF-HSR, then TF-HSR assumes that
		there exist $\mathbf{P}_1\in \mathbb{R}^{m_1\times M_1}$ and
		$\mathbf{P}_2\in \mathbb{R}^{m_2\times M_2}$ such
		that
		\begin{equation}
			\mathbf{D}=(\mathbf{P}_2\otimes \mathbf{P}_1)^T.
		\end{equation}
		\label{asum: Kronecker separable assumption}}
\end{assumption}
\subsection{{AGK}}
\label{sec:AGK}
{Let $\mathbf{\Phi}\in\mathbb{R}^{(2r+1)\times(2r+1)}$ denote an AGK. Then each of its elements, $\mathbf{\Phi}_{i,j}$, is calculated as
	\begin{equation}
		\begin{aligned}
			\mathbf{\Phi}_{i,j}=\frac{1}{2\pi}&\sqrt{\vert\mathbf{\Lambda}\vert}\exp\left(-\frac{1}{2}[i\quad j]\mathbf{\Lambda}[i\quad j]^T\right),\\&i,j\in\left\{-r,\cdots,r\right\}
		\end{aligned}
	\end{equation}
	where $\mathbf{\Lambda}=\begin{bmatrix}
		\cos\theta& -\sin\theta\\
		\sin\theta&\cos\theta
	\end{bmatrix}\begin{bmatrix}
		a& \\
		&b
	\end{bmatrix}\begin{bmatrix}
		\cos\theta& \sin\theta\\
		-\sin\theta&\cos\theta
	\end{bmatrix}$. Thus AGK is determined by three parameters $\theta,a,b.$ To guarantee the positive-definiteness of $\mathbf{\Lambda}$, we require $a,b$ to be positive. We also note that when $a=b$, AGK degrades to IGK.}

\section{A Generalized Tensor Formulation}
\label{sec:gtfhsr}
To propose a generalized formulation for HSR that is more appropriate
when spatial degradations are anisotropic, we first examine the
feasibility of TF-HSR, specifically, the validity of Asm.~\ref{asum:
	Kronecker separable assumption}.  Then, in
Sec.~\ref{sec:gtfhsr_framework}, we present the proposed GTF-HSR
framework that generalizes TF-HSR in order to handle anisotropic
degradation. We close this discussion with an examination of the
potential of GTF-HSR to exactly recover the desired SRI as well as
ramifications of blind HSR on this recovery in
Secs.~\ref{sec:recoverability} and \ref{sec:blind}, respectively.
\subsection{Feasibility of TF-HSR}
\label{sec: Feasibility of TeF}
To begin, we introduce the Kronecker decomposition (KD):
\begin{theorem}[Kronecker Decomposition]
	\textit{For any matrix $\mathbf{W}\in\mathbb{R}^{J_1J_2\times K_1K_2}$, there exist two sets of matrices, $\left\{\mathbf{M}_1^{(r)}\right\}_{r=1}^R\subseteq\mathbb{R}^{J_1\times K_1}$ and $\left\{\mathbf{M}_2^{(r)}\right\}_{r=1}^R\subseteq\mathbb{R}^{J_2\times K_2}$, such that
		\begin{equation}
			\mathbf{W}=\sum_{r=1}^R\mathbf{M}_1^{(r)}\otimes\mathbf{M}_2^{(r)}.
			\label{eq:kd}
	\end{equation}}
	\label{Th: Kronecker Decomposition}
\end{theorem}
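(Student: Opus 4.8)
The plan is to prove Theorem~\ref{Th: Kronecker Decomposition} constructively, via the classical Van Loan--Pitsianis rearrangement, reducing the existence of a Kronecker sum to an ordinary rank decomposition (for instance an SVD) of a single ``rearranged'' matrix. First I would introduce a linear \emph{rearrangement operator} $\mathcal{R}:\mathbb{R}^{J_1J_2\times K_1K_2}\to\mathbb{R}^{J_1K_1\times J_2K_2}$. Viewing $\mathbf{W}$ as a $J_1\times K_1$ array of blocks, each block $\mathbf{W}^{(i,j)}\in\mathbb{R}^{J_2\times K_2}$, the operator $\mathcal{R}$ places $\mathsf{Vec}(\mathbf{W}^{(i,j)})^T$ into the row of $\mathcal{R}(\mathbf{W})$ indexed by the pair $(i,j)$, with the pairs ordered column-major exactly as $\mathsf{Vec}$ orders the entries of a $J_1\times K_1$ matrix; concretely $\mathcal{R}$ can be written as a composition of two $\mathtt{reshape}$ operations around one $\mathtt{permute}$. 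Being such a composition, $\mathcal{R}$ is linear and invertible (it merely permutes entries after reshaping).

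The crucial observation is the rank-one identity: for any $\mathbf{M}_1\in\mathbb{R}^{J_1\times K_1}$ and $\mathbf{M}_2\in\mathbb{R}^{J_2\times K_2}$,
\[
\mathcal{R}\bigl(\mathbf{M}_1\otimes\mathbf{M}_2\bigr)=\mathsf{Vec}(\mathbf{M}_1)\,\mathsf{Vec}(\mathbf{M}_2)^T .
\]
This holds because the $(i,j)$ block of $\mathbf{M}_1\otimes\mathbf{M}_2$ is $(\mathbf{M}_1)_{i,j}\,\mathbf{M}_2$, whose vectorization is $(\mathbf{M}_1)_{i,j}\,\mathsf{Vec}(\mathbf{M}_2)$; stacking these as the rows indexed by $(i,j)$, in the chosen column-major order, is precisely the outer product on the right-hand side. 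Checking this index bookkeeping---that the row ordering chosen for $\mathcal{R}$ is consistent with the column-major convention of $\mathsf{Vec}$---is the one genuinely fiddly step; everything else is immediate.

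Given the identity together with the linearity and invertibility of $\mathcal{R}$, I would finish as follows. Set $R=\operatorname{rank}\{\mathcal{R}(\mathbf{W})\}$ and take any rank decomposition $\mathcal{R}(\mathbf{W})=\sum_{r=1}^R\mathbf{a}_r\mathbf{b}_r^T$ with $\mathbf{a}_r\in\mathbb{R}^{J_1K_1}$ and $\mathbf{b}_r\in\mathbb{R}^{J_2K_2}$ (e.g., from $\mathtt{svd}(\mathcal{R}(\mathbf{W}))$, absorbing the singular values into the $\mathbf{a}_r$). Defining $\mathbf{M}_1^{(r)}=\mathsf{Unv}(\mathbf{a}_r\,|\,J_1,K_1)$ and $\mathbf{M}_2^{(r)}=\mathsf{Unv}(\mathbf{b}_r\,|\,J_2,K_2)$, the identity and linearity give $\mathcal{R}\bigl(\sum_{r=1}^R\mathbf{M}_1^{(r)}\otimes\mathbf{M}_2^{(r)}\bigr)=\sum_{r=1}^R\mathbf{a}_r\mathbf{b}_r^T=\mathcal{R}(\mathbf{W})$, and applying $\mathcal{R}^{-1}$ to both sides yields \eqref{eq:kd}. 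As a by-product, the smallest admissible $R$ equals $\operatorname{rank}\{\mathcal{R}(\mathbf{W})\}\le\min\{J_1K_1,\,J_2K_2\}$, with equality for generic $\mathbf{W}$---a fact worth recording for the later recoverability analysis. The main obstacle, as noted, is not conceptual but notational: pinning down $\mathcal{R}$ so that the block-vectorization identity holds exactly as stated.
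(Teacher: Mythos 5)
Your proof is correct and follows essentially the same route as the paper: the paper establishes Theorem~\ref{Th: Kronecker Decomposition} by invoking the Kronecker-product SVD of \cite[Thm.~12.3.1]{GV2013}, i.e., precisely the Van Loan--Pitsianis rearrangement you construct, and the same rearrangement map appears explicitly in the supplemental material as the invertible linear operator $\mathsf{F}$ (written there via commutation matrices rather than \texttt{reshape}/\texttt{permute}) used to prove Cor.~\ref{coro:Kr-Rank}. Your by-product observation that the minimal $R$ equals $\operatorname{rank}\{\mathcal{R}(\mathbf{W})\}$ is exactly the content of that corollary.
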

We note that Thm.~\ref{Th: Kronecker Decomposition} is a direct
composition of the Kronecker-product SVD described in
\cite[Thm.~12.3.1]{GV2013}.

According to Thm.~\ref{Th: Kronecker Decomposition}, it is clear
that the spatial degradation matrix $\mathbf{D}$ can be decomposed
into the sum of $R$ Kronecker products. The validity of Asm.~\ref{asum: Kronecker separable assumption} then reduces as to whether $R$
can equal 1. In other words, we need to determine the minimum
value of $R$ in the KD of $\mathbf{D}$. To this end, we define
Kronecker rank:
\begin{definition}[Kronecker Rank]
	\textit{The Kronecker rank of matrix
		$\mathbf{W}\in\mathbb{R}^{J_1J_2\times K_1K_2}$, denoted as 
		{$\mathsf{kr}(J_1,K_1)_{\mathbf{W}}$}, is defined as the minimal $R$ such that
		$R$ pairs of matrices generate the KD of $\mathbf{W}$ as in \eqref{eq:kd}.}
	\label{Def: Kronecker Rank}
\end{definition}
The issue then becomes how to determine $\mathsf{kr}_{\mathbf{D}}$\footnote{{For a better presentation, we shorten the notation for the Kronecker rank of $\mathbf{D}\in\mathbb{R}^{M_1M_2\times m_1m_2}$. In the sequel, $\mathsf{kr}_{\mathbf{D}}\triangleq\mathsf{kr}(M_1,m_1)_{\mathbf{D}}$.}}. In
response, we recall that the spatial-degradation matrix $\mathbf{D}$
models the process of blurring and downsampling; thus, the spatial
degradation in \eqref{eq: Matrix Formulation} can be further
elaborated as
\begin{equation}
	\mathbf{X}_{[3]} =\mathbf{Z}_{[3]}\mathbf{D}=(\left({\cal Z}\ast\mathbf{\Phi}\right)_{\downarrow})_{[3]} ,
	\label{eq:Physical Expression}
\end{equation}
where $\mathbf{\Phi}\in\mathbb{R}^{\phi\times\phi}$ is the
spatial-blurring kernel, $\ast$ denotes the periodic 2D convolution,
and the subscript $\downarrow$ is uniform
downsampling.
As such, the degradation matrix $\mathbf{D}$ is endowed
with significant structure and satisfies a wealth of properties. Accordingly,
we have the following proposition.
\begin{proposition}\footnote{{We present its proof assuming the convolution operator in \eqref{eq:Physical Expression} is periodic. However, one can verify the same conclusion holds subject to aperiodic convolutions such as that with zero-padding strategy.}}
	\textit{For the spatial-degradation matrix $\mathbf{D}$ in
		\eqref{eq: Matrix Formulation}, since it is physically modeled as
		\eqref{eq:Physical Expression}, we
		have
		\begin{equation}
			\mathsf{kr}_{\mathbf{D}}=\operatorname{rank}\left\{\mathbf{\Phi}\right\}.
		\end{equation}
	}
	\label{Prop:Kr-Rank}
\end{proposition}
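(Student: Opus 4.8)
The plan is to express the action of $\mathbf{D}$ on $\mathbf{Z}_{[3]}$ entirely in terms of the blurring kernel $\mathbf{\Phi}$ and a fixed (kernel-independent) downsampling matrix, and then to read off the Kronecker rank of $\mathbf{D}$ from the rank of $\mathbf{\Phi}$. First I would fix coordinates: write the periodic 2D convolution $\mathcal{Z}\ast\mathbf{\Phi}$ as a single matrix multiplication acting on $\mathsf{Vec}$ of each spectral slice. Because the convolution is periodic (the footnote promises the aperiodic case is analogous), the convolution matrix is a block-circulant-with-circulant-blocks (BCCB) matrix $\mathbf{C}_{\mathbf{\Phi}}\in\mathbb{R}^{M_1M_2\times M_1M_2}$, and the key structural fact is that $\mathbf{C}_{\mathbf{\Phi}}$ is a \emph{linear} function of the entries of $\mathbf{\Phi}$: it is a sum $\sum_{i,j}\mathbf{\Phi}_{i,j}\,(\mathbf{S}_2^{j}\otimes\mathbf{S}_1^{i})$ of shift operators, where $\mathbf{S}_1,\mathbf{S}_2$ are cyclic-shift matrices in the vertical and horizontal directions. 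Then the downsampling $(\cdot)_{\downarrow}$ is a fixed selection matrix that is itself Kronecker-separable, say $\mathbf{T}_{\downarrow}=\mathbf{T}_2\otimes\mathbf{T}_1$ with $\mathbf{T}_k\in\mathbb{R}^{m_k\times M_k}$ a row-subsampling matrix. Putting these together gives $\mathbf{D}^T=\mathbf{T}_{\downarrow}\mathbf{C}_{\mathbf{\Phi}}$, hence
\begin{equation}
	\mathbf{D}^T=\sum_{i,j}\mathbf{\Phi}_{i,j}\,(\mathbf{T}_2\mathbf{S}_2^{j})\otimes(\mathbf{T}_1\mathbf{S}_1^{i}).
	\label{eq:Dexpansion}
\end{equation}

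Next I would turn this expansion into a statement about Kronecker rank using the standard rearrangement/vectorization isomorphism underlying the Kronecker-product SVD of \cite[Thm.~12.3.1]{GV2013}: a matrix $\mathbf{W}\in\mathbb{R}^{J_1J_2\times K_1K_2}$ has Kronecker rank $\mathsf{kr}(J_1,K_1)_{\mathbf{W}}$ equal to the ordinary rank of its rearrangement $\mathcal{R}(\mathbf{W})\in\mathbb{R}^{J_1K_1\times J_2K_2}$, where $\mathcal{R}$ sends $\mathbf{A}\otimes\mathbf{B}$ to $\mathsf{Vec}(\mathbf{B})\,\mathsf{Vec}(\mathbf{A})^T$. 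Applying $\mathcal{R}$ to \eqref{eq:Dexpansion} gives $\mathcal{R}(\mathbf{D}^T)=\sum_{i,j}\mathbf{\Phi}_{i,j}\,\mathsf{Vec}(\mathbf{T}_1\mathbf{S}_1^{i})\,\mathsf{Vec}(\mathbf{T}_2\mathbf{S}_2^{j})^T$, which factors as $\mathbf{A}\,\mathbf{\Phi}'\,\mathbf{B}^T$ where the columns of $\mathbf{A}$ are the $\mathsf{Vec}(\mathbf{T}_1\mathbf{S}_1^{i})$, the columns of $\mathbf{B}$ are the $\mathsf{Vec}(\mathbf{T}_2\mathbf{S}_2^{j})$, and $\mathbf{\Phi}'$ is $\mathbf{\Phi}$ with its index set relabeled. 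So $\mathsf{kr}_{\mathbf{D}}=\operatorname{rank}\{\mathcal{R}(\mathbf{D}^T)\}=\operatorname{rank}\{\mathbf{A}\mathbf{\Phi}'\mathbf{B}^T\}$, and it remains to show this equals $\operatorname{rank}\{\mathbf{\Phi}\}$.

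For that last step I would argue that $\mathbf{A}$ and $\mathbf{B}$ have full column rank, so that left-multiplication by $\mathbf{A}$ and right-multiplication by $\mathbf{B}^T$ preserve rank: $\operatorname{rank}\{\mathbf{A}\mathbf{\Phi}'\mathbf{B}^T\}=\operatorname{rank}\{\mathbf{\Phi}'\}=\operatorname{rank}\{\mathbf{\Phi}\}$. The full-column-rank claim is the technical heart of the argument. The columns of $\mathbf{A}$ are the vectorizations of the matrices $\mathbf{T}_1\mathbf{S}_1^{i}$ for distinct shifts $i$; since $\mathbf{S}_1$ is a cyclic-shift matrix, the $\mathbf{S}_1^{i}$ are linearly independent as long as the support $\{-r,\dots,r\}$ of the kernel is no larger than $M_1$ (which holds since $\phi\le M_1$), and left-multiplying all of them by the fixed subsampling matrix $\mathbf{T}_1$ preserves independence precisely because $\mathbf{T}_1$ selects a regular subgrid and the shifts are within one period — this is where one uses that downsampling is \emph{uniform}. (One can make this concrete by noting that $\mathbf{T}_1\mathbf{S}_1^i$ has a nonzero in a position that no other $\mathbf{T}_1\mathbf{S}_1^{i'}$ touches, or by a short Fourier/Vandermonde argument on the circulant structure.) The same reasoning applies to $\mathbf{B}$ in the horizontal direction. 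I expect this independence-after-subsampling step to be the main obstacle: it is where the periodicity assumption, the bound $\phi\le\min(M_1,M_2)$, and the uniformity of downsampling all enter, and a careless argument could silently assume, e.g., that the decimation factor divides $M_k$ or that the kernel is centered. Once full column rank of $\mathbf{A}$ and $\mathbf{B}$ is established, the chain $\mathsf{kr}_{\mathbf{D}}=\operatorname{rank}\{\mathcal{R}(\mathbf{D}^T)\}=\operatorname{rank}\{\mathbf{\Phi}\}$ closes the proof; the aperiodic/zero-padded case follows by replacing the circulant shifts with truncated Toeplitz shifts, which only strengthens the independence.
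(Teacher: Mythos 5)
Your proposal is correct, and it rests on the same two pillars as the paper's proof: the rearrangement characterization of Kronecker rank (your map $\mathcal{R}$ is exactly the paper's Lemma and Corollary built from the operator $\mathsf{F}^{-1}$, i.e.\ the Van Loan--Pitsianis rearrangement), and the fact that periodic convolution followed by uniform subsampling is a linear combination of Kronecker products of subsampled shift matrices. Where you differ is in how the rank equality is closed. The paper first takes a rank factorization $\mathbf{\Phi}=\sum_{r=1}^{R}\mathbf{u}_r\mathbf{v}_r^T$ and constructs explicit factors $\mathbf{P}_1^{(r)}=\mathbf{S}_1\mathbf{T}^{\mathbf{u}_r}$, $\mathbf{P}_2^{(r)}=\mathbf{S}_2\mathbf{T}^{\mathbf{v}_r}$ (circulant blur times selection), which gives $\mathsf{kr}_{\mathbf{D}}\leq\operatorname{rank}\{\mathbf{\Phi}\}$ constructively and, importantly, produces the very matrices reused throughout the GTF-HSR framework; it then evaluates $\operatorname{rank}\{\mathsf{F}^{-1}(\mathbf{D})\}$ by observing that the stacked outer-product blocks are shifted copies of one another, so the rank collapses to that of $\sum_r\mathbf{v}_r\mathbf{u}_r^T=\mathbf{\Phi}^T$. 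You instead keep the tap-wise expansion $\mathbf{D}^T=\sum_{i,j}\mathbf{\Phi}_{i,j}\,(\mathbf{T}_2\mathbf{S}_2^{j})\otimes(\mathbf{T}_1\mathbf{S}_1^{i})$ and factor the rearranged matrix as $\mathbf{A}\mathbf{\Phi}'\mathbf{B}^T$, proving $\mathbf{A}$ and $\mathbf{B}$ have full column rank; this yields both inequalities in one stroke and is arguably the cleaner rank argument. Your full-column-rank step is sound: for $i\not\equiv i'\pmod{M_1}$ the selection-times-shift matrices $\mathbf{T}_1\mathbf{S}_1^{i}$ and $\mathbf{T}_1\mathbf{S}_1^{i'}$ have disjoint supports (row $k$ of each has its single nonzero in a different column), so linear independence needs only $\phi\leq\min(M_1,M_2)$ and not that the decimation factor divide $M_k$ or that the kernel be centered --- the caveats you raise are harmless. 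The trade-off is that your route does not by itself exhibit the explicit pairs $\left\{\mathbf{P}_1^{(r)},\mathbf{P}_2^{(r)}\right\}$ realizing the minimal decomposition, which the paper's construction provides and later exploits; conversely, your sandwich argument makes the lower bound $\mathsf{kr}_{\mathbf{D}}\geq\operatorname{rank}\{\mathbf{\Phi}\}$ more transparent than the paper's block-repetition reduction.
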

The proof of Prop.~\ref{Prop:Kr-Rank} can be found in
the supplemental material.
We conclude from Prop.~\ref{Prop:Kr-Rank}
that Asm.~\ref{asum: Kronecker separable assumption} holds true
if and only if $\operatorname{rank}\left\{\mathbf{\Phi}\right\}=1$,
i.e., the feasibility of TF-HSR rests solely on the rank of
the blurring kernel.

To gauge the likelihood of having unity
$\operatorname{rank}\left\{\mathbf{\Phi}\right\}$, we consider
Fig.~\ref{fig:KernelSVD} which visualizes the distribution of the
singular values of different blurring kernels.  We note that the
second singular values for the isotropic-Gaussian and average kernels drop
sharply to zero, meaning these kernels are rank-1, thereby supporting
the the use of TF-HSR with them. However, for the more complicated
AGK, the curve descends much more slowly, thus
$\operatorname{rank}\left\{\mathbf{\Phi}\right\}>1$, and it is no
longer reasonable to apply TF-HSR when such anisotropic kernels are in
effect. Indeed, Sec.~\ref{sec:results}
shows empirically that the fusion performance of current
TF-HSR-based approaches deteriorates significantly under AGK
blurring. Thus, in the next section, we reformulate TF-HSR to
accommodate more general blurring processes.

\begin{figure}[t]
	\centering
	\setlength{\tabcolsep}{0.15mm}
	\begin{tabular}{cm{0.9\figurewidth}}\rotatebox[origin=c]{90}{\footnotesize{Magnitude of singular values}}&
		\includegraphics[width=0.9\figurewidth]{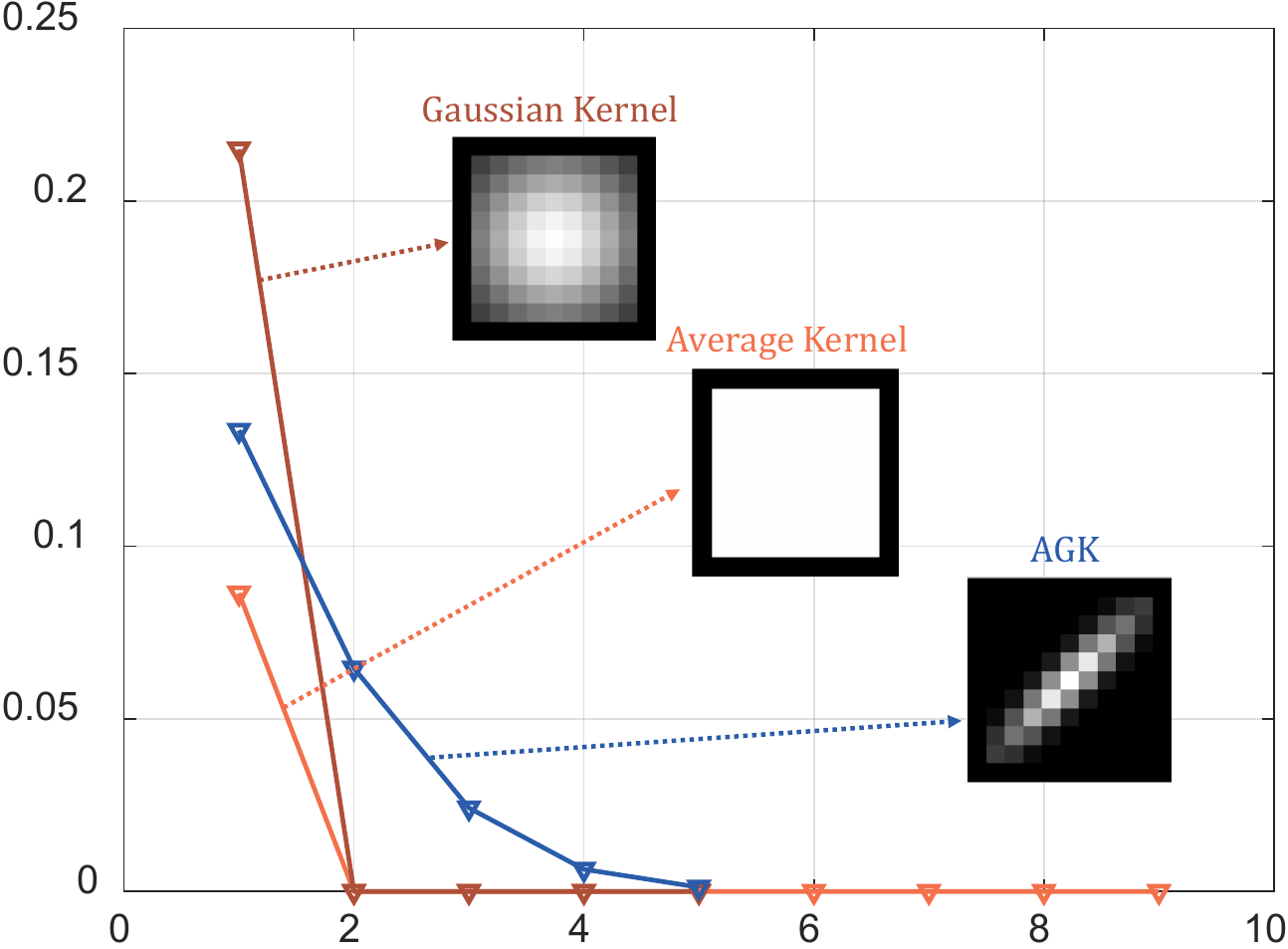}\\\quad&\multicolumn{1}{c}{\footnotesize{Number of singular values}}
	\end{tabular}
	\caption{\label{fig:KernelSVD} {The distribution singular values for
			some representative $9\times9$ blurring kernels.}}
\end{figure}

\subsection{The Proposed GTF-HSR Framework}
\label{sec:gtfhsr_framework}
Thm.~\ref{Th: Kronecker Decomposition} and
Proposition~\ref{Prop:Kr-Rank} have proved that the $\mathbf{P}_1$ and
$\mathbf{P}_2$ in Asm.~\ref{asum: Kronecker separable
	assumption} do not necessarily exist. What exists instead is the KD
of spatial-degradation matrix $\mathbf{D}\in \mathbb{R}^{M_1M_2\times
	m_1m_2}$, which means that there exist collections
$\left\{\mathbf{P}_1^{(r)}\right\}_{r=1}^{\mathsf{kr}_{\mathbf{D}}}\subseteq\mathbb{R}^{m_1\times
	M_1}$ and
$\left\{\mathbf{P}_2^{(r)}\right\}_{r=1}^{\mathsf{kr}_{\mathbf{D}}}\subseteq\mathbb{R}^{m_2\times
	M_2}$ such that
\begin{equation}
	\mathbf{D}=\sum_{r = 1}^{\mathsf{kr}_{\mathbf{D}}}  (\mathbf{P}_2^{(r)}\otimes \mathbf{P}_1^{(r)})^T.
	\label{eq:generalizedD}
\end{equation}
Incorporating this expansion into \eqref{eq: Matrix Formulation}, we have
\begin{equation}
	\begin{aligned}
		\mathbf{X}_{[3]}& =\mathbf{Z}_{[3]}\mathbf{D}\\
		&=\mathbf{Z}_{[3]}\sum_{r = 1}^{\mathsf{kr}_{\mathbf{D}}}  (\mathbf{P}_2^{(r)}\otimes \mathbf{P}_1^{(r)})^T\\
		&=\sum_{r = 1}^{\mathsf{kr}_{\mathbf{D}}}\mathbf{Z}_{[3]}  (\mathbf{P}_2^{(r)}\otimes \mathbf{P}_1^{(r)})^T\\
		&=\sum_{r = 1}^{\mathsf{kr}_{\mathbf{D}}}\left(\mathcal{Z}\times_1\mathbf{P}_1^{(r)}\times_2\mathbf{P}_2^{(r)}\right)_{[3]}.
	\end{aligned}
\end{equation}
Thus, we formulate a generalized version of TF-HSR,
which we call GTF-HSR:
\begin{definition}[GTF-HSR]
	{\it Given HSI $\mathcal{X}\in \mathbb{R} ^{m_1\times m_2\times S}$,
		MSI $\mathcal{Y}\in \mathbb{R} ^{M_1\times M_2\times s}$,
		spectral-degradation matrix $\mathbf{R}\in \mathbb{R}^{s\times S}$,
		and spatial-degradation matrices
		$\left\{\mathbf{P}_1^{(r)}\right\}_{r=1}^{\mathsf{kr}_{\mathbf{D}}}\subseteq\mathbb{R}^{m_1\times
			M_1}$ and
		$\left\{\mathbf{P}_2^{(r)}\right\}_{r=1}^{\mathsf{kr}_{\mathbf{D}}}\subseteq\mathbb{R}^{m_2\times
			M_2}$ with $m_1< M_1$, $m_2< M_2$, and $s< S$, the GTF-HSR problem
		seeks the most appropriate SRI $\mathcal{Z}\in \mathbb{R}
		^{M_1\times M_2\times S}$ such that
		\begin{equation}
			\begin{aligned}
				\mathcal{X} & =\sum_{r = 1}^{\mathsf{kr}_{\mathbf{D}}}  \mathcal{Z}\times_1\mathbf{P}_1^{(r)}\times_2\mathbf{P}_2^{(r)}, \\
				\mathcal{Y} & =\mathcal{Z}\times_3\mathbf{R}
			\end{aligned}
			\label{eq:Reformulation}
	\end{equation}}
	\label{Def: Tensor ReFormulation}
\end{definition}
The proposed GTF-HSR differs from the existing TF-HSR approaches in
that it extends the modeling of spatial degradation from SRI to HSI
into a summation form. Sec.~\ref{sec: Feasibility of TeF} guarantees
that such extension enables the GTF-HSR to accurately capture the real
spatial-degradation process. Strictly speaking, GTF-HSR is equivalent
to MF-HSR in the sense that any degradation process modeled by
\eqref{eq: Matrix Formulation} can also be modeled by
\eqref{eq:Reformulation}, and vice versa. However, such equivalence is
not guaranteed between the TF-HSR and MF-HSR due to the failure of
Asm.~\ref{asum: Kronecker separable assumption} to hold for general
spatial-blurring of rank greater than unity.
\subsection{Recoverability in GTF-HSR}
\label{sec:recoverability}
Because of the ill-posedness of HSR, a solution satisfying \eqref{eq:
	Matrix Formulation}, \eqref{eq: Tensor Formulation}, or the proposed
\eqref{eq:Reformulation} does not necessarily recover the desired
original SRI $\mathcal{Z}$. Thus, recoverability---i.e., the
conditions for the solution to MF-HSR, TF-HSR, or GTF-HSR to surely
obtain the ground-truth SRI $\mathcal{Z}$---plays a pivotal role in
HSR.  Consequently, although MF-HSR has been quite successful from an
algorithmic perspective, one of the key motivations for proposing
GTF-HSR is that the algebraic properties of tensors facilitate the
establishing of recoverability conditions.  That said, previous
analyses (e.g., \cite{KFS2018,PUC2020,DFH2021}) establishing
recoverability of TF-HSR no longer apply due to the generalization of
\eqref{asum: Kronecker separable assumption} as
\eqref{eq:generalizedD}.  Thus, we present a new recoverability
analysis tailored to the proposed GTF-HSR based on the block-term
decomposition outlined in \cite{Del2008}:
\begin{theorem}
	\textit{Suppose the SRI $\mathcal{Z}\in\mathbb{R}^{M_1\times
			M_2\times S}$, HSI $\mathcal{X}\in \mathbb{R} ^{m_1\times
			m_2\times S}$, and MSI $\mathcal{Y}\in \mathbb{R} ^{M_1\times
			M_2\times s}$ satisfy relationship \eqref{eq: Matrix
			Formulation}. Suppose further that the Tucker decomposition of
		$\mathcal{Z}$ and the KD of spatial-degradation matrix
		$\mathbf{D}\in\mathbb{R}^{M_1M_2\times m_1m_2}$ are 
		\begin{align}
			\mathcal{Z}&=\mathcal{G}\times_1 \mathbf{U}_1\times_2 \mathbf{U}_2\times_3 \mathbf{U}_3,\\
			\mathbf{D}&=\sum_{r = 1}^{\mathsf{kr}_{\mathbf{D}}}  (\mathbf{P}_2^{(r)}\otimes \mathbf{P}_1^{(r)})^T ,
		\end{align}
		where $\mathcal{G}\in \mathbb{R}^{L_1\times L_2\times C}$ is drawn from an absolutely continuous distribution; and
		$\mathbf{U}_1$, $\mathbf{U}_2$, and $\mathbf{U}_3$ have
		full column rank. Then, if it is true that
		\begin{equation}
			\begin{aligned}
				&L_1\leq L_2C, \,L_2\leq L_1C,\, S\geq 3\\
				& \operatorname{rank}\left\{\left[\mathbf{P}_1^{(1)}\mathbf{U}_1,\cdots,\mathbf{P}_1^{(\mathsf{kr}_{\mathbf{D}})}\mathbf{U}_1\right]\right\}=L_1\mathsf{kr}_{\mathbf{D}},\\
				& \operatorname{rank}\left\{\left[\mathbf{P}_2^{(1)}\mathbf{U}_2,\cdots,\mathbf{P}_2^{(\mathsf{kr}_{\mathbf{D}})}\mathbf{U}_2\right]\right\}=L_2\mathsf{kr}_{\mathbf{D}},\\
				& \operatorname{rank}\left\{\mathbf{Y}_{[1]}\right\}=L_1,                                                                                                                                                                                                       \\
				& \operatorname{rank}\left\{\mathbf{Y}_{[2]}\right\}=L_2,
			\end{aligned}
			\label{eq:Rank Conditions 2}
		\end{equation}
		any solution of Tucker-rank at most $(L_1,L_2,S)$ to GTF-HSR
		recovers SRI $\mathcal{Z}$ with probability 1.}
	\label{Th:Recovery Guarantee}
\end{theorem}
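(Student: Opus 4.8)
The plan is to prove the stronger statement that \eqref{eq:Reformulation} admits at most one solution whose Tucker-rank does not exceed $(L_1,L_2,S)$; since $\mathcal Z$ itself has Tucker-rank $(L_1,L_2,C)$ with $C\le S$ and satisfies \eqref{eq:Reformulation} (by the equivalence with MF-HSR established in Sec.~\ref{sec:gtfhsr_framework}), this uniqueness is exactly recoverability. So I would fix an arbitrary solution $\tilde{\mathcal Z}=\tilde{\mathcal G}\times_1\tilde{\mathbf U}_1\times_2\tilde{\mathbf U}_2\times_3\tilde{\mathbf U}_3$ of \eqref{eq:Reformulation} with mode-wise ranks at most $(L_1,L_2,S)$, and show $\tilde{\mathcal Z}=\mathcal Z$.

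First I would extract the first two factor subspaces from the MSI relation $\mathcal Y=\mathcal Z\times_3\mathbf R=\tilde{\mathcal Z}\times_3\mathbf R$. Its mode-$1$ unfolding is $\mathbf Y_{[1]}=\mathbf U_1\mathbf G_{[1]}((\mathbf R\mathbf U_3)\otimes\mathbf U_2)^{T}=\tilde{\mathbf U}_1\tilde{\mathbf G}_{[1]}((\mathbf R\tilde{\mathbf U}_3)\otimes\tilde{\mathbf U}_2)^{T}$, hence $\mathsf{col}\{\mathbf Y_{[1]}\}\subseteq\mathsf{col}\{\mathbf U_1\}\cap\mathsf{col}\{\tilde{\mathbf U}_1\}$; since $\operatorname{rank}\{\mathbf Y_{[1]}\}=L_1$ by hypothesis, $\mathbf U_1$ has exactly $L_1$ columns, and $\tilde{\mathbf U}_1$ at most $L_1$, both inclusions collapse to equalities (and $\tilde{\mathbf U}_1$ has exactly $L_1$ columns), so $\mathsf{col}\{\tilde{\mathbf U}_1\}=\mathsf{col}\{\mathbf U_1\}$. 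The identical argument in mode $2$, using $\operatorname{rank}\{\mathbf Y_{[2]}\}=L_2$, gives $\mathsf{col}\{\tilde{\mathbf U}_2\}=\mathsf{col}\{\mathbf U_2\}$. Folding the resulting invertible changes of basis (and $\tilde{\mathbf U}_3$) into the cores, I may then take $\tilde{\mathbf U}_1=\mathbf U_1$, $\tilde{\mathbf U}_2=\mathbf U_2$ and write $\mathcal Z=\mathcal W\times_1\mathbf U_1\times_2\mathbf U_2$, $\tilde{\mathcal Z}=\tilde{\mathcal W}\times_1\mathbf U_1\times_2\mathbf U_2$ with $\mathcal W=\mathcal G\times_3\mathbf U_3$ and $\tilde{\mathcal W}\in\mathbb R^{L_1\times L_2\times S}$. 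Because $\mathbf U_1,\mathbf U_2$ have full column rank, it then suffices to show the residual $\mathcal E:=\mathcal W-\tilde{\mathcal W}$ is zero.

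Next I would invoke the HSI relation, which is where the block-term structure induced by the Kronecker decomposition \eqref{eq:generalizedD} (in the sense of \cite{Del2008}) enters. Substituting the reduced forms into the two copies of $\mathcal X=\sum_r\mathcal Z\times_1\mathbf P_1^{(r)}\times_2\mathbf P_2^{(r)}$ and subtracting yields $\sum_{r=1}^{\mathsf{kr}_{\mathbf D}}\mathcal E\times_1\mathbf A_r\times_2\mathbf B_r=\mathbf{0}$, where $\mathbf A_r:=\mathbf P_1^{(r)}\mathbf U_1$ and $\mathbf B_r:=\mathbf P_2^{(r)}\mathbf U_2$. Taking mode-$1$ unfoldings and collecting terms, this reads $[\mathbf A_1,\dots,\mathbf A_{\mathsf{kr}_{\mathbf D}}]\mathbf L=\mathbf{0}$, where $\mathbf L$ vertically stacks the blocks $\mathbf E_{[1]}(\mathbf I_S\otimes\mathbf B_r)^{T}$, $r=1,\dots,\mathsf{kr}_{\mathbf D}$. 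The hypothesis $\operatorname{rank}\{[\mathbf P_1^{(1)}\mathbf U_1,\dots,\mathbf P_1^{(\mathsf{kr}_{\mathbf D})}\mathbf U_1]\}=L_1\mathsf{kr}_{\mathbf D}$ says precisely that $[\mathbf A_1,\dots,\mathbf A_{\mathsf{kr}_{\mathbf D}}]$ has full column rank and is therefore injective, so $\mathbf L=\mathbf{0}$, i.e., $\mathbf E_{[1]}(\mathbf I_S\otimes\mathbf B_r)^{T}=\mathbf{0}$ for every $r$. Finally, the hypothesis $\operatorname{rank}\{[\mathbf P_2^{(1)}\mathbf U_2,\dots,\mathbf P_2^{(\mathsf{kr}_{\mathbf D})}\mathbf U_2]\}=L_2\mathsf{kr}_{\mathbf D}$ forces every block $\mathbf B_r$ to have full column rank $L_2$, so $(\mathbf I_S\otimes\mathbf B_r)^{T}$ has full row rank and a right inverse; right-multiplying $\mathbf E_{[1]}(\mathbf I_S\otimes\mathbf B_r)^{T}=\mathbf{0}$ by it gives $\mathbf E_{[1]}=\mathbf{0}$, hence $\mathcal E=\mathbf{0}$ and $\tilde{\mathcal Z}=\mathcal Z$.

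The hard part, I expect, is the first stage: ensuring that the spectral observation genuinely exposes $\mathsf{col}\{\mathbf U_1\}$ and $\mathsf{col}\{\mathbf U_2\}$, i.e., that $\operatorname{rank}\{\mathbf Y_{[1]}\}=L_1$ and $\operatorname{rank}\{\mathbf Y_{[2]}\}=L_2$ actually hold. This is what the generic-core hypothesis buys, in combination with $L_1\le L_2C$, $L_2\le L_1C$, $S\ge3$ (and $\mathbf R\mathbf U_3$ having full column rank): for $\mathcal G$ drawn from an absolutely continuous distribution, these size constraints make $\mathbf G_{[1]},\mathbf G_{[2]}$ attain their maximal ranks $L_1,L_2$ with probability one, which propagates to the ranks of $\mathbf Y_{[1]},\mathbf Y_{[2]}$ and, more broadly, makes the block-term decomposition of $\mathcal X$ essentially unique in the sense of \cite{Del2008}; this is also the source of the ``probability one'' in the conclusion. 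Everything else amounts to pushing the summed multilinear products through the mode-$n$ unfolding and Kronecker rules recalled in Sec.~\ref{sec:notation}, which is routine.
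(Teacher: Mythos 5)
Your argument is correct, and it reaches the conclusion by a genuinely different and more elementary route than the paper. The paper's proof works through essential uniqueness of the block-term-type decomposition of the HSI, $\mathcal{X}=\sum_r\mathcal{G}\times_1\mathbf{P}_1^{(r)}\mathbf{U}_1\times_2\mathbf{P}_2^{(r)}\mathbf{U}_2\times_3\mathbf{U}_3$ (Lemma~\ref{Le: Essential Uniqueness}, a variant of \cite{Del2008}): it first identifies the factors of any low-Tucker-rank solution up to nonsingular transforms $\mathbf{\Psi}_1^{(r)},\mathbf{\Psi}_2^{(r)},\mathbf{\Psi}_3$, then uses the MSI rank conditions to show the spatial transforms coincide with a common $\mathbf{Q}_1,\mathbf{Q}_2$, and finally cancels everything. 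That is where the genericity of $\mathcal{G}$, the size conditions $L_1\le L_2C$, $L_2\le L_1C$, $S\ge 3$, the full column rank of $\mathbf{U}_3$, and the ``probability 1'' qualifier all enter. You instead use the MSI only to pin the mode-1 and mode-2 column spaces of any admissible solution to $\mathsf{col}\{\mathbf{U}_1\}$ and $\mathsf{col}\{\mathbf{U}_2\}$ (which the hypotheses $\operatorname{rank}\{\mathbf{Y}_{[1]}\}=L_1$, $\operatorname{rank}\{\mathbf{Y}_{[2]}\}=L_2$ give deterministically), then subtract the two HSI fits and observe that the map $\mathcal{E}\mapsto\sum_r\mathcal{E}\times_1\mathbf{P}_1^{(r)}\mathbf{U}_1\times_2\mathbf{P}_2^{(r)}\mathbf{U}_2$ is injective because the stacked matrix $\left[\mathbf{P}_1^{(1)}\mathbf{U}_1,\dots,\mathbf{P}_1^{(\mathsf{kr}_{\mathbf{D}})}\mathbf{U}_1\right]$ has full column rank and each $\mathbf{P}_2^{(r)}\mathbf{U}_2$ does too; the unfolding/Kronecker bookkeeping and the right-inverse step are all sound. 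What your route buys is a strictly stronger, purely deterministic statement: since the $\mathbf{Y}$-rank conditions are hypotheses of Thm.~\ref{Th:Recovery Guarantee}, you never use the genericity of $\mathcal{G}$, the conditions $L_1\le L_2C$, $L_2\le L_1C$, $S\ge3$, or $\mathbf{U}_3$ at all, so ``with probability 1'' can be dropped; your closing paragraph slightly misattributes the role of genericity (it is not needed to secure the $\mathbf{Y}$-rank conditions in your argument, since they are assumed). What the paper's heavier route buys is a characterization of \emph{all} decompositions of $\mathcal{X}$ of that form, which it then reuses essentially verbatim to prove Cor.~\ref{Cro:TeF Inability} (the probability-0 failure of TF-HSR), something your uniqueness-of-$\mathcal{Z}$ argument does not by itself provide.
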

A proof of Thm.~\ref{Th:Recovery Guarantee} built primarily on the
properties of the block-term decomposition described in \cite{Del2008}
can be found in the supplemental material. We note that, when
$\mathsf{kr}_{\mathbf{D}}=1$, this theorem addresses the special case
of the recoverability of TF-HSR. The main suppositions in this case
include the rank constraints on the spatial dimensions of both the SRI
and HSI images, as well as the genericity of the Tucker factors, which
are common conditions that have been considered before
\cite{KFS2018,PUC2020,DFH2021}.  Thus, from this perspective,
Thm.~\ref{Th:Recovery Guarantee} does not rely on unrealistic
assumptions.  More importantly, however, when
$\mathsf{kr}_{\mathbf{D}}>1$, Thm.~\ref{Th:Recovery Guarantee}
addresses the recoverability of GTF-HSR, which is unprecedented in the
literature.

While Thm.~\ref{Th:Recovery Guarantee} guarantees that the proposed
GTF-HSR can exactly recover SRI $\mathcal{Z}$, it can be shown that
TF-HSR cannot recover $\mathcal{Z}$ under the same conditions.
That is, although Sec.~\ref{sec: Feasibility of TeF} has already established
that, when $\mathsf{kr}_{\mathbf{D}}>1$, it is impossible for
Asm.~\ref{asum: Kronecker separable assumption} to hold,
since HSR is a severely ill-posed problem, one might think that,
as long as
\begin{equation}
	\mathsf{col}\left\{\mathbf{D}- (\mathbf{P}_2\otimes \mathbf{P}_1)^T\right\}\subseteq\mathsf{null}\left\{\mathbf{Z}_{[3]}\right\} ,
	\label{eq:Estimation Criterion}
\end{equation}
there might still be hope of recovering $\mathcal{Z}$ via TF-HSR.  In
other words, from the perspective of fusion performance, it might not
be strictly necessary to precisely model the real spatial-degradation
process. However, the following corollary, a simple
consequence of Thm.~\ref{Th:Recovery Guarantee}, indicates that this
is not the case.
\begin{corollary}
	\textit{Under the conditions of Thm.~\ref{Th:Recovery Guarantee},
		if it is true that
		\begin{equation}
			\begin{aligned}
				&L_1\leq L_2C, \,L_2\leq L_1C,\, S\geq 3\\
				& \operatorname{rank}\left\{\left[\mathbf{P}_1^{(1)}\mathbf{U}_1,\cdots,\mathbf{P}_1^{(\mathsf{kr}_{\mathbf{D}})}\mathbf{U}_1\right]\right\}=L_1\mathsf{kr}_{\mathbf{D}},\\
				& \operatorname{rank}\left\{\left[\mathbf{P}_2^{(1)}\mathbf{U}_2,\cdots,\mathbf{P}_2^{(\mathsf{kr}_{\mathbf{D}})}\mathbf{U}_2\right]\right\}=L_2\mathsf{kr}_{\mathbf{D}},
			\end{aligned}
		\end{equation}
		then any solution to TF-HSR recovers SRI $\mathcal{Z}$ with
		probability 0 when $\mathsf{kr}_{\mathbf{D}}>1$.}
	\label{Cro:TeF Inability}
\end{corollary}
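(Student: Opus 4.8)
The plan is to show that, conditioned on the hypotheses of Thm.~\ref{Th:Recovery Guarantee}, whenever $\mathsf{kr}_{\mathbf{D}}>1$ the ground-truth $\mathcal{Z}$ fails, with probability one over the generic core $\mathcal{G}$, to even satisfy the TF-HSR equations \eqref{eq: Tensor Formulation}, so that TF-HSR cannot recover it. The first step is to reduce the event ``$\mathcal{Z}$ solves TF-HSR for some $\mathbf{P}_1,\mathbf{P}_2$'' to the single identity \eqref{eq:Estimation Criterion}: if $\mathcal{Z}$ satisfies \eqref{eq: Tensor Formulation}, then the spectral equation $\mathcal{Y}=\mathcal{Z}\times_3\mathbf{R}$ holds automatically by \eqref{eq: Matrix Formulation}, while the spatial equation reads $\mathbf{X}_{[3]}=\mathbf{Z}_{[3]}(\mathbf{P}_2\otimes\mathbf{P}_1)^T$; subtracting the true relation $\mathbf{X}_{[3]}=\mathbf{Z}_{[3]}\mathbf{D}$ yields $\mathbf{Z}_{[3]}\bigl(\mathbf{D}-(\mathbf{P}_2\otimes\mathbf{P}_1)^T\bigr)=0$, which is exactly \eqref{eq:Estimation Criterion}. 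It therefore suffices to prove that \eqref{eq:Estimation Criterion} holds for \emph{no} choice of $\mathbf{P}_1,\mathbf{P}_2$, almost surely.

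I would then isolate the randomness using $\mathbf{Z}_{[3]}=\mathbf{U}_3\mathbf{G}_{[3]}(\mathbf{U}_2\otimes\mathbf{U}_1)^T$: since $\mathbf{U}_3$ has full column rank, \eqref{eq:Estimation Criterion} is equivalent to $\mathbf{G}_{[3]}\,\mathbf{F}_{\mathbf{P}}=0$, where $\mathbf{F}_{\mathbf{P}}:=(\mathbf{U}_2\otimes\mathbf{U}_1)^T\bigl(\mathbf{D}-(\mathbf{P}_2\otimes\mathbf{P}_1)^T\bigr)$. The key structural fact is that $\mathbf{F}_{\mathbf{P}}\neq 0$ for \emph{every} admissible pair. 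Indeed, the mixed-product rule and the KD \eqref{eq:generalizedD} give $(\mathbf{U}_2\otimes\mathbf{U}_1)^T\mathbf{D}=\sum_{r=1}^{\mathsf{kr}_{\mathbf{D}}}(\mathbf{P}_2^{(r)}\mathbf{U}_2)^T\otimes(\mathbf{P}_1^{(r)}\mathbf{U}_1)^T$, and the two rank hypotheses force the matrix families $\{\mathbf{P}_1^{(r)}\mathbf{U}_1\}_{r}$ and $\{\mathbf{P}_2^{(r)}\mathbf{U}_2\}_{r}$ to each be linearly independent; the Kronecker-product SVD underlying Thm.~\ref{Th: Kronecker Decomposition} then shows $(\mathbf{U}_2\otimes\mathbf{U}_1)^T\mathbf{D}$ has Kronecker rank exactly $\mathsf{kr}_{\mathbf{D}}>1$, so it cannot equal any single Kronecker product $(\mathbf{P}_2\mathbf{U}_2)^T\otimes(\mathbf{P}_1\mathbf{U}_1)^T$, whence $\mathbf{F}_{\mathbf{P}}\neq0$. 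For a fixed pair $\mathbf{P}_1,\mathbf{P}_2$ this already suffices: a nonzero column of $\mathbf{F}_{\mathbf{P}}$ imposes a nontrivial linear constraint on each row of $\mathbf{G}_{[3]}$, so $\{\mathcal{G}:\mathbf{G}_{[3]}\mathbf{F}_{\mathbf{P}}=0\}$ is a proper linear subspace of the core-tensor space and hence Lebesgue-null, and absolute continuity of $\mathcal{G}$ makes that event probability zero.

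The main obstacle is promoting this to an \emph{arbitrary}, possibly $\mathcal{Z}$-dependent, choice of $\mathbf{P}_1,\mathbf{P}_2$ --- the discussion preceding the corollary explicitly wants to exclude the ``clever'' choice that best enforces \eqref{eq:Estimation Criterion} --- since $\bigcup_{\mathbf{P}_1,\mathbf{P}_2}\{\mathcal{G}:\mathbf{G}_{[3]}\mathbf{F}_{\mathbf{P}}=0\}$ is an uncountable union of null sets. I would handle this by noting that, as $\mathbf{F}_{\mathbf{P}}$ depends only on $(\mathbf{P}_1\mathbf{U}_1,\mathbf{P}_2\mathbf{U}_2)$, which ranges over an $(m_1L_1+m_2L_2)$-dimensional space, the incidence set $\{(\mathcal{G},\mathbf{P}_1,\mathbf{P}_2):\mathbf{G}_{[3]}\mathbf{F}_{\mathbf{P}}=0\}$ is a real algebraic set whose projection onto the $\mathcal{G}$-coordinates is semialgebraic (Tarski--Seidenberg); a fiber-dimension count --- using that each fiber is one of the proper subspaces above and that the Kronecker-product (Segre) variety traced by $(\mathbf{P}_2\mathbf{U}_2)^T\otimes(\mathbf{P}_1\mathbf{U}_1)^T$ is low-dimensional and, by the previous paragraph, disjoint from $(\mathbf{U}_2\otimes\mathbf{U}_1)^T\mathbf{D}$ --- should place that projection in a proper subvariety, hence a null set. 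Exhibiting a single admissible core avoiding all the subspaces --- immediate when $C\ge L_1L_2$ by taking $\mathbf{G}_{[3]}$ of full column rank --- anchors the easy regime; I expect the remaining bookkeeping to be the delicate part.
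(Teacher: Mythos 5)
Your first three steps are sound: reducing ``the true $\mathcal{Z}$ solves TF-HSR'' to \eqref{eq:Estimation Criterion}, rewriting it as $\mathbf{G}_{[3]}\mathbf{F}_{\mathbf{P}}=0$ with $\mathbf{F}_{\mathbf{P}}=(\mathbf{U}_2\otimes\mathbf{U}_1)^T\bigl(\mathbf{D}-(\mathbf{P}_2\otimes\mathbf{P}_1)^T\bigr)$, showing via the mixed-product rule and the two rank hypotheses that $\mathbf{F}_{\mathbf{P}}\neq 0$ for every pair, and concluding probability zero for each \emph{fixed} pair. The genuine gap is the step you defer to ``bookkeeping'': promoting this to all pairs simultaneously. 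The incidence set is algebraic and its projection onto the $\mathcal{G}$-coordinates is semialgebraic, but to get probability zero you must show its dimension is strictly below $L_1L_2C$, and the fiber-dimension count you sketch does not deliver this. Each fiber $\{\mathcal{G}:\mathbf{G}_{[3]}\mathbf{F}_{\mathbf{P}}=0\}$ has codimension $C\cdot\operatorname{rank}\{\mathbf{F}_{\mathbf{P}}\}$, and since you only establish $\mathbf{F}_{\mathbf{P}}\neq 0$ this codimension may be as small as $C$, while the effective parameter $(\mathbf{P}_1\mathbf{U}_1,\mathbf{P}_2\mathbf{U}_2)$ ranges over an $(m_1L_1+m_2L_2)$-dimensional space; the resulting bound $L_1L_2C-C+m_1L_1+m_2L_2$ exceeds $L_1L_2C$ in every regime of interest (e.g.\ $C=12$ versus $m_1L_1+m_2L_2$ in the thousands), so the count is vacuous unless you additionally stratify by $\operatorname{rank}\{\mathbf{F}_{\mathbf{P}}\}$ and bound the dimension of each stratum of pairs, which you do not do. Exhibiting one good core does not repair this either: the projection is only semialgebraic, not Zariski closed, so a point off it does not force it to be lower-dimensional; and your ``easy regime'' $C\ge L_1L_2$ is precisely the case that never occurs here.

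The paper closes exactly this quantifier issue by a different mechanism: it reuses the essential-uniqueness result (Lemma~\ref{Le: Essential Uniqueness} together with the block-term machinery behind Thm.~\ref{Th:Recovery Guarantee}) for the decomposition $\mathcal{X}=\sum_{r=1}^{\mathsf{kr}_{\mathbf{D}}}\mathcal{G}\times_1\mathbf{P}_1^{(r)}\mathbf{U}_1\times_2\mathbf{P}_2^{(r)}\mathbf{U}_2\times_3\mathbf{U}_3$. That uniqueness holds on a single probability-one event determined by $\mathcal{G}$ alone (the $\mathbf{U}_i$ and $\mathbf{P}_i^{(r)}$ being fixed), and on that event it rules out \emph{every} alternative single-Kronecker-block representation $\mathcal{X}=\mathcal{G}\times_1\mathbf{P}_1\mathbf{U}_1\times_2\mathbf{P}_2\mathbf{U}_2\times_3\mathbf{U}_3$ at once, for all pairs $(\mathbf{P}_1,\mathbf{P}_2)$; hence the true $\mathcal{Z}$ is TF-HSR-feasible with probability zero when $\mathsf{kr}_{\mathbf{D}}>1$. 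If you wish to keep your route, you need a comparable uniform statement over the pair, and producing it essentially amounts to re-proving that uniqueness lemma rather than bookkeeping.
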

A proof of Cor.~\ref{Cro:TeF Inability} can be found in the
supplemental material. We note that Cor.~\ref{Cro:TeF Inability}
indicates that, under a subset of the conditions of
Thm.~\ref{Th:Recovery Guarantee}, TF-HSR will fail to recover
$\mathcal{Z}$.  Indeed, even in those cases wherein the conditions in
Cor.~\ref{Cro:TeF Inability} do not hold and applying TF-HSR is
still theoretically possible, there exists no clear practical route to
obtaining appropriate $\mathbf{P}_1$ or $\mathbf{P}_2$ that satisfy
\eqref{eq:Estimation Criterion}.  Hence, the proposed GTF-HSR is at an
advantage over TF-HSR, both in theory and in practice.
\subsection{Blind HSR}
\label{sec:blind}
An additional key advantage of the proposed GTF-HSR is its ability to
facilitate blind HSR. As mentioned above, blind HSR, wherein the
spatial-degradation matrix is unknown, is more realistic in certain
settings but is an as-yet unresolved problem for TF-oriented methods,
due to the lack of a process for acquiring a pair of appropriate
$\mathbf{P}_1$ and $ \mathbf{P}_2$. Several strategies have been
proposed in an attempt to circumvent this TF-HSR limitation---for
example, \cite{KFS2018,PUC2020,DFH2021,BZX2021,DLF2020} propose
absorbing the spatial-degradation matrices into the factors to be
estimated, inevitably resulting in suboptimal performance due to the
resulting information loss. Alternatively, \cite{BZX2021,XWC2020a,CZH2022}
conduct trial-and-error estimation. {They directly utilize the rank-1 estimation of the blurring kernel to generate $\mathbf{P}_1$ and $
	\mathbf{P}_2$ to perform blind HSR. Yet, it is unclear if such scheme delivers the best estimation on the spatial degradation process. Moreover, there exists no means to determine if the spatial degradation process can be precisely modeled following TF-HSR.}

In contrast, GTF-HSR {confirms that precisely modeling the spatial degradation process following TF-HSR is impossible, and realizes this goal by invoking KD}.  That is, once we
obtain an estimate of $\mathbf{D}$ (using, e.g., appropriate methods
developed for the MF-HSR problem), we can perform KD on $\mathbf{D}$
to derive
$\left\{\mathbf{P}_1^{(r)}\right\}_{r=1}^{\mathsf{kr}_{\mathbf{D}}}\subseteq\mathbb{R}^{m_1\times
	M_1}$, and
$\left\{\mathbf{P}_2^{(r)}\right\}_{r=1}^{\mathsf{kr}_{\mathbf{D}}}\subseteq\mathbb{R}^{m_2\times
	M_2}$. Alternatively, we could develop methods to directly estimate
the sets of these matrices since GTF-HSR is equivalent to MF-HSR in
the sense of spatial-degradation modeling.

\section{A Group-Sparse Solution for GTF-HSR}
\label{sec:bgsgtfhsr}
While Thm.~\ref{Th:Recovery Guarantee} guarantees perfect SRI
recovery is possible within the GTF-HSR framework, it does not
actually indicate how one goes about effectuating the same.
Consequently, we now proceed to develop an algorithmic procedure to
solve GTF-HSR in the form of a factor-identification problem.  While
various tensor-decomposition frameworks could be used for this, we
adopt the Tucker decomposition.  That is, by applying the Tucker
decomposition of \eqref{eq:td} to \eqref{eq:Reformulation}, GTF-HSR
becomes the problem of estimating the most appropriate $\mathcal{G}\in
\mathbb{R}^{L_1\times L_2\times C}$,
$\mathbf{U}_{1}\in\mathbb{R}^{M_1\times L_1}$,
$\mathbf{U}_{2}\in\mathbb{R}^{M_2\times L_2}$, and
$\mathbf{U}_{3}\in\mathbb{R}^{S\times C}$ such that
\begin{equation}
	\begin{aligned}
		\mathcal{X} & =\sum_{r = 1}^{\mathsf{kr}_{\mathbf{D}}}  \mathcal{G}\times_1\mathbf{P}_1^{(r)}\mathbf{U}_1\times_2\mathbf{P}_2^{(r)}\mathbf{U}_2\times_3\mathbf{U}_3, \\
		\mathcal{Y} & =\mathcal{G}\times_1\mathbf{U}_1\times_2\mathbf{U}_2\times_3\mathbf{R}\mathbf{U}_3,
	\end{aligned}
	\label{eq:bare objective}
\end{equation}
given HSI $\mathcal{X}\in \mathbb{R} ^{m_1\times m_2\times S}$, MSI
$\mathcal{Y}\in \mathbb{R} ^{M_1\times M_2\times s}$, spatial
degradations
$\left\{\mathbf{P}_1^{(r)}\right\}_{r=1}^{\mathsf{kr}_{\mathbf{D}}}\subseteq\mathbb{R}^{m_1\times
	M_1}$ and
$\left\{\mathbf{P}_2^{(r)}\right\}_{r=1}^{\mathsf{kr}_{\mathbf{D}}}\subseteq\mathbb{R}^{m_2\times
	M_2}$, and spectral degradation
$\mathbf{R}\in\mathbb{R}^{s\times S}$.
\subsection{Tensor Blockwise Group Sparsity}
\begin{figure*}[t]
	\centering
	\setlength{\tabcolsep}{1cm}
	\begin{tabular}{m{1\figurewidth}m{0.4\figurewidth}}\includegraphics[width=1\figurewidth]{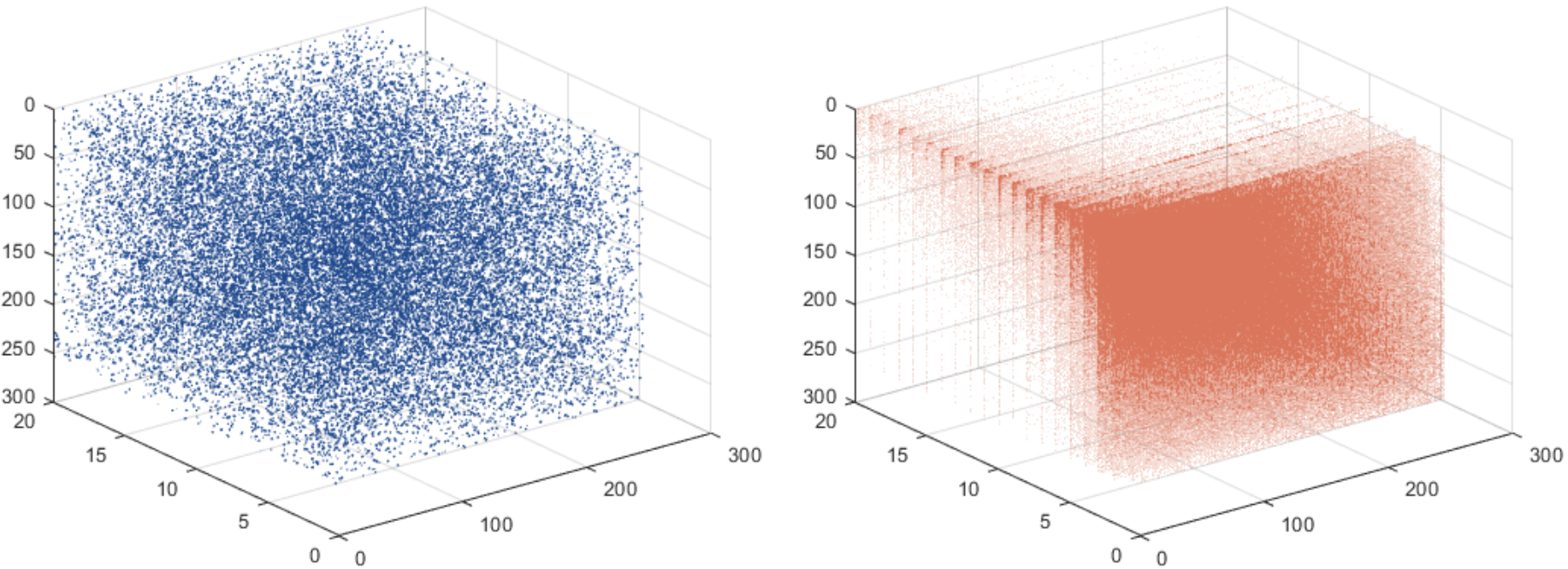}&
		\includegraphics[width=0.4\figurewidth]{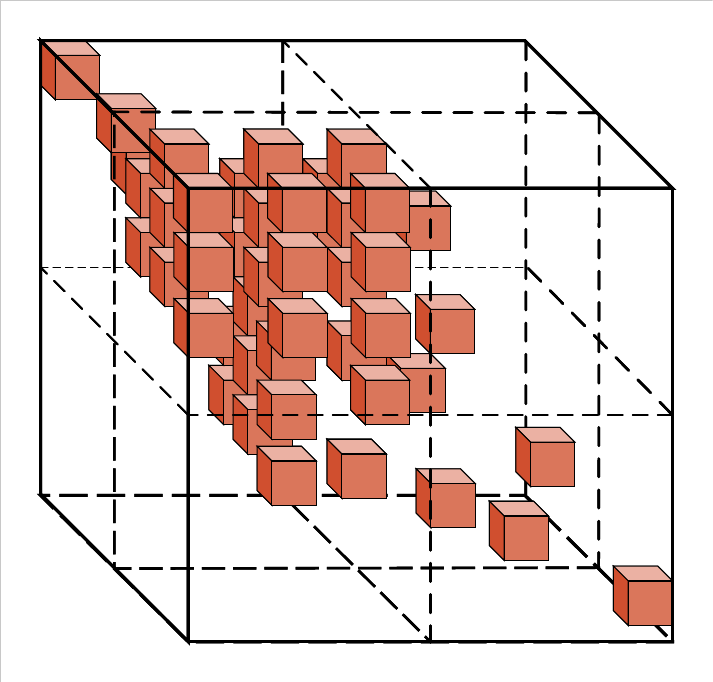}\\\multicolumn{1}{c}{(a)}&\multicolumn{1}{c}{(b)}
	\end{tabular}
	\caption{\label{fig:BGS}
		(a) A sparse random tensor (left) vs.\ the core tensor from
		the Tucker decomposition of a real HSI (right); the core tensor (right)
		exhibits a prominent structured-sparsity characteristic.
		(b) A BGS tensor wherein sparsity takes on a blockwise structure.}
\end{figure*}

While the ill-posedness of HSR is the greatest obstacle to obtaining
the ideal solution, under the Tucker decomposition, the SRI exhibits a
wealth of properties that can be exploited to design
regularizations that narrow down the solution set. Particularly,
sparsity of the core tensor $\mathcal{G}$ has been considered for HSR in
the past (e.g., \cite{DFL2017,LDF2018,DLF2020}). However, it has been
observed (e.g., \cite{ZZW2023}) that real-world data often exhibits
structured sparsity due to relationships contained within the data.
Indeed, such structured sparsity can be seen prominently in the core
tensor of a real HSI in Fig.~\ref{fig:BGS}(a).  Consequently, we adopt
blockwise group sparsity (BGS)---illustrated in
Fig.~\ref{fig:BGS}(b)---as a prior for solving GTF-HSR rather than the
simple sparsity used previously for HSR in
\cite{DFL2017,LDF2018,DLF2020}. Briefly, in BGS, the overall tensor is
divided into smaller subblocks under the supposition that only a
relative few of the subblocks contain nonzero samples; this is likely
to be true if the subblocks are of sufficiently small size.  We note
that, while BGS has been used in the past \cite{WGH2020,PMX2014}, {\cite{WGH2020} defines BGS for 4-order image cube cluster, which is difficult to be generalized towards an $N$-order tensor, including the $3$-order tensor in our case. The semi-algebraic \cite{PMX2014} is hard to deploy in our multi-sourced reconstruction problem. In response, we present a generalized BGS pattern for any $N$-order tensor achieved by an optimization-based algorithmic framework.}

\begin{figure}[bthp!]
	\centering
	\setlength{\tabcolsep}{1mm}
	\begin{tabular}{m{0.8\figurewidth}}\includegraphics[width=0.8\figurewidth]{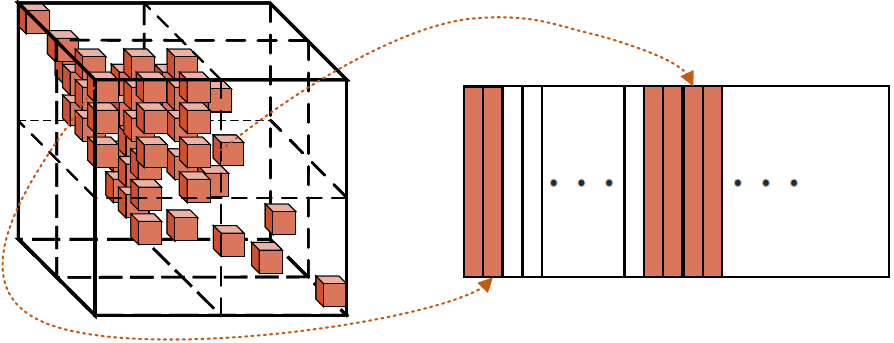}
	\end{tabular}
	\caption{\label{fig:B-unfolding}B-unfolding, reorganizing each
		tensor block into a single column of a matrix. }
\end{figure}

Our solution procedure effectively consists of imposing a BGS
constraint onto the core tensor and casting the restoration task as a
regularized optimization.  To this end, we first propose an unfolding
strategy appropriate for BGS which is defined below and also
illustrated in Fig.~\ref{fig:B-unfolding}.
\begin{definition}[Blockwise Unfolding (B-Unfolding)]
	\textit{For an $N$-order tensor $\mathcal{T}\in\mathbb{R}^{T_1\times T_2\times\cdots\times T_N}$ with $T_n=t_ns_n$, $n=1,2,\dots,N$, its Blockwise unfolding, denoted by $\mathbf{T}_{[\mathbf{t}]}$, is defined as
		\begin{multline}
			\mathbf{T}_{[\mathbf{t}]}=\mathtt{reshape}\Big(\mathtt{permute}\big(\mathtt{reshape}(\mathcal{T},\\
			[t_1,s_1,\dots,t_N,s_N],
			[1,3,\dots,2N-1,2,4,\cdots,2N])\big), \\
			\prod_{n=1}^N t_n,[\,]\Big)
		\end{multline}
		where $\mathbf{t}\triangleq[t_1,t_2,\dots,t_N].$}
	\label{Def:B-Unfolding}
\end{definition}
Effectively, the proposed B-unfolding reorganizes each subblock of
size $t_1\times t_2\times\cdots\times t_N$ into a single column of a
matrix.  B-unfolding thus allows the transformation a BGS constraint
on a tensor into a column-wise sparsity constraint on its B-unfolding.
We note that the latter has been well-studied extensively in prior
literature and incorporated into optimizations in the form of the
$\ell_{2,0}$-norm, the number of nonzero columns in a matrix.

Thus, our proposed BGS-based approach yields the optimization
\begin{equation}
	\begin{gathered}
		\min_{{\mathcal{G},\mathbf{U}_1,\mathbf{U}_2,\mathbf{U}_3}} \left\lVert \mathbf{G}_{[\mathbf{t}]}\right\rVert_{2,0}                                                \\
		\begin{aligned}
			\text{s.t.}\quad \mathcal{X} & =\sum_{r = 1}^{\mathsf{kr}_{\mathbf{D}}}  \mathcal{G}\times_1\mathbf{P}_1^{(r)}\mathbf{U}_1\times_2\mathbf{P}_2^{(r)}\mathbf{U}_2\times_3\mathbf{U}_3, \\
			\mathcal{Y}       & =\mathcal{G}\times_1\mathbf{U}_1\times_2\mathbf{U}_2\times_3\mathbf{R}\mathbf{U}_3.
		\end{aligned}
	\end{gathered}
	\label{eq:Regularized TeRF}
\end{equation}
One major benefit of using B-unfolding is that it enables the
exploration of the multi-linear structure of higher-order data through
a single compact norm. In spite of the consensus on the existence of
multi-linearity in higher-order tensors, previous studies (e.g.,
\cite{XZL2020,QZZ2022,CJL2021}) have resorted to imposing constraints in the
form of a summation or product of multiple norms, which inevitably
increases model complexity and poses additional challenges for
subsequent optimization procedures. In contrast, the proposed BGS
constraint can easily balance the multi-linear structure hidden in the
data through manipulating the parameter $\mathbf{t}$ while reducing
optimization burden, since it can be achieved by a single matrix norm.

That said, since the $\ell_{2,0}$-norm is discontinuous and nonconvex,
the optimization in
\eqref{eq:Regularized TeRF} is NP-hard \cite{ZZW2023}.
While a straightforward solution would be to adopt a
convex relaxation via the $\ell_{2,1}$-norm, this would be
intrinsically suboptimal since the $\ell_{2,1}$-norm is the convex
envelope of the $\ell_{2,0}$-norm.
Rather, we propose to use a nonconvex surrogate if the form of
the Laplace function {\cite{ZZW2023}}; i.e.,
\begin{equation}
	\begin{aligned}
		\left\lVert \mathbf{G}_{[\mathbf{t}]}\right\rVert_{2,0}&\approx\left\lVert \mathbf{G}_{[\mathbf{t}]}\right\rVert_{2,\gamma}
		\triangleq \sum_{i}\left[1-\exp\left(-\left\lVert\mathbf{g}_{[\mathbf{t}]}^{(i)}\right\rVert_2/\gamma\right)\right] ,
	\end{aligned}
\end{equation}
where $\gamma>0$ is a parameter to adjust the position of this
surrogate function, $\mathbf{g}_{[\mathbf{t}]}^{(i)}$ is column~$i$ of
matrix $\mathbf{G}_{[\mathbf{t}]}$, and we note that
$\lim_{\gamma\rightarrow 0^+}\left\lVert
\mathbf{G}_{[\mathbf{t}]}\right\rVert_{2,\gamma}=\left\lVert
\mathbf{G}_{[\mathbf{t}]}\right\rVert_{2,0}$.  Despite the fact that
nonconvex regularized tensor approaches for data restoration have been,
to some extent, studied before (e.g.,
\cite{XZM2017,CJL2021,YLL2022a,CGS2019}), these past efforts have
largely aimed at single-source restoration such as tensor completion,
deconvolution, and denoising. In the multi-source case considered in
this work, the optimization problem must be conducted at a much larger
scale due to the increased amount of data.  Such large-scale
optimization is ill-handled by existing nonconvex schemes, and,
consequently, nonconvex surrogates have not been used for tensor-based
HSR before now, to the best of our knowledge.

\subsection{The Proposed BGS-GTF-HSR Algorithm}
Using the Laplace surrogate, the HSR problem of \eqref{eq:Regularized TeRF}
is relaxed into
\begin{equation}
	\begin{gathered}
		\min_{{\mathcal{G},\mathbf{U}_1,\mathbf{U}_2,\mathbf{U}_3}} \left\lVert \mathbf{G}_{[\mathbf{t}]}\right\rVert_{2,\gamma}                                                \\
		\begin{aligned}
			\text{s.t.}\quad \mathcal{X} & =\sum_{r = 1}^{\mathsf{kr}_{\mathbf{D}}}  \mathcal{G}\times_1\mathbf{P}_1^{(r)}\mathbf{U}_1\times_2\mathbf{P}_2^{(r)}\mathbf{U}_2\times_3\mathbf{U}_3, \\
			\mathcal{Y}       & =\mathcal{G}\times_1\mathbf{U}_1\times_2\mathbf{U}_2\times_3\mathbf{R}\mathbf{U}_3.
		\end{aligned}
	\end{gathered}
	\label{eq:Relaxed TeRF}
\end{equation}
This is a problem of a nonconvex objective function with
multi-variable nonconvex constraints. Though alternating direction
method of multipliers (ADMM) has seen success \cite{XZM2017} in
dealing with such problems, due to a larger scale here, some of our
initial trials have indicated that directly applying ADMM to
\eqref{eq:Relaxed TeRF} may lead to a suboptimal solution and weak
convergence. As an alternative, we devise a two-stage framework, which
we refer to as BGS-GTF-HSR, to solve \eqref{eq:Relaxed TeRF} in a
divide-and-conquer manner: first, a subspace-identification problem
which identifies $\mathbf{U}_1$, $\mathbf{U}_2$, and $\mathbf{U}_3$,
followed by a coding problem to identify $\mathcal{G}$. We develop
BGS-GTF-HSR throughout the rest of this section.
\subsubsection{Subspace Identification}
We first seek a practical technique to identify $\mathbf{U}_{1}$,
$\mathbf{U}_2$, and $\mathbf{U}_3,$ the subspace factors.  Since the
observed MSI $\mathcal{Y}$ is degraded only spectrally, we derive
$\mathbf{U}_1$ and $\mathbf{U}_2$ primarily from $\mathcal{Y}$ while
including a few supplementary basis vectors from HSI
$\mathcal{X}$. That is, the spatial factors
$\mathbf{U}_{i}\in\mathbb{R}^{M_i\times L_i},\,i=1,2$ are partitioned
into
$\mathbf{U}_{i}=\begin{bmatrix}\mathbf{U}_{i}^\mathcal{Y}&\mathbf{U}_i^\mathcal{X}\end{bmatrix}$
where $\mathbf{U}_{i}^\mathcal{Y}\in\mathbb{R}^{M_i\times K_i}$ with
$K_i\gg L_i-K_i$. Through mode-$i$ unfolding, we have
\begin{equation}
	\mathbf{Y}_{[i]}=\begin{bmatrix}\mathbf{U}_{i}^\mathcal{Y}&\mathbf{U}_i^\mathcal{X}\end{bmatrix}\mathbf{G}_{[i]}\left(\mathbf{R}\mathbf{U}_3\otimes
	\mathbf{U}_{[3-i]}\right)^T .
\end{equation}
Now, suppose the columns of $\mathbf{U}_{i}^\mathcal{Y}$ span
$\mathsf{col}\left\{\mathbf{Y}_{[i]}\right\}$. We can then estimate
$\mathbf{U}_{i}^\mathcal{Y}$ via
\begin{equation}
	\left[\mathbf{U}_i^\mathcal{Y},\thicksim, \thicksim
	\right]=\mathtt{svd}\left(\mathbf{Y}_{[i]},K_i\right).
\end{equation}
We extract $\mathbf{U}_i^\mathcal{X}$ from HSI $\mathcal{X}$.  Again by
mode-$i$ unfolding, we have
\begin{equation}
	\mathbf{X}_{[i]}=\begin{bmatrix}\mathbf{P}_i^{(1)} & \cdots & \mathbf{P}_i^{(\mathsf{kr}_{\mathbf{D}})}\end{bmatrix}\left(\mathbf{I}_{\mathsf{kr}_{\mathbf{D}}}\otimes \begin{bmatrix}\mathbf{U}_{i}^\mathcal{Y}&\mathbf{U}_i^\mathcal{X}\end{bmatrix}\right)\mathbf{A}_i
\end{equation}
where
\begin{equation}
	\mathbf{A}_i=\begin{bmatrix}
		\mathbf{G}_{[i]}\left(\mathbf{U}_3\otimes\mathbf{P}_{3-i}^{(1)}\mathbf{U}_{3-i}\right)\\
		\mathbf{G}_{[i]}\left(\mathbf{U}_3\otimes\mathbf{P}_{3-i}^{(2)}\mathbf{U}_{3-i}\right)\\
		\vdots\\
		\mathbf{G}_{[i]}\left(\mathbf{U}_3\otimes\mathbf{P}_{3-i}^{(\mathsf{kr}_{\mathbf{D}})}\mathbf{U}_{3-i}\right)
	\end{bmatrix}.
\end{equation}
\begin{algorithm}[t]
	\renewcommand{\algorithmicrequire}{\textbf{Input:}}
	\renewcommand{\algorithmicensure}{\textbf{Output:}}
	\caption{BGS-GTF-HSR: Subspace Identification}
	\label{alg:BGS-TeRF Part 1}
	\begin{algorithmic}[1]
		\REQUIRE The observed HSI $\mathcal{X}\in\mathbb{R}^{m_1\times m_2\times S}$, MSI $\mathcal{Y}\in\mathbb{R}^{M_1\times M_2\times s}$, spatial-degradation matrices $\left\{\mathbf{P}_1^{(r)}\right\}_{r=1}^{\mathsf{kr}_{\mathbf{D}}}\subseteq\mathbb{R}^{m_1\times M_1}$ and $\left\{\mathbf{P}_2^{(r)}\right\}_{r=1}^{\mathsf{kr}_{\mathbf{D}}}\subseteq\mathbb{R}^{m_2\times M_2}$, spectral-degradation matrix $\mathbf{R}\in\mathbb{R}^{s\times S}$, $\mu,\,\epsilon,\,K_1,\,K_2$
		\ENSURE Subspace factors $\mathbf{U}_1\in\mathbb{R}^{M_1\times L_1}$, $\mathbf{U}_2\in\mathbb{R}^{M_2\times L_2}$, $\mathbf{U}_3\in\mathbb{R}^{S\times C}$
		\FOR{$i=1:2$}
		\STATE $\left[\mathbf{U}_i^\mathcal{Y},\thicksim, \thicksim \right]=\mathtt{svd}\left(\mathbf{Y}_{[i]},K_i\right)$
		\STATE initialize $\rho$, $\rho_{max}$, $\nu>1$, choose random $\mathbf{A}_i$
		\STATE set $\mathbf{B}_i=\mathbf{A}_i$, $\mathbf{U}_i^\mathcal{X} = \mathbf{0}$, $\mathbf{M}_i = \mathbf{0}$
		\WHILE{$\frac{\left\Vert\mathbf{U}_i^\mathcal{X}-\mathbf{U}_i^{\mathcal{X}\text{(previous)}}\right\Vert_F}{\left\Vert\mathbf{U}_i^{\mathcal{X}\text{(previous)}}\right\Vert_F}\geq\epsilon$}
		\STATE $\widetilde{\mathbf{X}}_{[i]}\leftarrow\mathbf{X}_{[i]}-\sum_{r=1}^{\mathsf{kr}_{\mathbf{D}}}\mathbf{P}_i^{(r)}\mathbf{U}_i^\mathcal{Y}
		\mathbf{B}_i^{(r)\mathcal{Y}}$
		\STATE Update $\mathbf{U}_i^\mathcal{X}$ by solving $\nabla \mathsf{L}_{\mathbf{U}_i^\mathcal{X}}=0$  via conjugate gradient (CG) \cite{GV2013} \label{sp:cg}
		\STATE $\mathbf{D}_i\leftarrow\begin{bmatrix}\mathbf{P}_i^{(1)} & \cdots & \mathbf{P}_i^{(\mathsf{kr}_{\mathbf{D}})}\end{bmatrix}\left(\mathbf{I}_{\mathsf{kr}_{\mathbf{D}}}\otimes \begin{bmatrix}\mathbf{U}_i^\mathcal{Y} & \mathbf{U}_i^\mathcal{X}\end{bmatrix}\right)$
		\STATE  $\mathbf{B}_i\leftarrow\left(\mathbf{D}_i^T\mathbf{D}_i+\frac{\rho}{2}\mathbf{I}_{L_i\mathsf{kr}_{\mathbf{D}}}\right)^{-1}\Big(\mathbf{D}_i^T\mathbf{X}_{[i]}+$\\\hfill$\frac{\rho}{2}\left(\mathbf{A}_i+\frac{\mathbf{M}_i}{\rho}\right)\Big)$ \label{sp:inversion}
		\STATE  $\mathbf{A}_i\leftarrow \mathtt{soft}\left(\mathbf{B}_i-\frac{\mathbf{M}_i}{\rho},\frac{\mu}{\rho}\right)$
		\STATE  $\mathbf{M}_i\leftarrow\mathbf{M}_i+\rho(\mathbf{A}_i-\mathbf{B}_i)$
		\STATE  $\rho\leftarrow\min\left\{\nu\rho,\rho_{\text{max}}\right\}$
		\ENDWHILE
		\STATE   $\left[\mathbf{U}_3,\thicksim, \thicksim \right]=\mathtt{svd}\left(\mathbf{X}_{[3]},C\right)$
		\ENDFOR
	\end{algorithmic}
\end{algorithm}

The degradation matrices
$\left\{\mathbf{P}_i^{(r)}\right\}_{r=1}^{\mathsf{kr}_{\mathbf{D}}}$
hinder direct extraction of $\mathbf{U}_i^\mathcal{X}$ from
$\mathsf{col}\left\{\mathbf{X}_{[i]}\right\}$; therefore, we propose
to impose a sparse constraint on $\mathbf{A}$ to estimate
$\mathbf{U}_i^\mathcal{X}$ via sparse dictionary learning,
\begin{gather}
	\begin{aligned}
		&\min_{\mathbf{U}_i^\mathcal{X},\mathbf{A}_i}\,\Big\lVert \mathbf{X}_{[i]}-\\
		&\begin{bmatrix}\mathbf{P}_i^{(1)} & \cdots & \mathbf{P}_i^{(\mathsf{kr}_{\mathbf{D}})}\end{bmatrix}\left(\mathbf{I}_{\mathsf{kr}_{\mathbf{D}}}\otimes \begin{bmatrix}\mathbf{U}_i^\mathcal{Y} & \mathbf{U}_i^\mathcal{X}\end{bmatrix}\right)\mathbf{B}_{i}\Big\rVert_F^2+\mu\left\lVert \mathbf{A}_i\right\rVert_1 \\
	\end{aligned} \nonumber \\
	\text{s.t.} \quad \mathbf{A}_i = \mathbf{B}_i ,
	\label{eq:SubIden}
\end{gather}
where we have introduced auxiliary variables $\mathbf{B}_i$.  The
optimization \eqref{eq:SubIden} is solved via ADMM; this is described
in detail in the supplemental material.

While we could estimate 
spectral subspace factor $\mathbf{U}_3$ in a similar fashion as is done above
for the spatial factors, we instead adopt the simpler approach of
extracting $\mathbf{U}_3$ directly from the HSI $\mathcal{X}$ as it is
subject to only spatial degradation:
\begin{equation}
	\left[\mathbf{U}_3,\thicksim, \thicksim \right]=\mathtt{svd}\left(\mathbf{X}_{[3]},C\right).
\end{equation}

The subspace-identification procedure is presented as
Alg.~\ref{alg:BGS-TeRF Part 1} wherein
$\mathbf{B}_i^{(r)\mathcal{Y}}\in\mathbb{R}^{K_i\times m_{3-i}S}$
and
$\mathbf{B}_i^{(r)\mathcal{X}}\in\mathbb{R}^{(L_i-K_i)\times m_{3-i}S}$
are submatrices partitioned from $\mathbf{B}_i$ as
$  \mathbf{B}_i= \begin{bmatrix}
	(\mathbf{B}_i^{(1)\mathcal{Y}})^T &
	(\mathbf{B}_i^{(1)\mathcal{X}})^T &
	\cdots &
	(\mathbf{B}_i^{(\mathsf{kr}_{\mathbf{D}})\mathcal{Y}})^T &
	(\mathbf{B}_i^{(\mathsf{kr}_{\mathbf{D}})\mathcal{X}})^T
\end{bmatrix}^T ;$
and, in step~\ref{sp:cg},
\begin{multline}
	\nabla \mathsf{L}_{\mathbf{U}_i^\mathcal{X}}\triangleq\sum_{r_1=1}^{\mathsf{kr}_{\mathbf{D}}}\sum_{r_2=1}^{\mathsf{kr}_{\mathbf{D}}}\left(\mathbf{P}_i^{(r_1)}\right)^T\mathbf{P}_i^{(r_2)}\mathbf{U}_i^\mathcal{X}\mathbf{B}_i^{(r_2)\cal X}\left(\mathbf{B}_i^{(r_1)\cal X}\right)^T-\\\sum_{r=1}^{\mathsf{kr}_{\mathbf{D}}}\left(\mathbf{P}_i^{(r)}\right)^T\widetilde{\mathbf{X}}_{[i]}\left(\mathbf{B}_i^{(r)\cal X}\right)^T.
\end{multline}
\subsubsection{BGS Coding}
With the factor matrices $\mathbf{U}_1$, $\mathbf{U}_2$, and
$\mathbf{U}_3$ being determined by Alg.~\ref{alg:BGS-TeRF Part 1}, the
sole remaining task in problem \eqref{eq:Relaxed TeRF} is to determine
$\mathcal{G}$. In doing so,
we introduce auxiliary variables $\mathcal{G}_r$ such that
\eqref{eq:Relaxed TeRF} becomes
\begin{equation}
	\begin{gathered}  
		\min_{{\left\{\mathcal{G}_r\right\}_{r=1}^{\mathsf{kr}_{\mathbf{D}}}
			},\mathcal{G},\hat{\mathbf{G}}} \left\lVert
		\hat{\mathbf{G}}\right\rVert_{2,\gamma} \\
		\begin{aligned}
			\text{s.t.}\quad \mathcal{X} & =\sum_{r =
				1}^{\mathsf{kr}_{\mathbf{D}}}
			\mathcal{G}_r\times_1\mathbf{P}_1^{(r)}\mathbf{U}_1\times_2\mathbf{P}_2^{(r)}\mathbf{U}_2\times_3\mathbf{U}_3,
			\\ \mathcal{Y} &
			=\mathcal{G}\times_1\mathbf{U}_1\times_2\mathbf{U}_2\times_3\mathbf{R}\mathbf{U}_3,
			\\ \hat{\mathbf{G}} & =\mathbf{G}_{[\mathbf{t}]}, \\ \mathcal{G}
			& =\mathcal{G}_r, \,\, r=1,2,\dots,\mathsf{kr}_{\mathbf{D}}.
		\end{aligned}
	\end{gathered}
	\label{eq:BGS subproblem}
\end{equation}
The optimization is carried out via ADMM; thus, for brevity, we
present only the resulting algorithm here as Alg.~\ref{alg:BGS-TeRF
	Part 2}, relegating the complete details to the supplemental
material.  We do note, however, that, we solve the subproblem in
step~\ref{sp:gai}, which is critical to achieving the desired BGS
pattern, via the recently developed generalized accelerating iterative
(GAI) \cite{ZZW2023}.

\begin{algorithm}[t]
	\renewcommand{\algorithmicrequire}{\textbf{Input:}}
	\renewcommand{\algorithmicensure}{\textbf{Output:}}
	\caption{BGS-GTF-HSR: BGS Coding}
	\label{alg:BGS-TeRF Part 2}
	\begin{algorithmic}[1]
		\REQUIRE The observed HSI $\mathcal{X}\in\mathbb{R}^{m_1\times
			m_2\times S}$, MSI $\mathcal{Y}\in\mathbb{R}^{M_1\times
			M_2\times s}$, spatial-degradation matrices
		$\left\{\mathbf{P}_1^{(r)}\right\}_{r=1}^{\mathsf{kr}_{\mathbf{D}}}\subseteq\mathbb{R}^{m_1\times
			M_1}$ and $\left\{\mathbf{P}_2^{(r)}\right\}_{r=1}^{\mathsf{kr}_{\mathbf{D}}}\subseteq\mathbb{R}^{m_2\times
			M_2}$, spectral-degradation matrix
		$\mathbf{R}\in\mathbb{R}^{s\times S}$, subspace factors
		$\mathbf{U}_1\in\mathbb{R}^{M_1\times
			L_1}$, $\mathbf{U}_2\in\mathbb{R}^{M_2\times
			L_2}$, $\mathbf{U}_3\in\mathbb{R}^{S\times C}$,
		$\lambda,\,\epsilon$
		\ENSURE Core tensor
		$\mathcal{G}\in\mathbb{R}^{L_1\times L_2\times C}$
		\STATE initialize $\rho$, $\rho_{\text{max}}$, $\nu>1$, choose random
		$\mathbf{A}_i$
		\STATE set
		$\mathcal{G}$,
		$\left\{\mathcal{G}_r\right\}_{r=1}^{\mathsf{kr}_{\mathbf{D}}}$,
		$\mathcal{P}^\mathcal{X}$,
		$\mathcal{P}^\mathcal{Y}$,
		$\mathbf{W}$,
		$\left\{\mathcal{P}_r\right\}_{r=1}^{\mathsf{kr}_{\mathbf{D}}}$
		to $\mathbf{0}$
		\WHILE{$\frac{\left\Vert\mathcal{G}-\mathcal{G}^{\text{(previous)}}\right\Vert_F}{\left\Vert\mathcal{G}^{\text{(previous)}}\right\Vert_F}\geq\epsilon$}
		\FOR{$r=1:\mathsf{kr}_{\mathbf{D}}$}
		\STATE
		$\mathcal{H}\leftarrow\mathcal{X}+\frac{\mathcal{P}^{\cal{X}}}{\rho}-$\\
		\hfill$\sum_{r^*\neq
			r}
		\mathcal{G}_{r^*}\times_1\mathbf{P}_1^{(r^*)}\mathbf{U}_1\times_2\mathbf{P}_2^{(r^*)}\mathbf{U}_2\times_3\mathbf{U}_3$
		\STATE
		$\mathbf{Q}_1\leftarrow\mathbf{P}_1^{(r)}\mathbf{U}_1$,
		$\mathbf{Q}_2\leftarrow\mathbf{P}_2^{(r)}\mathbf{U}_2$,
		$\mathbf{Q}_3\leftarrow\mathbf{U}_3$
		\STATE
		$\mathcal{K}\leftarrow\mathcal{G}+\frac{\mathcal{P}_r}{\rho}$
		\FOR{$n=1:3$}
		\STATE
		$\left[\mathbf{V}_n,\sqrt{\mathbf{\Sigma}_n},
		\thicksim\right]=\mathtt{svd}\left(\mathbf{Q}_{n}^T\right)$
		\ENDFOR
		\STATE
		$\mathcal{T}\leftarrow\mathcal{H}\times_1\mathbf{Q}_1^T\times_2\mathbf{Q}_2^T\times_3\mathbf{Q}_3^T+\mathcal{K}$
		\STATE
		$\mathcal{T}'\leftarrow\mathcal{T}\times_1\mathbf{V}_1^T\times_2\mathbf{V}_2^T\times_3\mathbf{V}_3^T$
		\STATE
		$\mathsf{Vec}\left(\mathcal{T}''\right)\leftarrow\left(\mathbf{\Sigma}_3\otimes\mathbf{\Sigma}_2\otimes\mathbf{\Sigma}_1+\mathbf{I}_{L_1L_2C}\right)^{-1}\mathsf{Vec}\left(\mathcal{T}'\right)$
		\STATE
		$\mathcal{G}_r\leftarrow\mathcal{T}{''}\times_1\mathbf{V}_1\times_2\mathbf{V}_2\times_3\mathbf{V}_3$
		\ENDFOR
		\STATE
		$\hat{\mathbf{G}}\leftarrow\arg\min_{\hat{\mathbf{G}}}\,\frac{\rho}{2}\left\Vert\hat{\mathbf{G}}-\mathbf{G}_{[\mathbf{t}]}+\frac{\mathbf{W}}{\rho}\right\Vert_F^2+\left\lVert
		\hat{\mathbf{G}}\right\rVert_{2,\gamma}$ \label{sp:gai}
		\STATE
		$\mathcal{H}\leftarrow\mathcal{Y}+\frac{\mathcal{P}^{\cal{Y}}}{\rho}$,
		$\mathbf{Q}_1\leftarrow\mathbf{U}_1$,
		$\mathbf{Q}_2\leftarrow\mathbf{U}_2$,
		$\mathbf{Q}_3\leftarrow\mathbf{R}\mathbf{U}_3$
		\STATE
		$\mathcal{K}\leftarrow\frac{\left(\mathcal{G}^{\mathbf{W}}+\sum_{r=1}^{\mathsf{kr}_{\mathbf{D}}}\mathcal{G}_r-\frac{\mathcal{P}_r}{\rho}\right)}{\mathsf{kr}_{\mathbf{D}}+1}$,
		$\tau\leftarrow
		\mathsf{kr}_{\mathbf{D}}+1$
		\FOR{$n=1:3$}
		\STATE $\left[\mathbf{V}_n,\sqrt{\mathbf{\Sigma}_n},
		\thicksim\right]=\mathtt{svd}\left(\mathbf{Q}_{n}^T\right)$
		\ENDFOR
		\STATE
		$\mathcal{T}\leftarrow\mathcal{H}\times_1\mathbf{Q}_1^T\times_2\mathbf{Q}_2^T\times_3\mathbf{Q}_3^T+\tau\mathcal{K}$
		\STATE
		$\mathcal{T}'\leftarrow\mathcal{T}\times_1\mathbf{V}_1^T\times_2\mathbf{V}_2^T\times_3\mathbf{V}_3^T$
		\STATE
		$\mathsf{Vec}\left(\mathcal{T}''\right)\leftarrow\left(\mathbf{\Sigma}_3\otimes\mathbf{\Sigma}_2\otimes\mathbf{\Sigma}_1+\tau\mathbf{I}_{L_1L_2C}\right)^{-1}\mathsf{Vec}\left(\mathcal{T}'\right)$
		\STATE
		$\mathcal{G}\leftarrow\mathcal{T}{''}\times_1\mathbf{V}_1\times_2\mathbf{V}_2\times_3\mathbf{V}_3$
		\ENDWHILE
	\end{algorithmic}
\end{algorithm}
\subsubsection{Complexity and Convergence}
In Alg.~\ref{alg:BGS-TeRF Part 1}, the main complexity lies in the CG
iterations in step~\ref{sp:cg} and the matrix inversion in
step~\ref{sp:inversion}. In the CG iterations, the primary
computational burden is the multiplication of the system matrices with
factor matrix, whose complexity is
$\mathcal{O}(M_i(L_i-K_i)+M_i(L_i-K_i)^2)$, $i=1,2$. In
step~\ref{sp:inversion}, the matrix inversion has complexity
$\mathcal{O}(\mathsf{kr}_{\mathbf{D}}^3L_i^3).$ Thus the total
complexity of Alg.~\ref{alg:BGS-TeRF Part 1} is
$\sum_{i=1}^{2}[\mathcal{O}(N_{\text{CG}}(M_i(L_i-K_i)+M_i(L_i-K_i)^2))+\mathcal{O}(\mathsf{kr}_{\mathbf{D}}^3L_i^3)]$,
where $N_{\text{CG}}$ denotes the number of CG iterations.

In Alg.~\ref{alg:BGS-TeRF Part 2}, the complexity centers mostly on
the updating of $\mathcal{G}$,
$\left\{\mathcal{G}_r\right\}_{r=1}^{\mathsf{kr}_{\mathbf{D}}}$, and
$\hat{\mathbf{G}}$. Both the updating of $\mathcal{G}$ and
$\left\{\mathcal{G}_r\right\}_{r=1}^{\mathsf{kr}_{\mathbf{D}}}$ costs
the same complexity, $\mathcal{O}(L_1^2L_2C+L_1L_2^2C+L_1L_2C^2)$,
while the complexity of performing step~\ref{sp:gai} via GAI is
$\mathcal{O}(\prod_{n=1}^{3}s_n)$ where $\mathbf{s}=\begin{bmatrix}s_1&s_2&s_3\end{bmatrix}$ is
defined in Def.~\ref{Def:B-Unfolding}. As such, the whole
complexity of Alg.~\ref{alg:BGS-TeRF Part 2} is
$\mathcal{O}(\mathsf{kr}_{\mathbf{D}}(L_1^2L_2C+L_1L_2^2C+L_1L_2C^2))+\mathcal{O}(N_{\text{GAI}}\prod_{n=1}^{3}s_n)$,
where $N_{\text{GAI}}$ is the number of GAI iterations. {Besides, we'd like to note that the matrices in step 13 and 24 requiring inversion are diagonal. Thus their inversion can be calculated by element-wise inversion on their diagonals. And the subsequent multiplication can also be done element-wisely. These two steps, though involving the inversion on large-scale matrices, do not add complexity to the overall algorithm.}

Although ADMM has been widely deployed (e.g., \cite{LDF2018,DL2019,8698334}), its convergence has been
confirmed for only 2-block convex problems \cite{WM2022}. Here, due
to the larger scale of the problem, as well as the nonconvexity of
both the constraint and objective function, convergence of the
proposed BGS-GTF-HSR is not theoretically guaranteed. Nonetheless, we
have not witnessed any convergence issues in our experimental
evaluations.

\section{Experimental Study}
\label{sec:results}
\subsection{Experimental Setup}
\label{sec:resultssetup}
We now present a body of experimental results to evaluate the proposed
BGS-GTF-HSR framework.  Experiments using both simulated and real
datasets are conducted. In the simulated experiments, both degradation
by the traditional IGK as well as the more realistic AGK are
considered.  Moreover, within the experiments for each kernel, both
blind and non-blind HSR are employed to demonstrate the superiority of
the proposed BGS-GTF-HSR. Since BGS-GTF-HSR is unsupervised,
comparisons are made to only unsupervised techniques from prior
literature; specifically, we compare to Hysure \cite{SBA2015}, SURE
\cite{NUS2022}, LTMR \cite{DL2019}, LRTA,\cite{LLL2021}, and ZSL
\cite{DGL2023} as MF-HSR methods\footnote{We note that, while LTMR and
	LRTA employ certain aspects of tensors---namely, tensor rank---their
	operation is more in line with the MF-HSR framework of \eqref{eq: Matrix
		Formulation} than the TF-HSR of \eqref{eq: Tensor
		Formulation}; we thus treat them as MF-HSR techniques here.}, and to
STEREO \cite{KFS2018}, CSTF \cite{LDF2018}, and FSTRD \cite{CZH2022}
as TF-HSR methods. Note that, for STEREO, we use its blind version
(B-STE) in the blind HSR experiments. For the remaining methods, the
spatial-degradation matrices are estimated via the technique suggested
in \cite{SBA2015}. The proposed BGS-GTF-HSR is implemented in MATLAB
R2021a on Intel\textsuperscript{\textregistered}
Core\textsuperscript{TM} i7-8700 CPU @ 3.20 GHz with 32-GB RAM. We
measure performance in terms of peak signal-to-noise ratio (PSNR),
root mean square error (RMSE), spectral angle mapper (SAM), and
structural similarity metric (SSIM).
\begin{figure}[t]
	\centering
	\setlength{\tabcolsep}{0.25mm}
	\begin{tabular}{cm{0.45\figurewidth}cm{0.45\figurewidth}}\rotatebox[origin=c]{90}{\footnotesize{RMSE}}&\includegraphics[width=0.45\figurewidth]{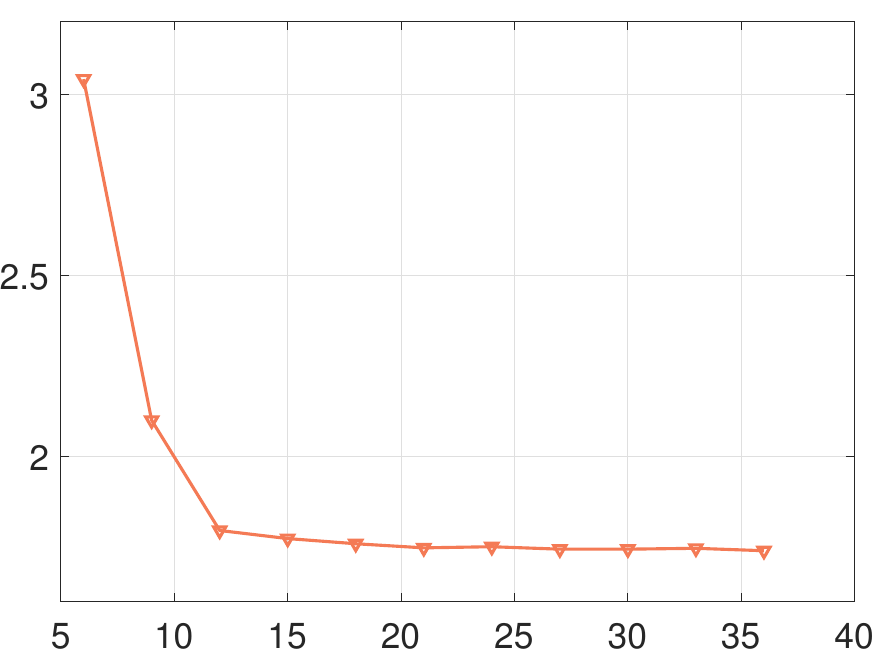}&\rotatebox[origin=c]{90}{\footnotesize{RMSE}}&
	\includegraphics[width=0.45\figurewidth]{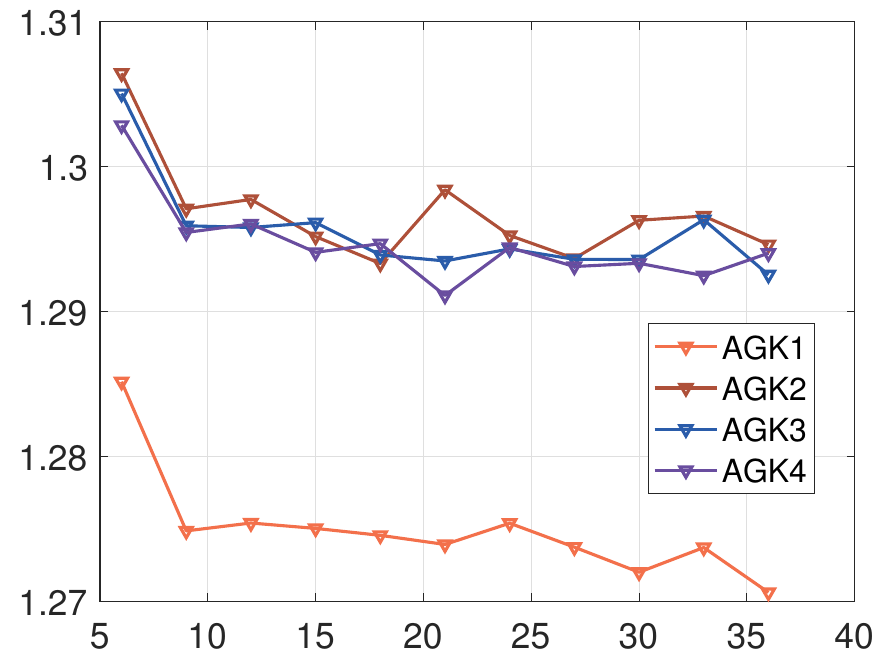}\\\quad&
		\multicolumn{1}{c}{\footnotesize{$C$}}&\quad&\multicolumn{1}{c}{\footnotesize{$C$}}\\\quad&
		\multicolumn{1}{c}{\footnotesize{(a)}}&\quad&\multicolumn{1}{c}{\footnotesize{(b)}}
	\end{tabular}
	\caption{\label{fig:Tuning process} Tuning process of spectral rank $C$ in terms of RMSE. (a) URBAN dataset. (b) Houston2013 dataset.}
\end{figure}
\begin{figure}[t]
	\centering
	\setlength{\tabcolsep}{0.3mm}
	\begin{tabular}{m{0.19\figurewidth}m{0.19\figurewidth}m{0.19\figurewidth}m{0.19\figurewidth}m{0.19\figurewidth}}
		\includegraphics[width=0.19\figurewidth]{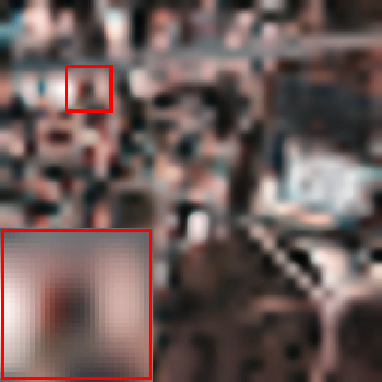}&
		\includegraphics[width=0.19\figurewidth]{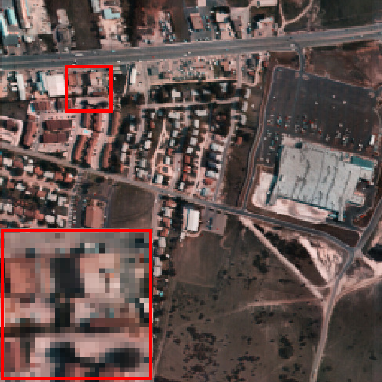}  &
		\includegraphics[width=0.19\figurewidth]{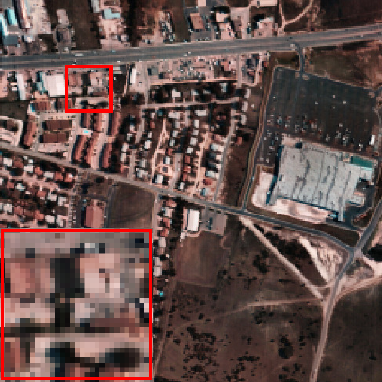}    &
		\includegraphics[width=0.19\figurewidth]{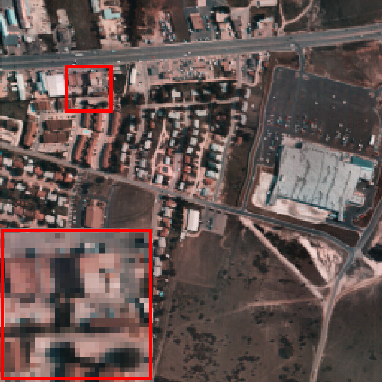}   &
		\includegraphics[width=0.19\figurewidth]{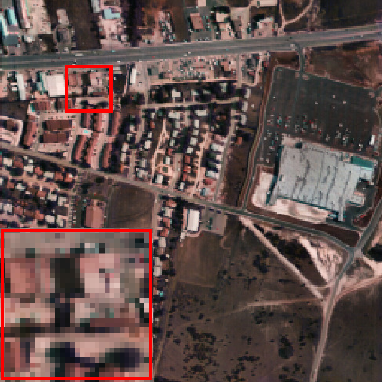}   \\
		\includegraphics[width=0.19\figurewidth]{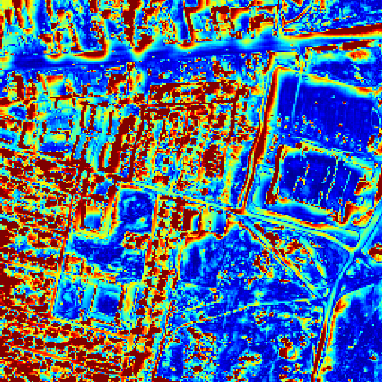}  &
		\includegraphics[width=0.19\figurewidth]{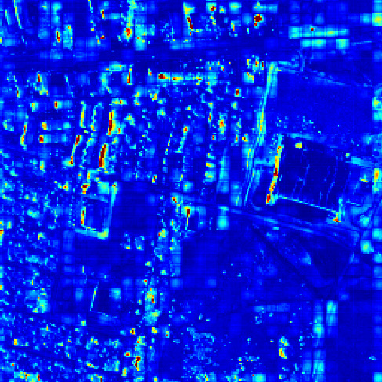}  &
		\includegraphics[width=0.19\figurewidth]{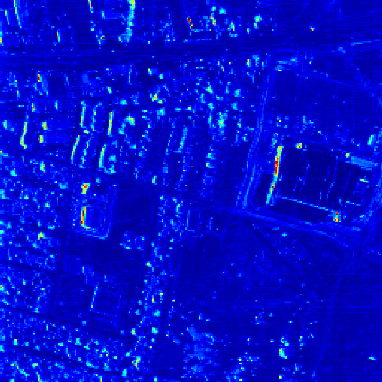}    &
		\includegraphics[width=0.19\figurewidth]{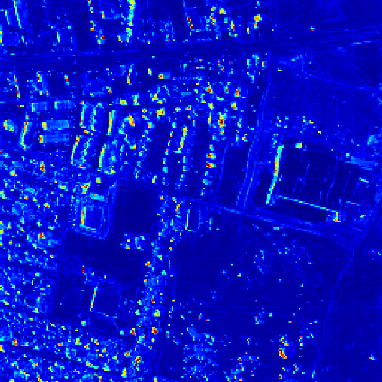}   &
		\includegraphics[width=0.19\figurewidth]{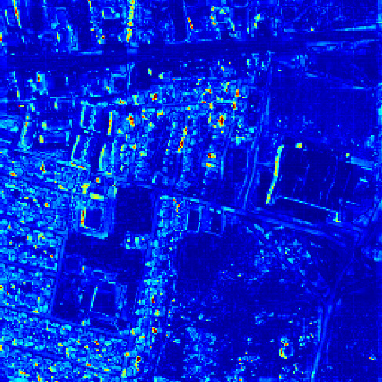} \\
		\multicolumn{1}{c}{\footnotesize{Bicubic}}
		&\multicolumn{1}{c}{\footnotesize{Hysure}}
		& \multicolumn{1}{c}{\footnotesize{{SURE}}}
		& \multicolumn{1}{c}{\footnotesize{LTMR}}
		& \multicolumn{1}{c}{\footnotesize{LRTA}} \\
		\includegraphics[width=0.19\figurewidth]{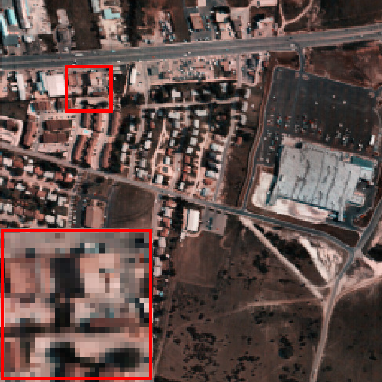} &
		\includegraphics[width=0.19\figurewidth]{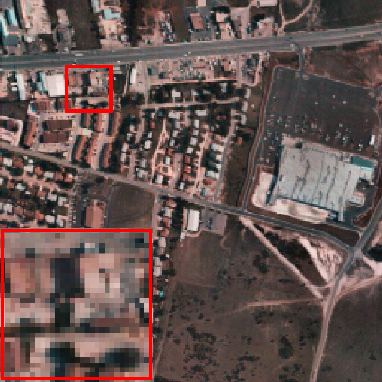}  &
		\includegraphics[width=0.19\figurewidth]{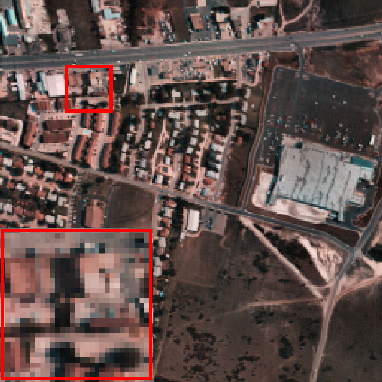}    &
		\includegraphics[width=0.19\figurewidth]{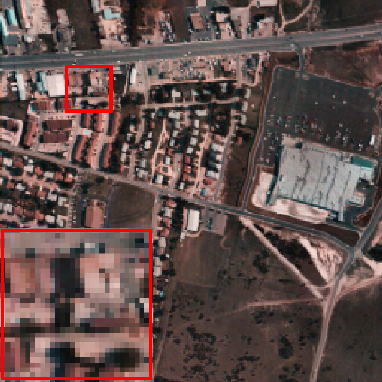}   &
		\includegraphics[width=0.19\figurewidth]{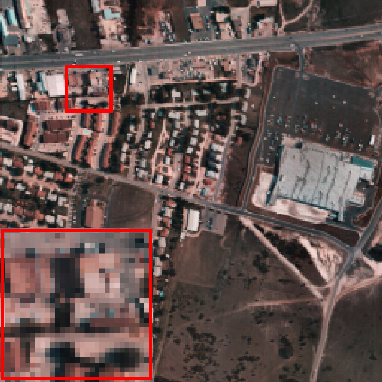}\\
		\includegraphics[width=0.19\figurewidth]{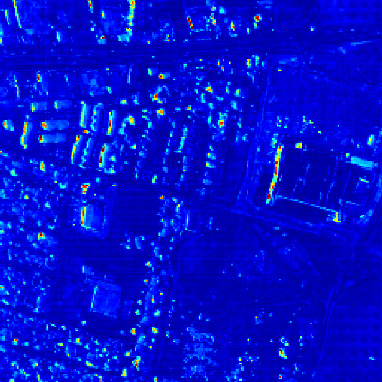} &
		\includegraphics[width=0.19\figurewidth]{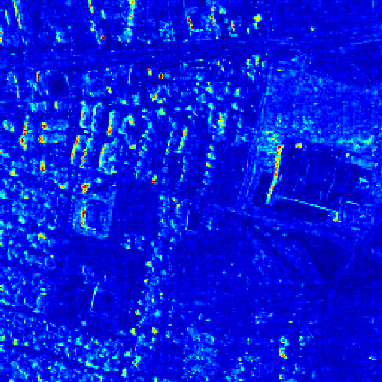}  &
		\includegraphics[width=0.19\figurewidth]{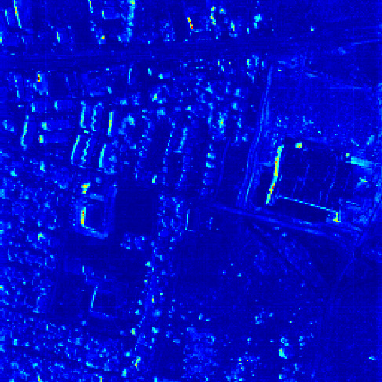}    &
		\includegraphics[width=0.19\figurewidth]{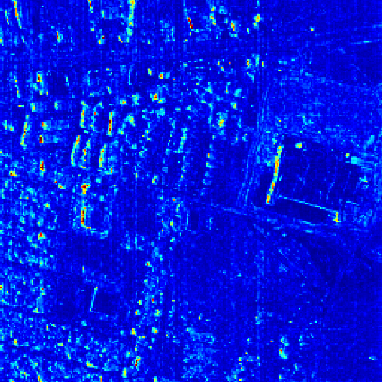}   &
		\includegraphics[width=0.19\figurewidth]{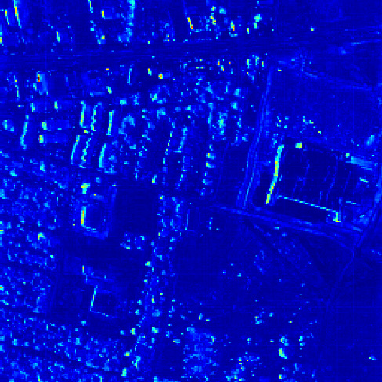} \\
		\multicolumn{1}{c}{\footnotesize{ZSL}}
		&\multicolumn{1}{c}{\footnotesize{STEREO}}
		& \multicolumn{1}{c}{\footnotesize{CSTF}}
		& \multicolumn{1}{c}{\footnotesize{FSTRD}}
		& \multicolumn{1}{c}{\footnotesize{BGS-GTF-HSR}}\\
		\multicolumn{5}{c}{\includegraphics[width=0.6\linewidth]{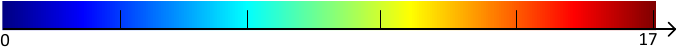}}
	\end{tabular}
	\caption{\label{fig:URBAN visualization} Non-blind fusion results and error maps on the URBAN dataset with IGK. {Pseudo-color is composed of bands 45, 20 and 10.}}
\end{figure}

As for data, we employ the URBAN, Houston2013, and Houston2018
datasets.  The URBAN
dataset\footnote{https://www.erdc.usace.army.mil/Media/Fact-Sheets/Fact-Sheet-Article-View/Article/610433/hypercube/\#}
dataset is a 210-band HSI of $307\times 307$ pixels at $2$-m spatial
resolution. The spectral coverage is $400$--$2,500$\,nm with a $10$-nm
sampling interval. Due to critical water vapor and atmospheric
effects, bands 1--4, 76, 87, 101--111, 136--153, and 198--210 were
discarded. The upper-left corner of the processed image was retained
to obtain a $256\times256\times 162$ SRI.  The Houston2013
dataset\footnote{https://hyperspectral.ee.uh.edu/?page\_id=459} is a
hyperspectral image $349\times1905\times144$ spatial size at $2.5$-m
spatial resolution. The 144 spectral bands cover $380$\,nm to
$1050$,nm. With zero pixels being discarded, a
$322\times1903\times144$ image cube forms the final ground truth; this
is then partitioned into 20 overlapping $256\times256\times144$
subscenes as SRIs. Finally, the Houston2018
dataset\footnote{https://hyperspectral.ee.uh.edu/?page\_id=1075}
consists of a real-world HSI-MSI pair. The HSI is of size $1,202
\times 4,172 \times 48$ and was acquired by an ITRES CASI 1500 sensor,
covering wavelengths $380$--$1,050$\,nm at $1$-m spatial resolution. The image
is cropped to size $500 \times 500 \times 48$ for further processing.
The $12,020 \times 11,920 \times 3$ MSI of the pair was acquired by a DiMAC
ULTRALIGHT+ at a very high spatial resolution of $5$\,cm. An
area of the size $10,000 \times 10,000\times 3$ that is registered
with the HSI is then selected. Considering that the 20-times gap in the spatial
resolution between the HSI and the MSI is too large for current fusion
methods, and a size of $10,000 \times 10,000$ is also too large for the
RAM of our machine, we perform five-times downsampling on the HSI and 25-times
downsampling on the MSI to generate a data pair composed of a
$100\times100\times48$ HSI at $5$-m spatial resolution and a
$400\times400\times3$ MSI at $1.25$-m spatial resolution.
\begin{table}[t]
	\centering
	\renewcommand{\arraystretch}{1.3}
	\tabcolsep=1.55mm
	\caption{Performance on the URBAN dataset}
	\begin{tabular}{c|c|cccc}
		\toprule
		\hline
		\multirow{2}{*}{\textbf{Setup}} & \multirow{2}{*}{\textbf{Methods}} & \multicolumn{4}{c}{\textbf{Quality Indices}} \\\cline{3-6} &\quad
		&  PSNR$\uparrow$                 & RMSE$\downarrow$    & SAM$\downarrow$    & SSIM$\uparrow$       \\\hline
		\multirow{9}{*}{\textbf{Non-blind}} &        Hysure            &  40.0915   &  3.3614   &   2.7795  &   0.9894 \\\quad& SURE 
		&         42.4016          &  2.2118   &   1.8641  &  0.9911      \\
		\quad	&       LTMR             &  44.0306   & 2.3757    & 2.0022    &   0.9918 \\\quad
		&         LRTA           &  42.6566   &  3.2413   &   2.4486  &  0.9852  \\\quad
		&          ZSL          &  42.5588   &  2.3110   &  1.9789   & 0.9921   \\\quad
		&         STEREO           &   41.1537  &  2.7029   &  2.4912   &  0.9833  \\\quad
		&           CSTF         &  44.2222   &   1.8827  &   1.6649  &  0.9918  \\\quad
		&        FSTRD            &  41.7204   &   2.7526  &  2.4355   &  0.9848  \\\quad
		&          BGS-GTF-HSR         &   \textbf{45.4533}  &  \textbf{1.7936}   &  \textbf{1.6218}   & \textbf{0.9939}   \\
		\hline
		\multirow{9}{*}{\textbf{Blind}} &       Hysure            &  40.1689   &  3.3059   &   2.7597  &   0.9895 \\\quad& SURE 
		&         41.8384          &  2.3932   &   2.1031  &  0.9901      \\
		\quad	&       LTMR             &  43.9173   & 2.3898    & 2.0207    &   0.9917 \\\quad
		&         LRTA           &  42.5482   &  3.2605   &   2.4644  &  0.9850  \\\quad
		&          ZSL          &  42.7150   &  2.2951   &  1.9791   & 0.9923   \\\quad
		&         B-STE           &   39.7142  &  3.0233   &  2.7546   &  0.9800  \\\quad
		&           CSTF         &  43.6849   &   \textbf{1.9766}  &   1.8166  &  0.9909  \\\quad
		&        FSTRD            &  41.6403   &   2.7522  &  2.4419   &  0.9838  \\\quad
		&          BGS-GTF-HSR         &   \textbf{44.6240}  &  {2.0684}   &  \textbf{1.7326}   & \textbf{0.9926}   \\
		\hline
		\bottomrule
	\end{tabular}
	\label{tab:URBAN metrics}
\end{table}

\begin{figure}[htbp]
	\centering
	\setlength{\tabcolsep}{0.15mm}
	\begin{tabular}{m{0.19\figurewidth}m{0.19\figurewidth}m{0.19\figurewidth}m{0.19\figurewidth}m{0.19\figurewidth}}
		\includegraphics[width=0.19\figurewidth]{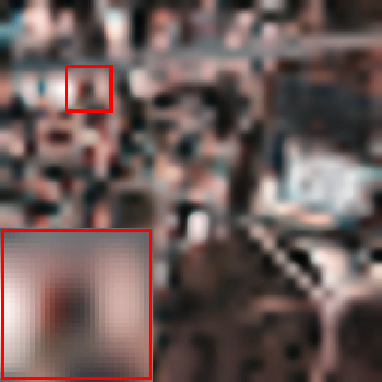} &
		\includegraphics[width=0.19\figurewidth]{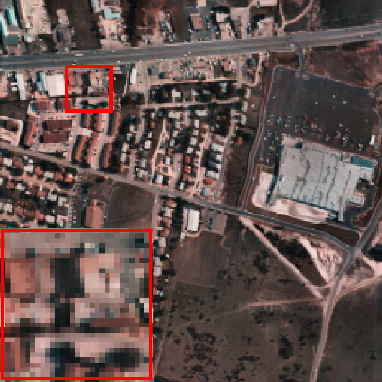}  &
		\includegraphics[width=0.19\figurewidth]{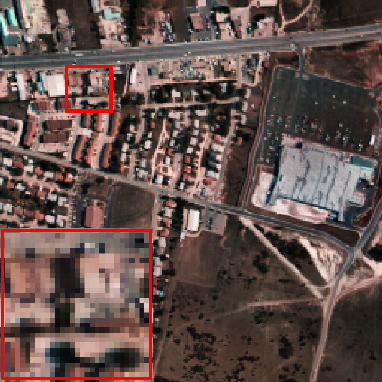}    &
		\includegraphics[width=0.19\figurewidth]{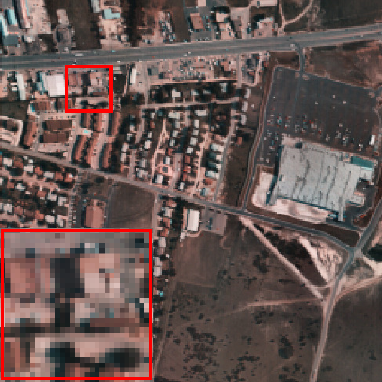}   &
		\includegraphics[width=0.19\figurewidth]{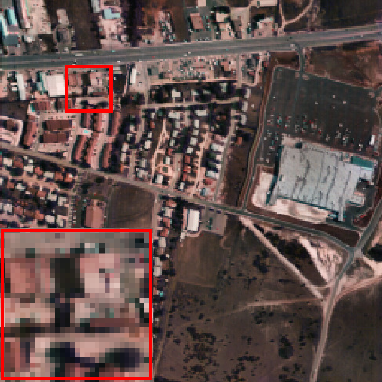}\\
		\includegraphics[width=0.19\figurewidth]{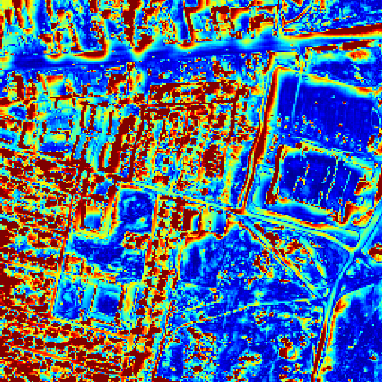} &
		\includegraphics[width=0.19\figurewidth]{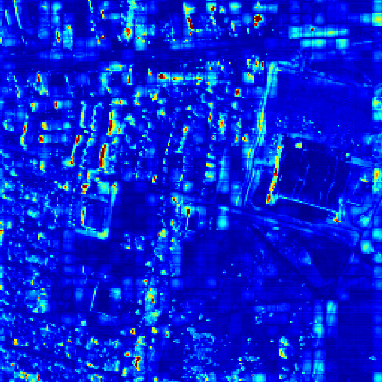}  &
		\includegraphics[width=0.19\figurewidth]{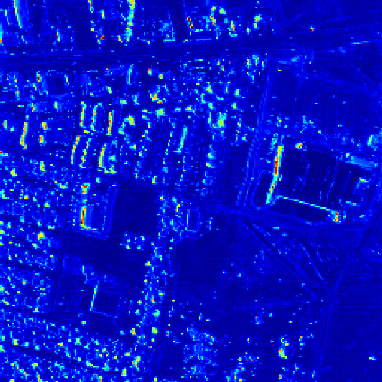}    &
		\includegraphics[width=0.19\figurewidth]{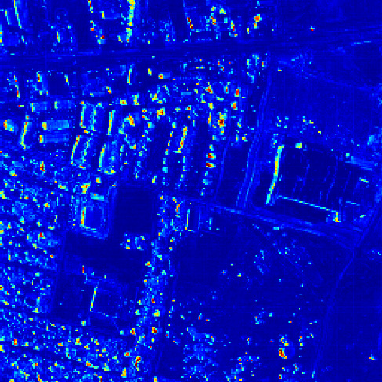}   &
		\includegraphics[width=0.19\figurewidth]{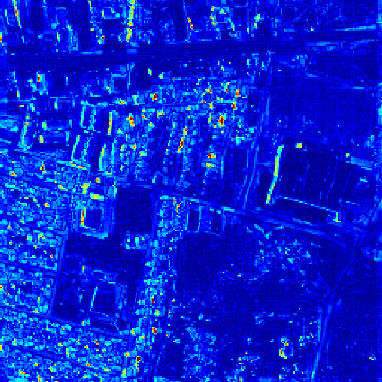} \\
		\multicolumn{1}{c}{\footnotesize{Bicubic}}
		&\multicolumn{1}{c}{\footnotesize{Hysure}}
		& \multicolumn{1}{c}{\footnotesize{{SURE}}}
		& \multicolumn{1}{c}{\footnotesize{LTMR}}
		& \multicolumn{1}{c}{\footnotesize{LRTA}} \\
		\includegraphics[width=0.19\figurewidth]{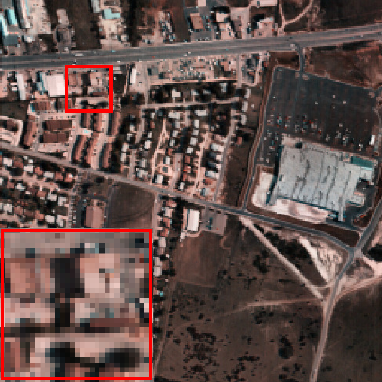} &
		\includegraphics[width=0.19\figurewidth]{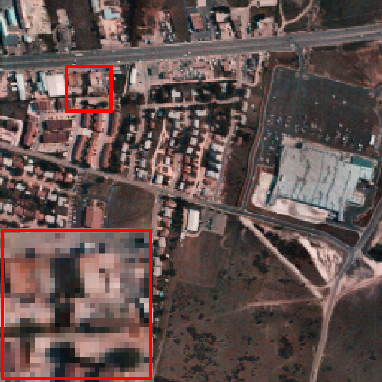}  &
		\includegraphics[width=0.19\figurewidth]{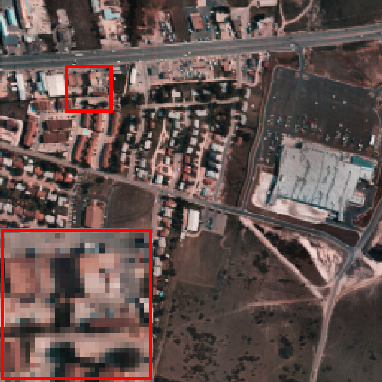}    &
		\includegraphics[width=0.19\figurewidth]{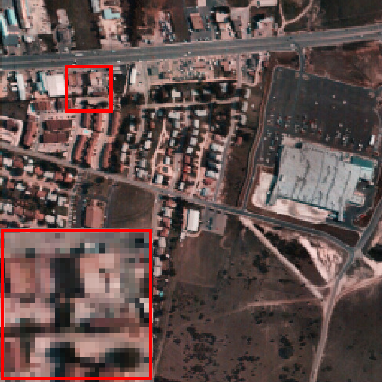}   &
		\includegraphics[width=0.19\figurewidth]{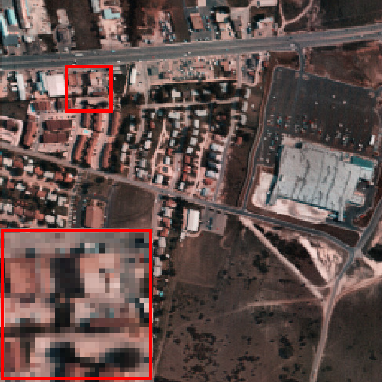}\\
		\includegraphics[width=0.19\figurewidth]{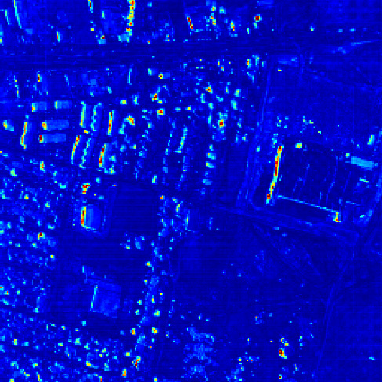} &
		\includegraphics[width=0.19\figurewidth]{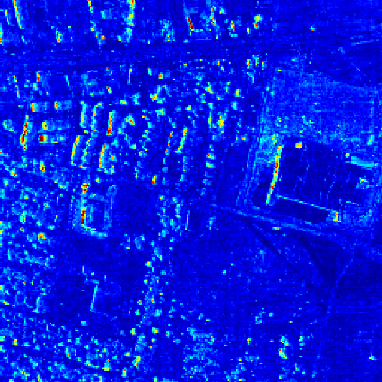}  &
		\includegraphics[width=0.19\figurewidth]{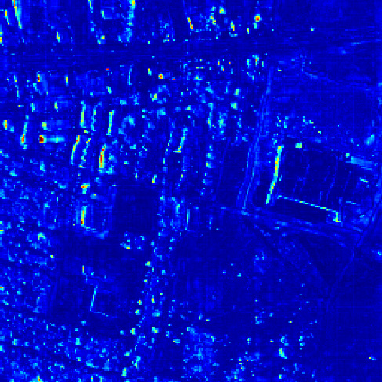}    &
		\includegraphics[width=0.19\figurewidth]{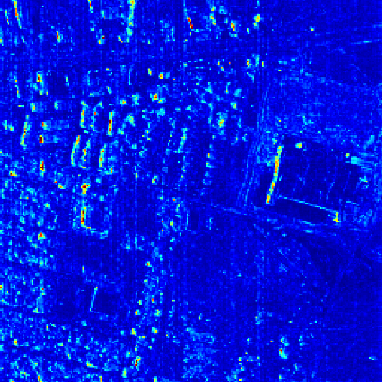}   &
		\includegraphics[width=0.19\figurewidth]{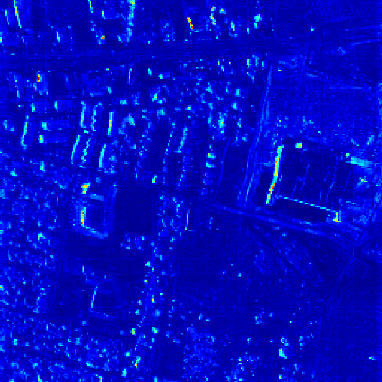} \\
		\multicolumn{1}{c}{\footnotesize{ZSL}}
		&\multicolumn{1}{c}{\footnotesize{B-STE}}
		& \multicolumn{1}{c}{\footnotesize{CSTF}}
		& \multicolumn{1}{c}{\footnotesize{FSTRD}}
		& \multicolumn{1}{c}{\footnotesize{BGS-GTF-HSR}} \\
		\multicolumn{5}{c}{\includegraphics[width=0.6\linewidth]{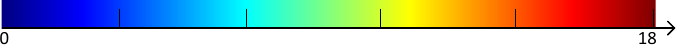}}
	\end{tabular}
	\caption{\label{fig:URBAN blind visualization} Blind fusion results and error maps on URBAN dataset with IGK. {Pseudo-color is composed of bands 45, 20 and 10.}}
\end{figure}

\begin{figure}[t]
	\centering
	\setlength{\tabcolsep}{0.15mm}
	\begin{tabular}{m{0.2375\figurewidth}m{0.2375\figurewidth}m{0.2375\figurewidth}m{0.2375\figurewidth}}
		\includegraphics[width=0.2375\figurewidth]{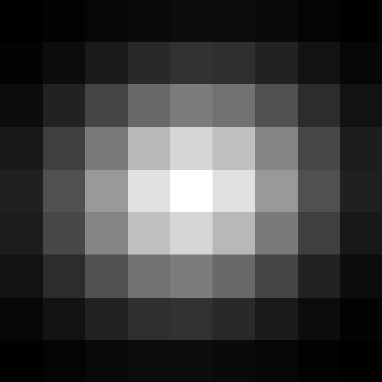}&
		\includegraphics[width=0.2375\figurewidth]{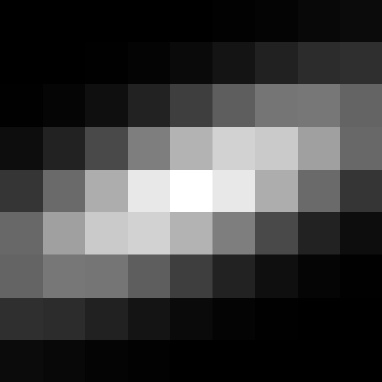}  &
		\includegraphics[width=0.2375\figurewidth]{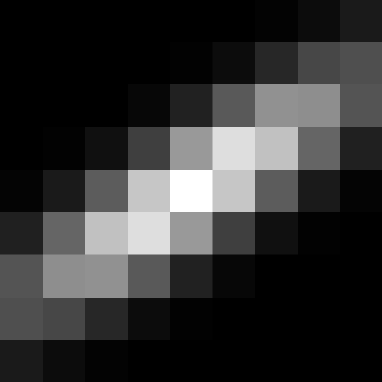}    &
		\includegraphics[width=0.2375\figurewidth]{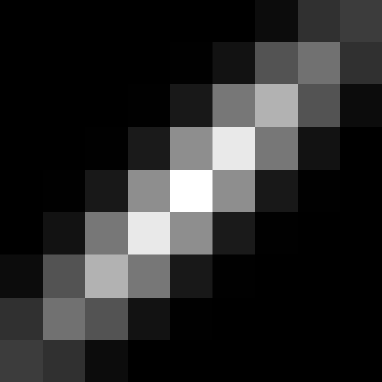}   \\
		\multicolumn{1}{c}{\footnotesize{AGK 1}}                                       &\multicolumn{1}{c}{\footnotesize{AGK 2}}                                       & \multicolumn{1}{c}{\footnotesize{AGK 3}} & \multicolumn{1}{c}{\footnotesize{AGK 4}}\\
		\multicolumn{1}{c}{\footnotesize{{$\theta=\frac{\pi}{16}$}}}                     &\multicolumn{1}{c}{\footnotesize{{$\frac{2\pi}{16}$}}}                                       & \multicolumn{1}{c}{\footnotesize{{$\frac{3\pi}{16}$}}} & \multicolumn{1}{c}{\footnotesize{{$\frac{4\pi}{16}$}}}\\
		\multicolumn{1}{c}{\footnotesize{{$a=0.5$}}}                     &\multicolumn{1}{c}{\footnotesize{{$0.3$}}}                                       & \multicolumn{1}{c}{\footnotesize{{$0.3$}}} & \multicolumn{1}{c}{\footnotesize{{$0.3$}}}\\
		\multicolumn{1}{c}{\footnotesize{{$b=0.6$}}}                     &\multicolumn{1}{c}{\footnotesize{{$0.9$}}}                                       & \multicolumn{1}{c}{\footnotesize{{$1.2$}}} & \multicolumn{1}{c}{\footnotesize{{$1.5$}}}\\
		\multicolumn{1}{c}{\footnotesize{$4.4\times10^{13}$}}                     &\multicolumn{1}{c}{\footnotesize{$2.9\times10^{5}$}}                                       & \multicolumn{1}{c}{\footnotesize{$10^3$}} & \multicolumn{1}{c}{\footnotesize{$40$}}
	\end{tabular}
	\caption{\label{fig:AGK} The four AGK blurring kernels; the numbers
		in the last row are the condition number. {$\theta,a,b$ are defined in Section \ref{sec:AGK}.}}
\end{figure}

\subsection{BGS-GTF-HSR Hyperparameters}

The main hyperparameters concerning the proposed BGS-GTF-HSR include
the latent Tucker-Rank $(L_1,L_2,C)$, the partition parameters $K_1$
and $K_2$ in Alg.~\ref{alg:BGS-TeRF Part 1}, and the shape parameter
$\mathbf{t}=[t_1,t_2,t_3]$ of the proposed B-unfolding deciding the
specific BGS pattern of the core tensor. Since it is well-recognized
that the HSI is not as low rank in the spatial domain as it is in the
spectral domain, we set $L_1=M_1$ and $L_2=M_2$; that is,
$L_1=L_2=256$ for the simulated dataset, and $L_1=L_2=400$ for the
real Houston2018 HSI-MSI pair. Because the spatial information is
largely preserved in the HSI, we set the ratio of $K_i$ to $L_i-K_i$
to be $15:1$ in order to have the atoms extracted from the MSI be
dominant; that is, $K_1=K_2=240$ for the simulated experiments, and
$K_1=K_2=375$ for the real experiments. As for the spectral rank $C$,
we fine-tune it on both the URBAN and Houston2013 datasets as depicted
in Fig.~\ref{fig:Tuning process}, setting $C=12$ in consideration of
both performance and computational efficiency. To determine the BGS
shape parameter $\mathbf{t}$, we note that the size of the elementary
block must be significantly smaller than the overall core tensor, and
the spatial shape must be much larger than the spectral shape in
accordance with the lower rank in the spectral domain. These lead to
the condition that $t_1\ll L_1$, $t_2\ll L_2$ and $t_1, t_2 \gg
t_3$. Thus, we use $\mathbf{t}=[16,16,3]$ and $\mathbf{t}=[20,20,3]$
for the simulated and real experiments, respectively.

\subsection{HSR with IGK}
\begin{figure}[t]
	\centering
	\setlength{\tabcolsep}{0.3mm}
	\begin{tabular}{m{0.19\figurewidth}m{0.19\figurewidth}m{0.19\figurewidth}m{0.19\figurewidth}m{0.19\figurewidth}}
		\includegraphics[width=0.19\figurewidth]{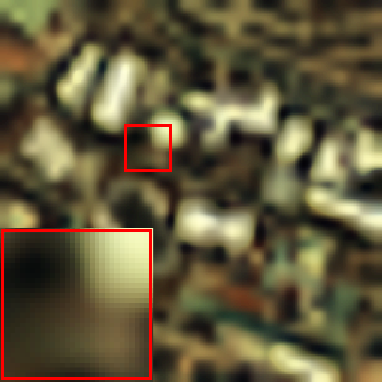}&
		\includegraphics[width=0.19\figurewidth]{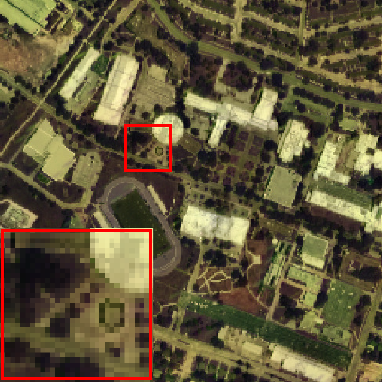}  &
		\includegraphics[width=0.19\figurewidth]{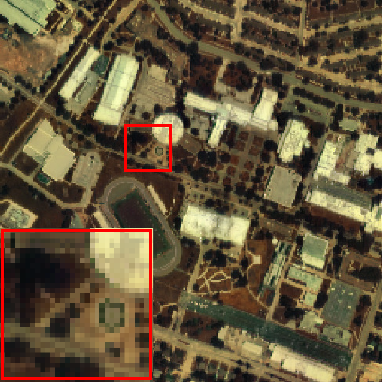}    &
		\includegraphics[width=0.19\figurewidth]{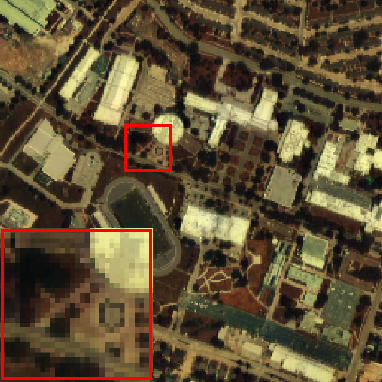}  &
		\includegraphics[width=0.19\figurewidth]{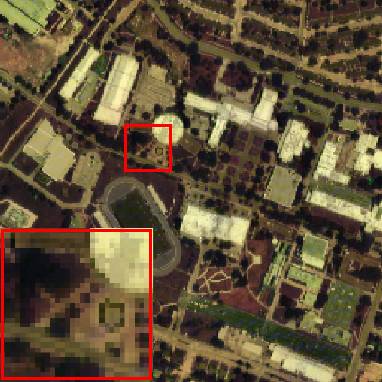}\\
		\includegraphics[width=0.19\figurewidth]{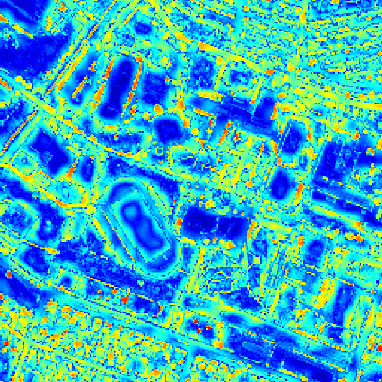} &
		\includegraphics[width=0.19\figurewidth]{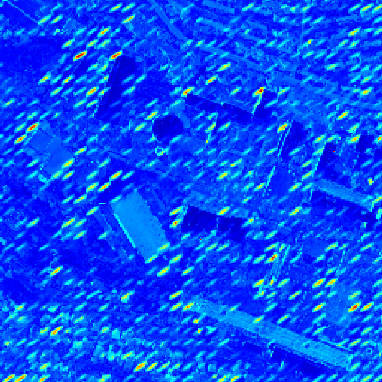}  &
		\includegraphics[width=0.19\figurewidth]{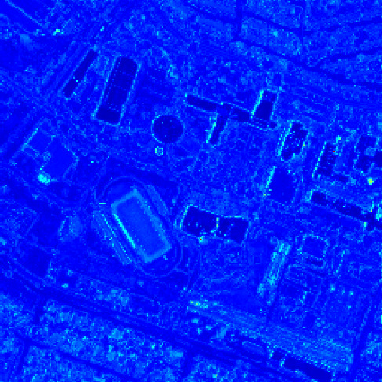}    &
		\includegraphics[width=0.19\figurewidth]{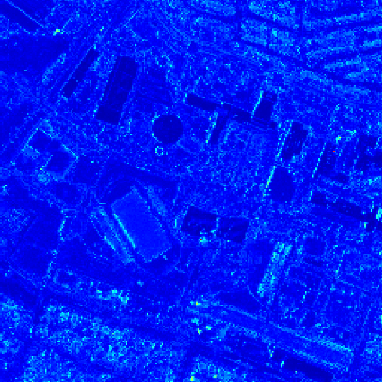}   &
		\includegraphics[width=0.19\figurewidth]{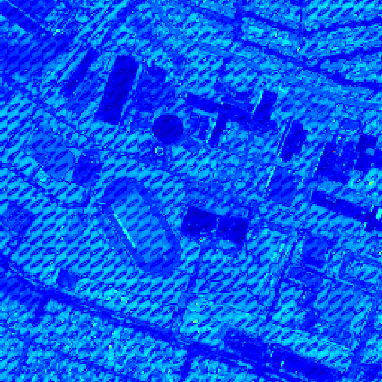} \\
		\multicolumn{1}{c}{\footnotesize{Bicubic}}                                       &\multicolumn{1}{c}{\footnotesize{Hysure}}                                       & \multicolumn{1}{c}{\footnotesize{{SURE}}} & \multicolumn{1}{c}{\footnotesize{LTMR}} & \multicolumn{1}{c}{\footnotesize{LRTA}}
		\\
		\includegraphics[width=0.19\figurewidth]{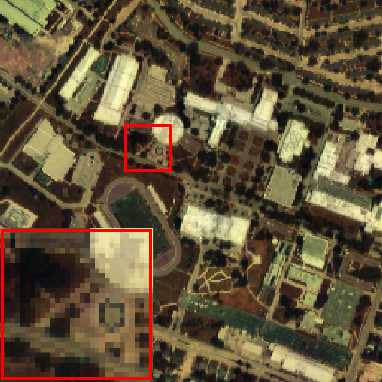} &
		\includegraphics[width=0.19\figurewidth]{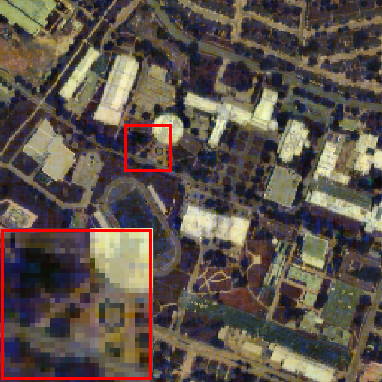}  &
		\includegraphics[width=0.19\figurewidth]{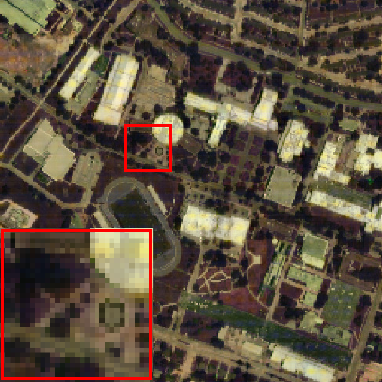}    &
		\includegraphics[width=0.19\figurewidth]{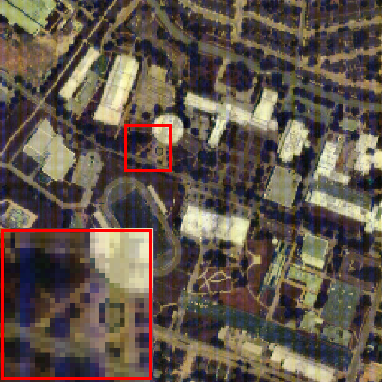}   &
		\includegraphics[width=0.19\figurewidth]{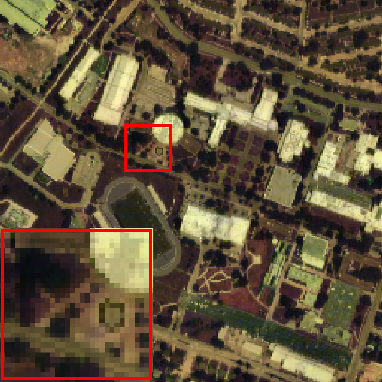}\\
		\includegraphics[width=0.19\figurewidth]{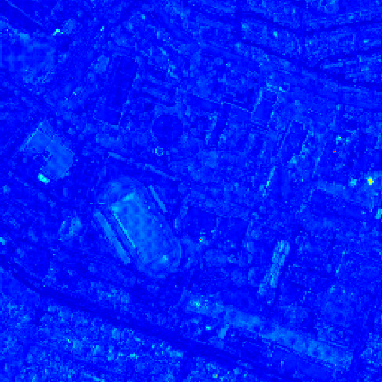} &
		\includegraphics[width=0.19\figurewidth]{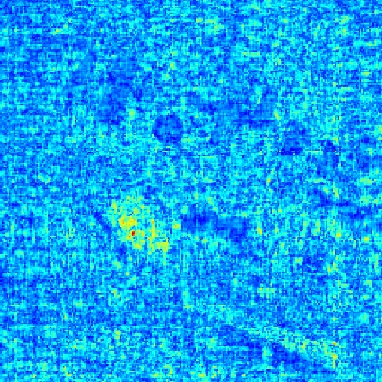}  &
		\includegraphics[width=0.19\figurewidth]{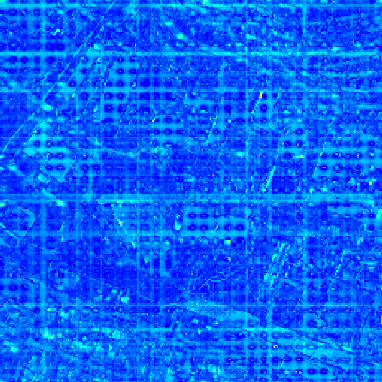}    &
		\includegraphics[width=0.19\figurewidth]{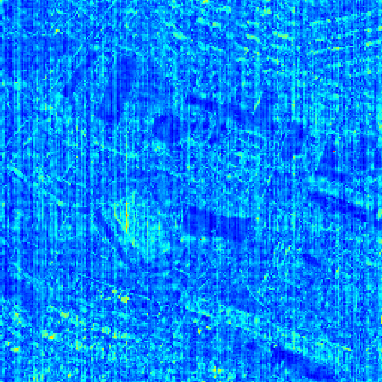}   &
		\includegraphics[width=0.19\figurewidth]{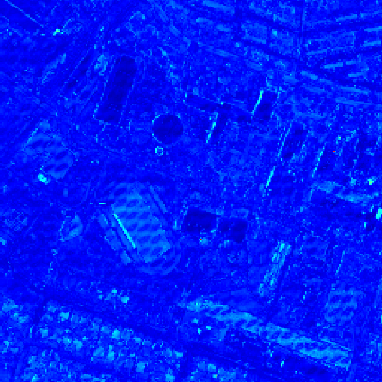} \\
		\multicolumn{1}{c}{\footnotesize{ZSL}}                                       &\multicolumn{1}{c}{\footnotesize{STEREO}}                                       & \multicolumn{1}{c}{\footnotesize{CSTF}} & \multicolumn{1}{c}{\footnotesize{FSTRD}} & \multicolumn{1}{c}{\footnotesize{BGS-GTF-HSR}} \\
		\multicolumn{5}{c}{\includegraphics[width=0.6\linewidth]{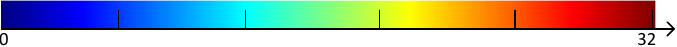}}-
	\end{tabular}
	\caption{\label{fig:Houston visualization 1} Non-blind fusion results and error maps for the Houston2013 dataset with AGK~3 (Scene 2). {Pseudo-color is composed of bands 30, 20 and 10.}}
\end{figure}

\begin{figure}[t]
	\centering
	\setlength{\tabcolsep}{0.15mm}
	\begin{tabular}{m{0.19\figurewidth}m{0.19\figurewidth}m{0.19\figurewidth}m{0.19\figurewidth}m{0.19\figurewidth}}
		\includegraphics[width=0.19\figurewidth]{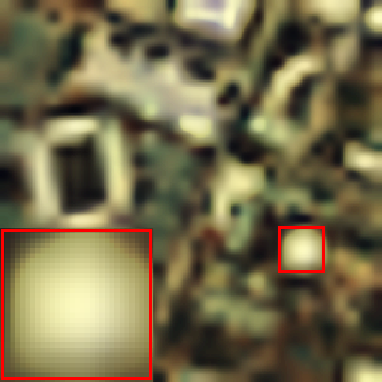} &
		\includegraphics[width=0.19\figurewidth]{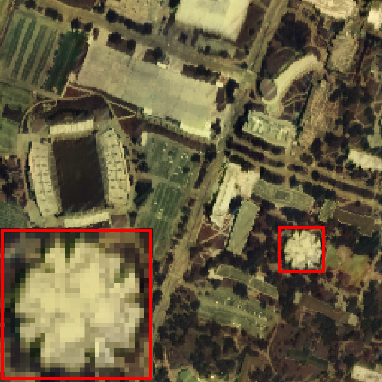}  &
		\includegraphics[width=0.19\figurewidth]{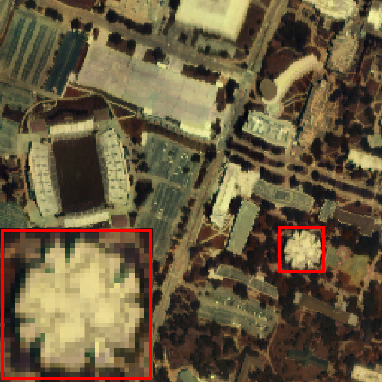}    &
		\includegraphics[width=0.19\figurewidth]{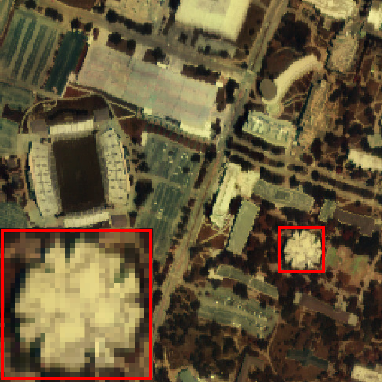}   &
		\includegraphics[width=0.19\figurewidth]{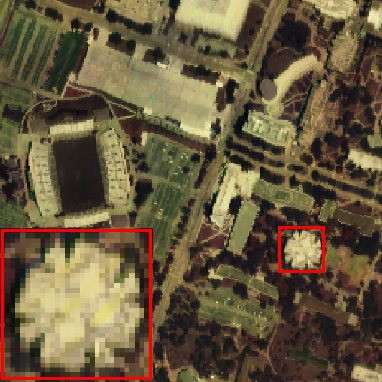}\\
		\includegraphics[width=0.19\figurewidth]{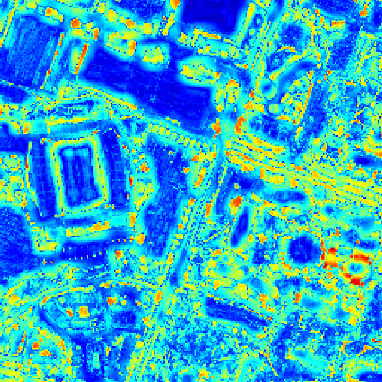} &
		\includegraphics[width=0.19\figurewidth]{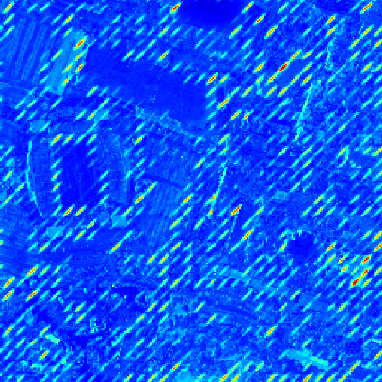}  &
		\includegraphics[width=0.19\figurewidth]{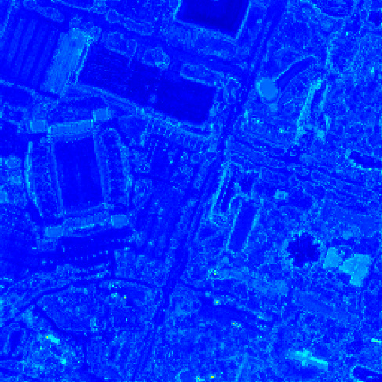}    &
		\includegraphics[width=0.19\figurewidth]{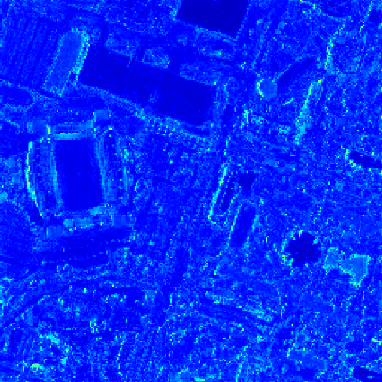}   &
		\includegraphics[width=0.19\figurewidth]{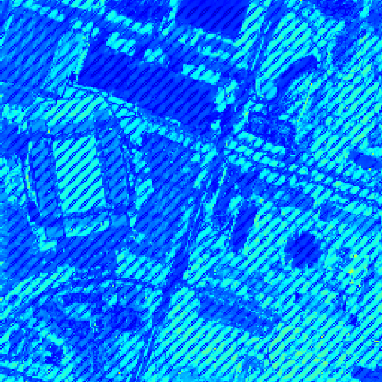} \\
		\multicolumn{1}{c}{\footnotesize{Bicubic}}                                       &\multicolumn{1}{c}{\footnotesize{Hysure}}                                       & \multicolumn{1}{c}{\footnotesize{{SURE}}} & \multicolumn{1}{c}{\footnotesize{LTMR}} & \multicolumn{1}{c}{\footnotesize{LRTA}}
		\\
		\includegraphics[width=0.19\figurewidth]{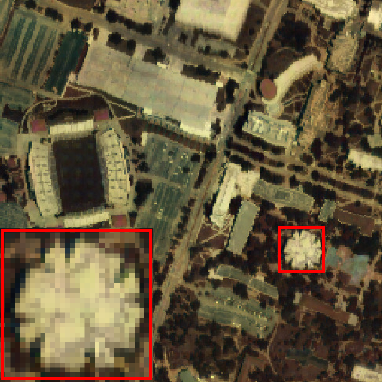} &
		\includegraphics[width=0.19\figurewidth]{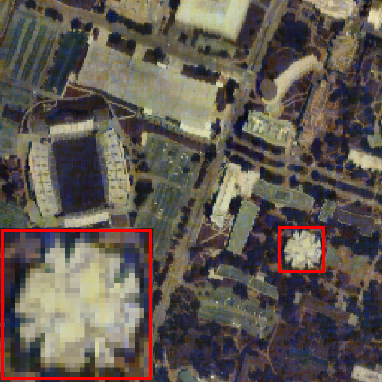}  &
		\includegraphics[width=0.19\figurewidth]{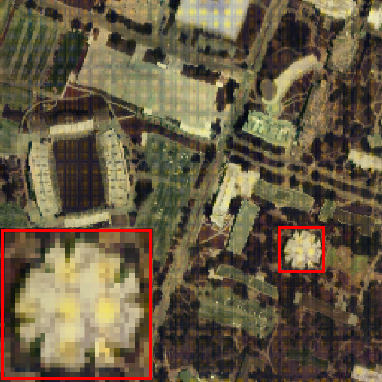}    &
		\includegraphics[width=0.19\figurewidth]{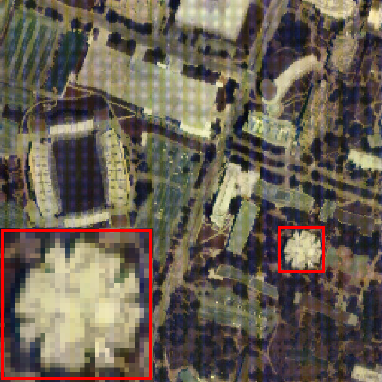}   &
		\includegraphics[width=0.19\figurewidth]{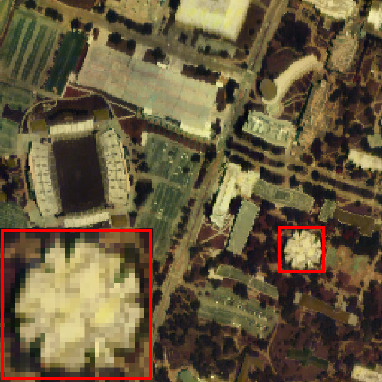}\\
		\includegraphics[width=0.19\figurewidth]{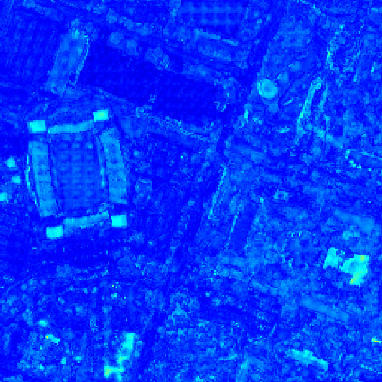} &
		\includegraphics[width=0.19\figurewidth]{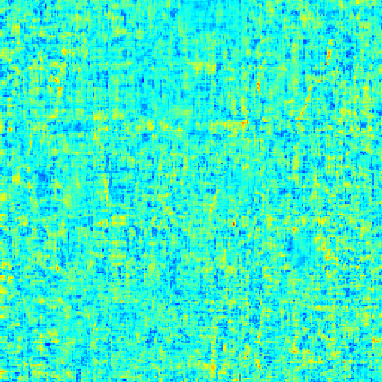}  &
		\includegraphics[width=0.19\figurewidth]{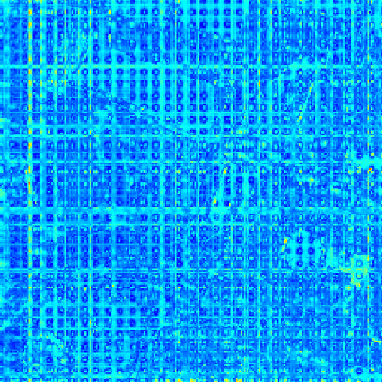}    &
		\includegraphics[width=0.19\figurewidth]{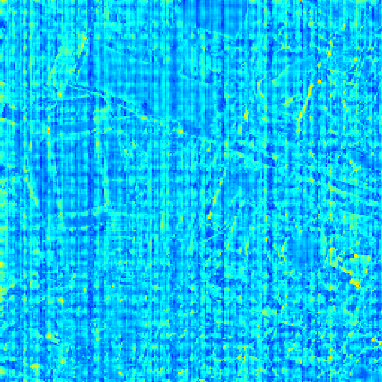}   &
		\includegraphics[width=0.19\figurewidth]{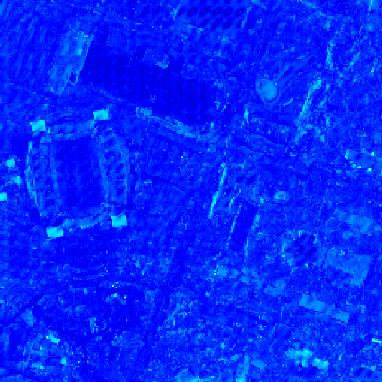} \\
		\multicolumn{1}{c}{\footnotesize{ZSL}}                                       &\multicolumn{1}{c}{\footnotesize{STEREO}}                                       & \multicolumn{1}{c}{\footnotesize{CSTF}} & \multicolumn{1}{c}{\footnotesize{FSTRD}} & \multicolumn{1}{c}{\footnotesize{BGS-GTF-HSR}} \\
		\multicolumn{5}{c}{\includegraphics[width=0.6\linewidth]{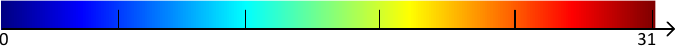}}
	\end{tabular}
	\caption{\label{fig:Houston visualization 2} Non-blind fusion results and error maps for the Houston2013 dataset with AGK 4 (Scene 5). {Pseudo-color is composed of bands 30, 20 and 10.}}
\end{figure}

In the results presented in this section, we employ the URBAN dataset,
using the SRI described in Sec.~\ref{sec:resultssetup} as the ground truth
and generating an HSI and MSI from in artificially. Specifically, we
blurred the SRI with a $9\times 9$ IGK with standard
deviation $3.3973$ and spatially downsampled it by a factor of 8 to
simulate an HSI of size $32\times 32\times 162$. The
$256\times256\times 6$ MSI is generated by averaging the SRI bands
falling into the wavelength between $450$--$520$, $520$--$600$,
$630$--$690$, $760$--$900$, $1,550$--$1750$, and $2,080$--$2,350$\,nm
to simulate the spectral coverage of the USGS/NASA Landsat7 satellite
\cite{LDF2018}.

For non-blind performance, the degradation matrices are assumed to be
known.  Fig.~\ref{fig:URBAN visualization} presents the fusion results
for bands 30, 20, and 10 as pseudo-color images, along with
corresponding error maps generated by pixel-wise SAM between the
results and ground-truth SRI.  Visually, the proposed BGS-GTF-HSR has
the error map with the lowest brightness and least highlighted area,
while SURE, ZSL, and CSTF yield competitive results.  This conclusion
is further confirmed by quantitative evaluation in
Table~\ref{tab:URBAN metrics}---we note that the proposed BGS-GTF-HSR
obtains the best values for all the metrics considered for
non-blind performance.

Finally, we assess blind fusion performance, wherein the degradation
matrices are unknown to the techniques. The corresponding visual
results are given in Fig.~\ref{fig:URBAN blind visualization} while
quantitative performance is tabulated Table~\ref{tab:URBAN
	metrics}. We note that the fusion performance for all the techniques
suffers under the blind scenario, as can be observed in both
Table~\ref{tab:URBAN metrics} and in the error maps of
Fig.~\ref{fig:URBAN blind visualization}. However, the proposed
BGS-GTF-HSR still outperforms other methods since it has the best
values for most quantitative metrics, and its error map is darkest.
\subsection{HSR with AGK}
\begin{figure}[t]
	\centering
	\setlength{\tabcolsep}{0.15mm}
	\begin{tabular}{cm{0.9\figurewidth}}\rotatebox{90}{{PSNR}}&
		\includegraphics[width=0.9\figurewidth]{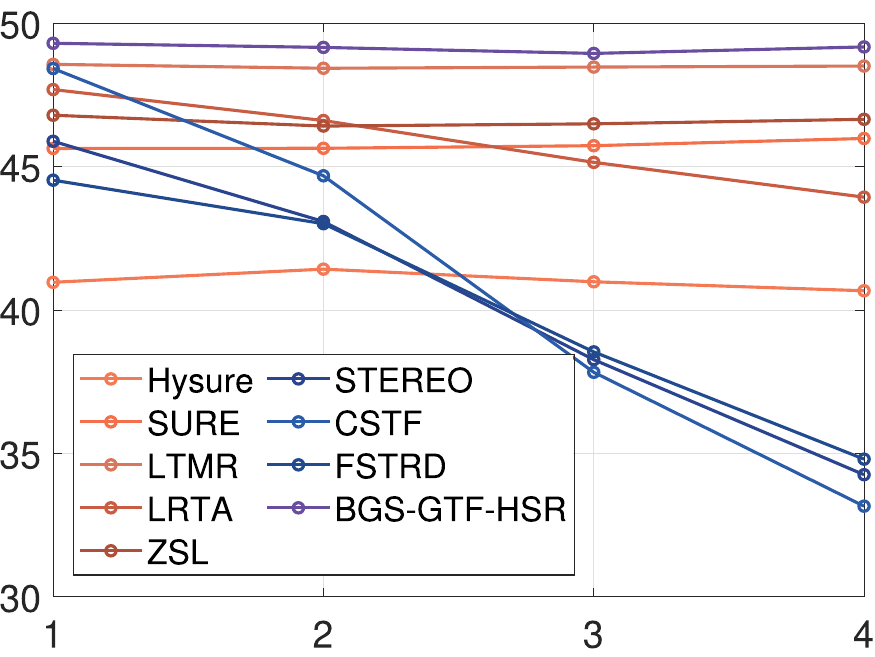} \\\quad&
		\multicolumn{1}{c}{{AGK}}
	\end{tabular}
	\caption{\label{fig:PSNRvsAGK} PSNR performance for varying AGK (1
		through 4) for the Houston2013 dataset, demonstrating how the
		methods respond to the increase in the rank of the blurring
		kernel.}
\end{figure}

\begin{table}[t]
	\centering
	\renewcommand{\arraystretch}{1.3}
	\tabcolsep=1.55mm
	\caption{Performance on the Houston2013 dataset with AGK~1 (averaged over 20 scenes)}
	\begin{tabular}{c|c|cccc}
		\toprule
		\hline
		\multirow{2}{*}{\textbf{Setup}} & \multirow{2}{*}{\textbf{Methods}} & \multicolumn{4}{c}{\textbf{Quality Indices}} \\\cline{3-6} &\quad
		&  PSNR$\uparrow$                 & RMSE$\downarrow$    & SAM$\downarrow$    & SSIM$\uparrow$       \\\hline
		\multirow{9}{*}{\textbf{Non-blind}} &        Hysure           &  40.9721   &  2.6210   &   2.4319  &   0.9801 \\\quad& SURE
		&         45.6402          &  1.5051   &   1.4854  &  0.9877      \\
		\quad	&       LTMR            &  48.5731   & 1.3353    & 1.3054    &   0.9903 \\\quad
		&         LRTA          &  47.6939   &  1.5624   &   1.6141  &  0.9876  \\\quad
		&          ZSL         &  46.7990   &  1.3749   &  1.3571   & \textbf{0.9920}   \\\quad
		&         STEREO          &   45.8905  &  1.8110   &  1.9726   &  0.9786  \\\quad
		&           CSTF        &  48.4274   &   1.3367  &   1.4734  &  0.9893  \\\quad
		&        FSTRD           &  44.5336   &   2.3053  &  2.5005   &  0.9666  \\\quad
		&          BGS-GTF-HSR         &   \textbf{49.3079}  &  \textbf{1.1857}   &  \textbf{1.2251}   & 0.9919   \\
		\hline
		\multirow{9}{*}{\textbf{{Blind}}} &       Hysure           &  41.4202   &  2.5011   &   2.3992  &   0.9819 \\\quad& SURE
		&         45.3860          &  1.5556   &   1.5274  &  0.9871      \\
		\quad	&       LTMR            &  48.2073   & 1.3768    & 1.3314    &   0.9901 \\\quad
		&         LRTA          &  48.3274   &  1.3811   &   1.4247  &  0.9898  \\\quad
		&          ZSL         &  45.4437   &  2.4206   &  2.3740   & 0.9616   \\\quad
		&         B-STE          &   44.4886  &  1.9713   &  2.1890   &  0.9770  \\\quad
		&           CSTF        &  46.8317   &   1.4326  &   1.4425  &  0.9899  \\\quad
		&        FSTRD           &  43.7708   &   2.4287  &  2.5701   &  0.9651  \\\quad
		&          BGS-GTF-HSR         &   \textbf{49.2943}  &  \textbf{1.1832}   &  \textbf{1.2218}   & \textbf{0.9918}   \\
		\hline
		\bottomrule
	\end{tabular}
	\label{tab:Houston metrics1}
\end{table}

\begin{table}[bthp!]
	\centering
	\renewcommand{\arraystretch}{1.3}
	\tabcolsep=1.55mm
	\caption{Performance on the Houston2013 dataset with AGK~2 (averaged over 20 scenes)}
	\begin{tabular}{c|c|cccc}
		\toprule
		\hline
		\multirow{2}{*}{\textbf{Setup}} & \multirow{2}{*}{\textbf{Methods}} & \multicolumn{4}{c}{\textbf{Quality Indices}} \\\cline{3-6} &\quad
		&  PSNR$\uparrow$                 & RMSE$\downarrow$    & SAM$\downarrow$    & SSIM$\uparrow$       \\\hline
		\multirow{9}{*}{\textbf{Non-blind}} &        Hysure           &  41.4307   &  2.4503   &   2.3813  &   0.9805 \\\quad& SURE
		&         45.6455          &  1.4881   &   1.4758  &  0.9882      \\
		\quad	&       LTMR            &  48.4367   & 1.3678    & 1.3320    &   0.9900 \\\quad
		&         LRTA          &  46.6136   &  1.8789   &   1.8861  &  0.9835  \\\quad
		&          ZSL         &  46.4207   &  1.4512   &  1.4595   & 0.9917   \\\quad
		&         STEREO          &   43.0912  &  2.4394   &  2.5337   &  0.9682  \\\quad
		&           CSTF        &  44.6860   &   1.8134  &   1.7637  &  0.9862  \\\quad
		&        FSTRD           &  43.0151   &   2.5345  &  2.6199   &  0.9624  \\\quad
		&          BGS-GTF-HSR         &   \textbf{49.1637}  &  \textbf{{1.2157}}   &  \textbf{1.2544}   & \textbf{0.9918}   \\
		\hline
		\multirow{9}{*}{\textbf{{Blind}}} &       Hysure           &  41.2449   &  2.5472   &   2.4611  &   0.9806 \\\quad& SURE
		&         44.9667          &  1.7049   &   1.5805  &  0.9868      \\
		\quad	&       LTMR            &  47.6852   & 1.5452    & 1.4171    &   0.9890 \\\quad
		&         LRTA          &  47.6156   &  1.6105   &   1.5690  &  0.9882  \\\quad
		&          ZSL         &  46.6373   &  1.4030   &  1.3796   & 0.9908   \\\quad
		&         B-STE          &   44.3812  &  1.9979   &  2.2201   &  0.9764  \\\quad
		&           CSTF        &  46.1899   &   1.6083  &   1.5634  &  0.9881  \\\quad
		&        FSTRD           &  43.2479   &   2.5325  &  2.6075   &  0.9637  \\\quad
		&          BGS-GTF-HSR         &   \textbf{48.9073}  &  \textbf{1.3078}   &  \textbf{1.3065}   & \textbf{0.9911}   \\
		\hline
		\bottomrule
	\end{tabular}
	\label{tab:Houston metrics2}
\end{table}

\begin{table}[t]
	\centering
	\renewcommand{\arraystretch}{1.3}
	\tabcolsep=1.55mm
	\caption{Performance on the Houston2013 dataset with AGK~3 (averaged over 20 scenes)}
	\begin{tabular}{c|c|cccc}
		\toprule
		\hline
		\multirow{2}{*}{\textbf{Setup}} & \multirow{2}{*}{\textbf{Methods}} & \multicolumn{4}{c}{\textbf{Quality Indices}} \\\cline{3-6} &\quad
		&  PSNR$\uparrow$                 & RMSE$\downarrow$    & SAM$\downarrow$    & SSIM$\uparrow$       \\\hline
		\multirow{9}{*}{\textbf{Non-blind}} &        Hysure           &  40.9898   &  2.5609   &   2.3835  &   0.9779 \\\quad& SURE
		&         45.7364          &  1.4738   &   1.4674  &  0.9885      \\
		\quad	&       LTMR            &  48.4784   & 1.3547    & 1.3286    &   0.9901 \\\quad
		&         LRTA          &  45.1557   &  2.4806   &   2.3766  &  0.9753  \\\quad
		&          ZSL         &  46.4985   &  1.4322   &  1.4222   & \textbf{0.9918}   \\\quad
		&         STEREO          &   38.2735  &  4.6097   &  4.2564   &  0.9429  \\\quad
		&           CSTF        &  37.8331   &   3.7743  &   2.6188  &  0.9733  \\\quad
		&        FSTRD           &  38.5440   &   4.0143  &  3.5216   &  0.9406  \\\quad
		&          BGS-GTF-HSR         &   \textbf{49.1722}  &  \textbf{{1.2147}}   &  \textbf{1.2538}   & 0.9917   \\
		\hline
		\multirow{9}{*}{\textbf{{Blind}}} &       Hysure           &  40.2239   &  2.8393   &   2.6308  &   0.9756 \\\quad& SURE
		&         45.7367          &  1.4738   &   1.4674  &  0.9885      \\
		\quad	&       LTMR            &  47.7609   & 1.4880    & 1.3899    &   0.9893 \\\quad
		&         LRTA          &  45.3971   &  2.3589   &   2.1870  &  0.9786  \\\quad
		&          ZSL         &  46.4985   &  1.4322   &  1.4222   & 0.9910   \\\quad
		&         B-STE          &   44.1948  &  2.0406   &  2.2681   &  0.9757  \\\quad
		&           CSTF        &  41.6336   &   2.4721  &   2.0977  &  0.9830  \\\quad
		&        FSTRD           &  40.4948   &   3.1464  &  3.0553   &  0.9566  \\\quad
		&          BGS-GTF-HSR         &   \textbf{48.9543}  &  \textbf{1.2798}   &  \textbf{1.2898}   & \textbf{0.9913}   \\
		\hline
		\bottomrule
	\end{tabular}
	\label{tab:Houston metrics3}
\end{table}

\begin{table}[t]
	\centering
	\renewcommand{\arraystretch}{1.3}
	\tabcolsep=1.55mm
	\caption{Performance on the Houston2013 dataset with AGK~4 (averaged over 20 scenes)}
	\begin{tabular}{c|c|cccc}
		\toprule
		\hline
		\multirow{2}{*}{\textbf{Setup}} & \multirow{2}{*}{\textbf{Methods}} & \multicolumn{4}{c}{\textbf{Quality Indices}} \\\cline{3-6} &\quad
		&  PSNR$\uparrow$                 & RMSE$\downarrow$    & SAM$\downarrow$    & SSIM$\uparrow$       \\\hline
		\multirow{9}{*}{\textbf{Non-blind}} &        Hysure           &  40.6767   &  2.6430   &   2.3746  &   0.9757 \\\quad& SURE
		&         45.9881          &  1.4210   &   1.4318  &  0.9894      \\
		\quad	&       LTMR            &  48.5138   & 1.3441    & 1.3206    &   0.9903 \\\quad
		&         LRTA          &  43.9408   &  3.1656   &   2.9178  &  0.9657  \\\quad
		&          ZSL         &  46.6561   &  1.4023   &  1.3884   & 0.9917   \\\quad
		&         STEREO          &   34.2605  &  7.7812   &  6.7030   &  0.9104  \\\quad
		&           CSTF        &  33.1642   &   6.0793  &   3.6365  &  0.9313  \\\quad
		&        FSTRD           &  34.8040   &   6.1536  &  4.7125   &  0.9042  \\\quad
		&          BGS-GTF-HSR         &   \textbf{49.1804}  &  \textbf{{1.2131}}   &  \textbf{1.2524}   & \textbf{0.9919}   \\
		\hline
		\multirow{9}{*}{\textbf{{Blind}}} &       Hysure           &  39.9899   &  2.9175   &   2.6524  &   0.9742 \\\quad& SURE
		&         45.9881          &  1.4210   &   1.4318  &  0.9894      \\
		\quad	&       LTMR            &  47.8431   & 1.4443    & 1.3709    &   0.9895 \\\quad
		&         LRTA          &  45.4547   &  2.3440   &   2.1988  &  0.9783  \\\quad
		&          ZSL         &  46.6561   &  1.4023   &  1.3884   & 0.9914   \\\quad
		&         B-STE          &   44.1122  &  2.0621   &  2.2928   &  0.9752  \\\quad
		&           CSTF        &  40.5917   &   2.7768  &   2.2238  &  0.9798  \\\quad
		&        FSTRD           &  39.5407   &   3.5931  &  3.5120   &  0.9436  \\\quad
		&          BGS-GTF-HSR         &   \textbf{49.0451}  &  \textbf{1.2468}   &  \textbf{1.2685}   & \textbf{0.9916}   \\
		\hline
		\bottomrule
	\end{tabular}
	\label{tab:Houston metrics4}
\end{table}

\begin{figure}[t]
	\centering
	\setlength{\tabcolsep}{0.3mm}
	\begin{tabular}{m{0.19\figurewidth}m{0.19\figurewidth}m{0.19\figurewidth}m{0.19\figurewidth}m{0.19\figurewidth}}
		\includegraphics[width=0.19\figurewidth]{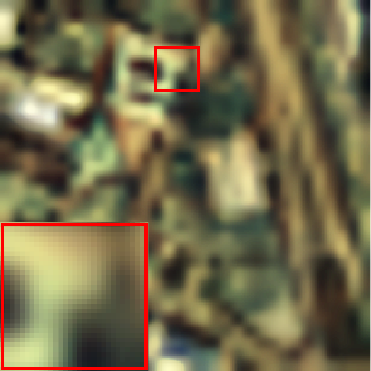} &
		\includegraphics[width=0.19\figurewidth]{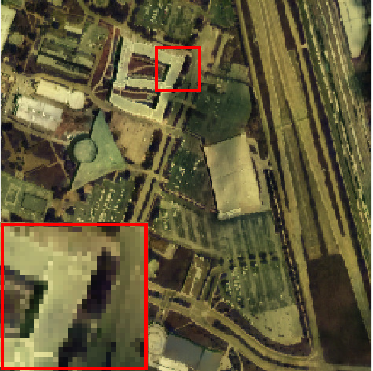}  &
		\includegraphics[width=0.19\figurewidth]{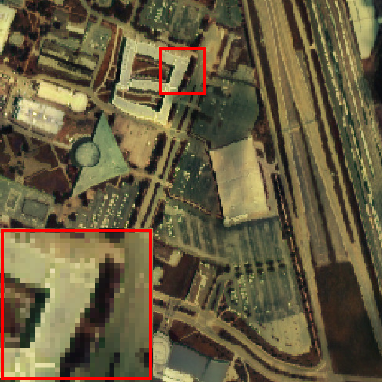}    &
		\includegraphics[width=0.19\figurewidth]{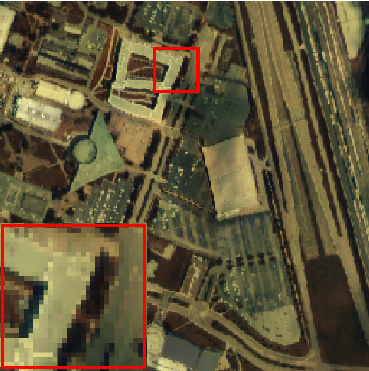}   &
		\includegraphics[width=0.19\figurewidth]{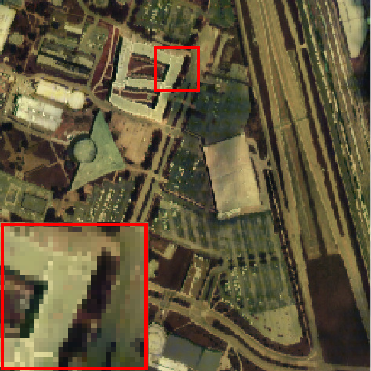}\\
		\includegraphics[width=0.19\figurewidth]{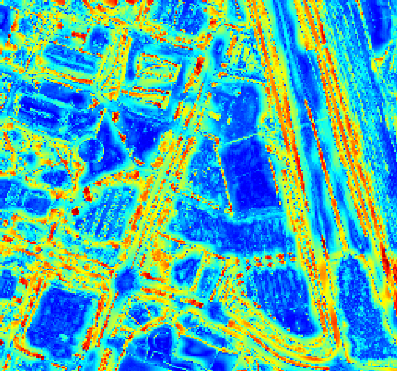} &
		\includegraphics[width=0.19\figurewidth]{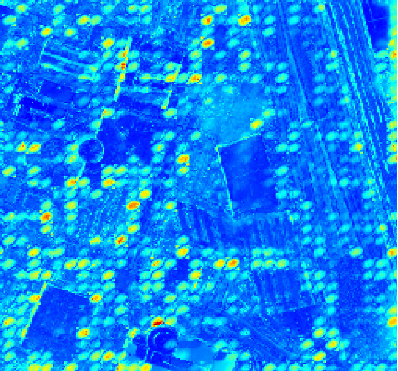}  &
		\includegraphics[width=0.19\figurewidth]{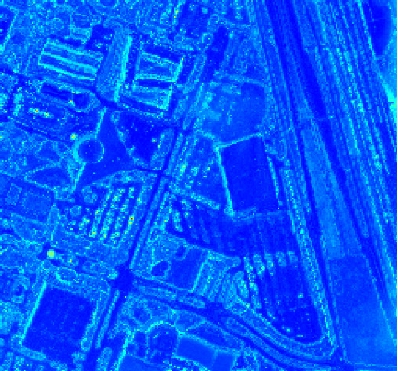}    &
		\includegraphics[width=0.19\figurewidth]{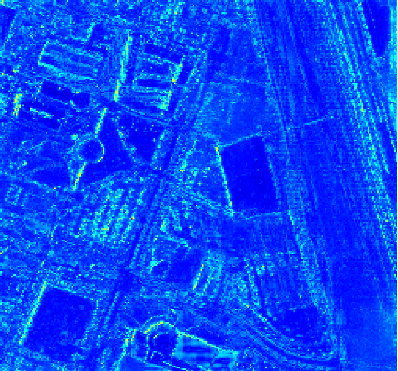}   &
		\includegraphics[width=0.19\figurewidth]{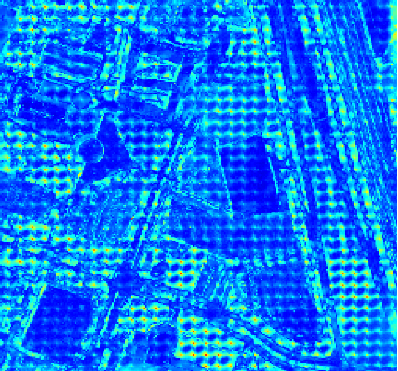} \\
		\multicolumn{1}{c}{\footnotesize{Bicubic}}                                       &\multicolumn{1}{c}{\footnotesize{Hysure}}                                       & \multicolumn{1}{c}{\footnotesize{{SURE}}} & \multicolumn{1}{c}{\footnotesize{LTMR}} & \multicolumn{1}{c}{\footnotesize{LRTA}}
		\\
		\includegraphics[width=0.19\figurewidth]{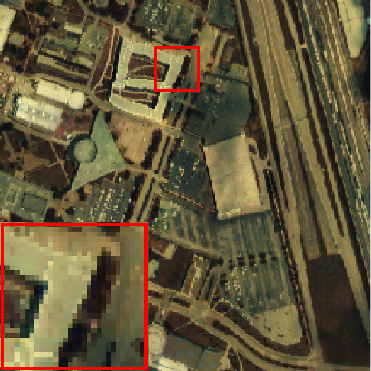} &
		\includegraphics[width=0.19\figurewidth]{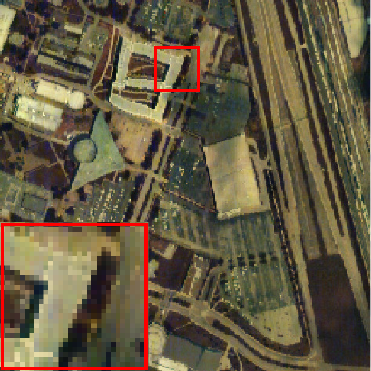}  &
		\includegraphics[width=0.19\figurewidth]{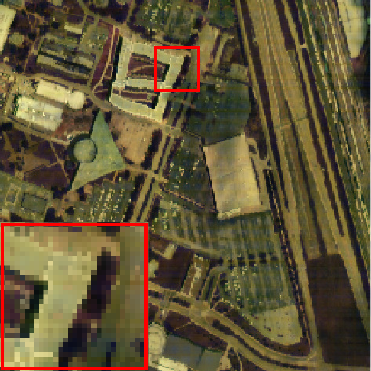}    &
		\includegraphics[width=0.19\figurewidth]{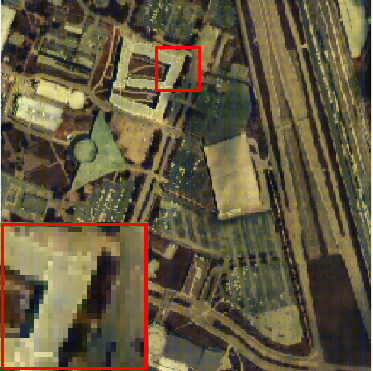}   &
		\includegraphics[width=0.19\figurewidth]{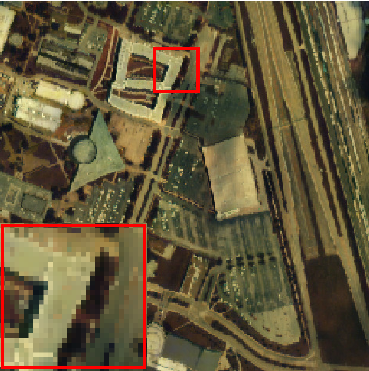}\\
		\includegraphics[width=0.19\figurewidth]{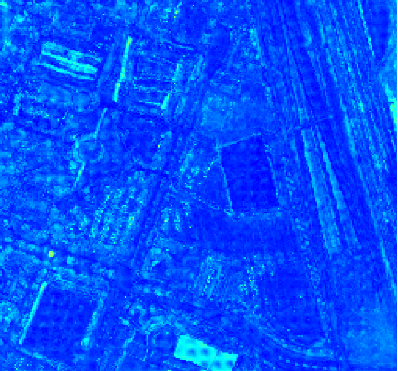} &
		\includegraphics[width=0.19\figurewidth]{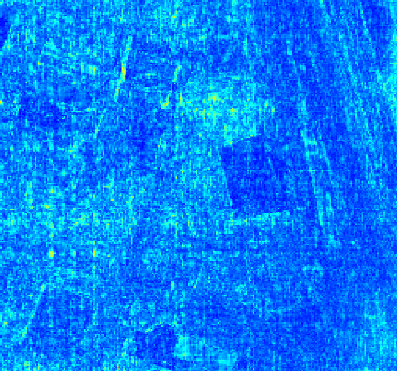}  &
		\includegraphics[width=0.19\figurewidth]{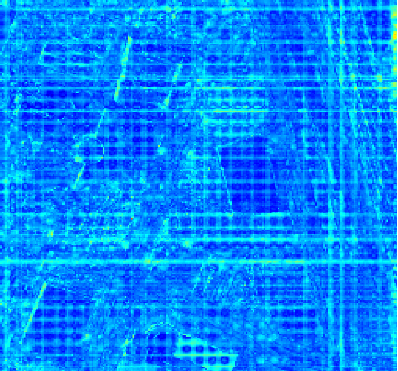}    &
		\includegraphics[width=0.19\figurewidth]{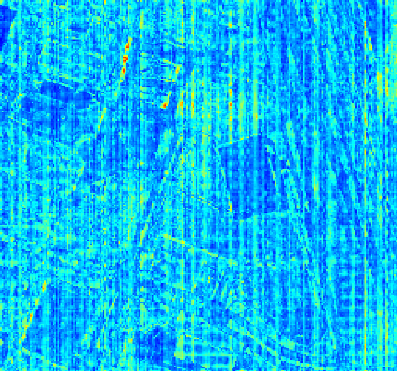}   &
		\includegraphics[width=0.19\figurewidth]{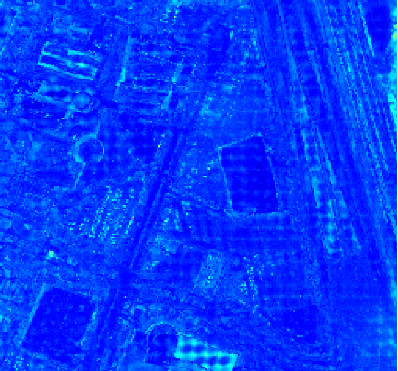} \\
		\multicolumn{1}{c}{\footnotesize{ZSL}}                                       &\multicolumn{1}{c}{\footnotesize{B-STE}}                                       & \multicolumn{1}{c}{\footnotesize{CSTF}} & \multicolumn{1}{c}{\footnotesize{FSTRD}} & \multicolumn{1}{c}{\footnotesize{BGS-GTF-HSR}} \\
		\multicolumn{5}{c}{\includegraphics[width=0.6\linewidth]{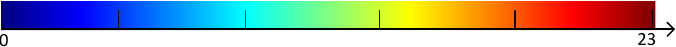}}
	\end{tabular}
	\caption{\label{fig:Houston visualization 3} Blind fusion results and error maps for the Houston2013 dataset with AGK~3 (Scene 17). {Pseudo-color is composed of bands 30, 20 and 10.}}
\end{figure}

\begin{figure}[bthp!]
	\centering
	\setlength{\tabcolsep}{0.15mm}
	\begin{tabular}{m{0.19\figurewidth}m{0.19\figurewidth}m{0.19\figurewidth}m{0.19\figurewidth}m{0.19\figurewidth}}
		\includegraphics[width=0.19\figurewidth]{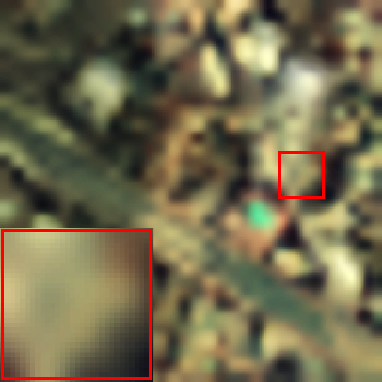} &
		\includegraphics[width=0.19\figurewidth]{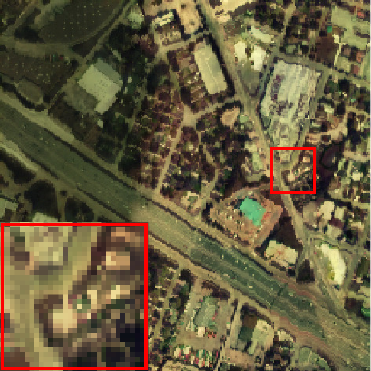}  &
		\includegraphics[width=0.19\figurewidth]{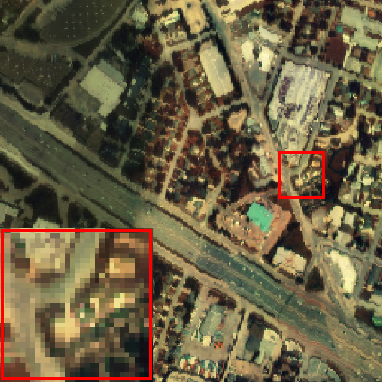}    &
		\includegraphics[width=0.19\figurewidth]{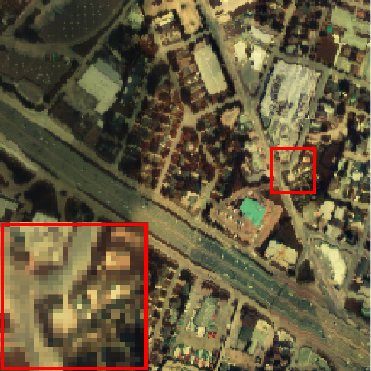}   &
		\includegraphics[width=0.19\figurewidth]{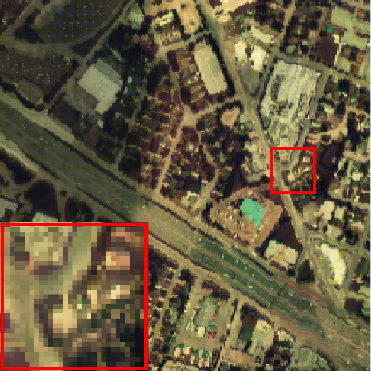}\\
		\includegraphics[width=0.19\figurewidth]{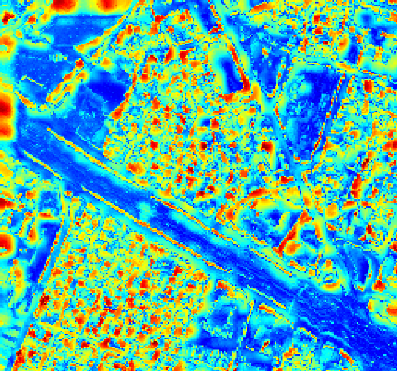} &
		\includegraphics[width=0.19\figurewidth]{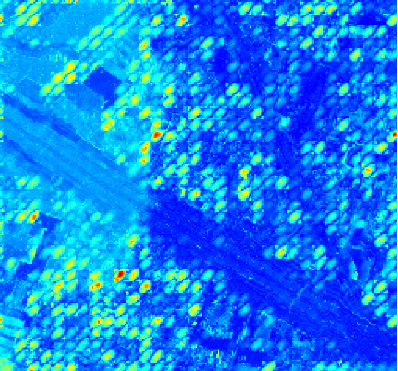}  &
		\includegraphics[width=0.19\figurewidth]{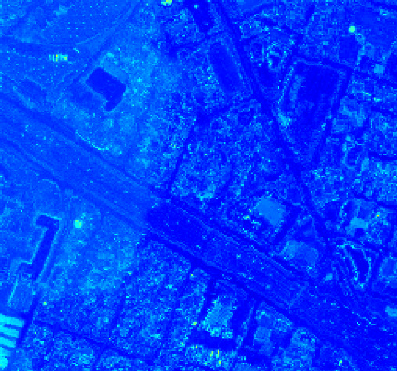}    &
		\includegraphics[width=0.19\figurewidth]{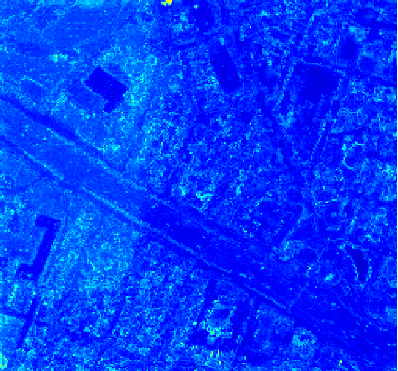}   &
		\includegraphics[width=0.19\figurewidth]{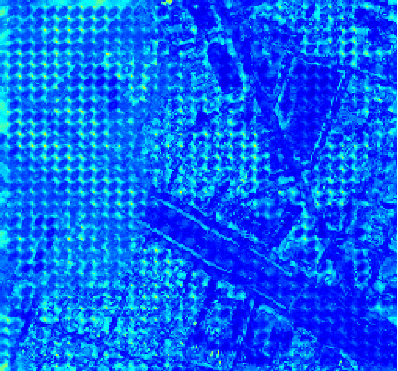} \\
		\multicolumn{1}{c}{\footnotesize{Bicubic}}                                       &\multicolumn{1}{c}{\footnotesize{Hysure}}                                       & \multicolumn{1}{c}{\footnotesize{{SURE}}} & \multicolumn{1}{c}{\footnotesize{LTMR}} & \multicolumn{1}{c}{\footnotesize{LRTA}}
		\\
		\includegraphics[width=0.19\figurewidth]{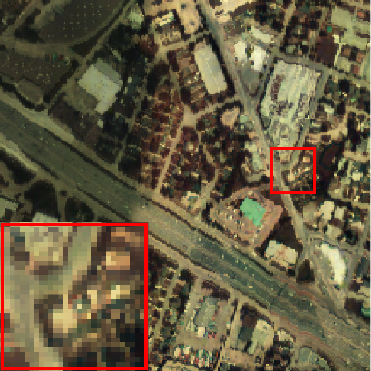} &
		\includegraphics[width=0.19\figurewidth]{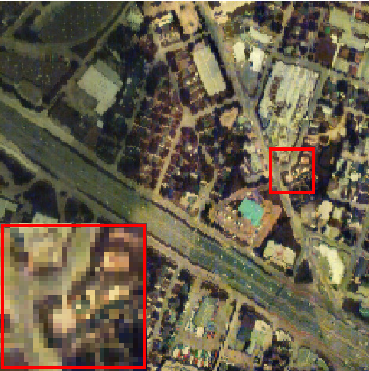}  &
		\includegraphics[width=0.19\figurewidth]{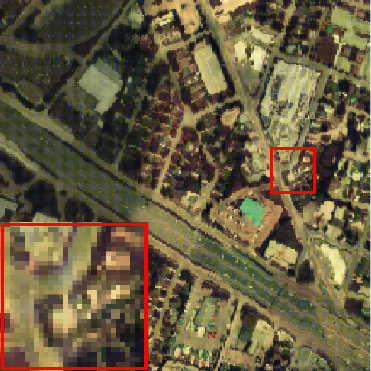}    &
		\includegraphics[width=0.19\figurewidth]{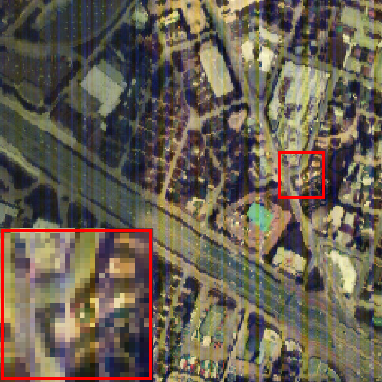}   &
		\includegraphics[width=0.19\figurewidth]{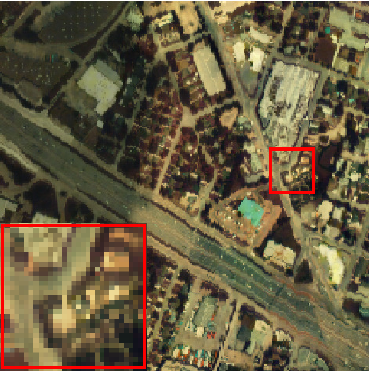}\\
		\includegraphics[width=0.19\figurewidth]{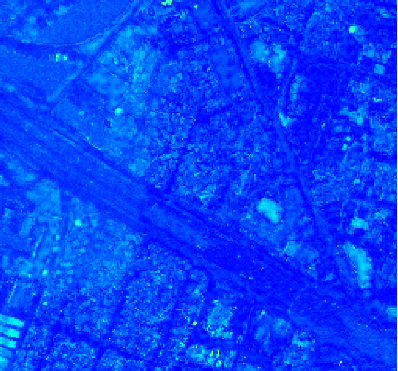} &
		\includegraphics[width=0.19\figurewidth]{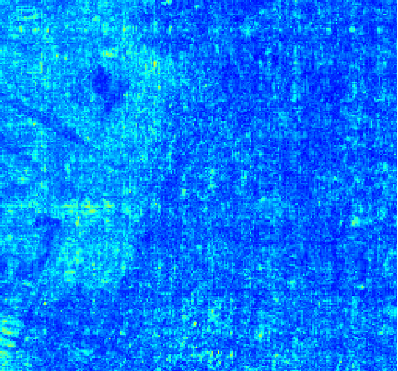}  &
		\includegraphics[width=0.19\figurewidth]{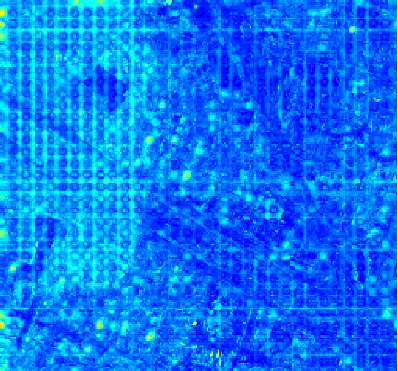}    &
		\includegraphics[width=0.19\figurewidth]{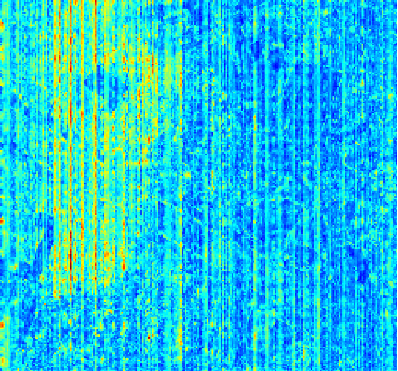}   &
		\includegraphics[width=0.19\figurewidth]{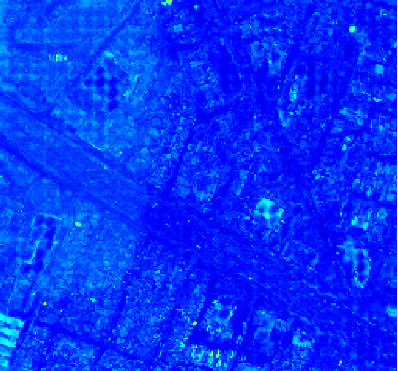} \\
		\multicolumn{1}{c}{\footnotesize{ZSL}}                                       &\multicolumn{1}{c}{\footnotesize{B-STE}}                                       & \multicolumn{1}{c}{\footnotesize{CSTF}} & \multicolumn{1}{c}{\footnotesize{FSTRD}} & \multicolumn{1}{c}{\footnotesize{BGS-GTF-HSR}} \\
		\multicolumn{5}{c}{\includegraphics[width=0.6\linewidth]{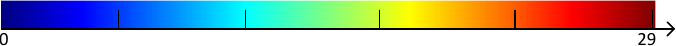}}
	\end{tabular}
	\caption{\label{fig:Houston visualization 4} Blind fusion results and error maps for the Houston2013 dataset with AGK~4 (Scene 5). {Pseudo-color is composed of bands 30, 20 and 10.}}
\end{figure}
We use the Houston2013 dataset to evaluate HSR performance under AGK
spatial blurring.  In order to exam the impact of the blurring
kernel's rank on fusion results, we spatially degenerate each SRI
using four different AGKs of the size $9\times9$, each with a
different condition number, then downsample the blurred images by a
factor of 8 to obtain four $32\times32\times144$ HSIs. The four AGKs
are depicted in Fig.~\ref{fig:AGK} and will be referred to as AGK~1--4
hereafter.

Results for non-blind experiments with AGK~1--4 are presented in
Figs.~\ref{fig:Houston visualization 1} and \ref{fig:Houston
	visualization 2} as well as Tables~\ref{tab:Houston
	metrics1}--\ref{tab:Houston metrics4}.  The most salient phenomenon
in the visual results is the systematic failure of the TF-based
STEREO, CSTF, and FSTRD under anisotropic blurring.  Since the AGK is
not rank~1, the assumptions underlying the TF framework deviate from
the real spatial degradation, resulting in dramatically degraded
performance relative to the MF- and GTF-based approaches.  A similar
conclusion can be drawn from the quantitative results in
Tables~\ref{tab:Houston metrics1}--\ref{tab:Houston metrics4}.
Additionally, Fig.~\ref{fig:PSNRvsAGK} depicts how performance changes
as the rank of the AGK increases. We see that the MF-based techniques
along with the proposed BGS-GTF-HSR are largely resilient to
increasing rank, whereas the TF-based techniques suffer greatly.

The results for blind experiments are presented in
Figs.~\ref{fig:Houston visualization 3} and \ref{fig:Houston
	visualization 4} as well as in Tables \ref{tab:Houston
	metrics1}--\ref{tab:Houston metrics4}.  Again, the TF-based methods
largely fail to adequately handle the anisotropic blurring.
Interestingly, however, the TF-based methods do better in the blind
experiments than they do in the non-blind experiments, particularly for
AGK~3 and AGK~4. While this appears counter-intuitive, these results
imply that the spatial-degradation matrices used in TF-based fusions
do not necessarily have to match the real degradation. Nevertheless,
the proposed BGS-GTF-HSR consistently outperforms all the other
techniques for the blind-fusion scenario.
\subsection{HSR on a Real Dataset}
\begin{figure*}[t]
	\centering
	\setlength{\tabcolsep}{0.3mm}
	\begin{tabular}{m{0.38\figurewidth}m{0.38\figurewidth}m{0.38\figurewidth}m{0.38\figurewidth}m{0.38\figurewidth}}
		\includegraphics[width=0.38\figurewidth]{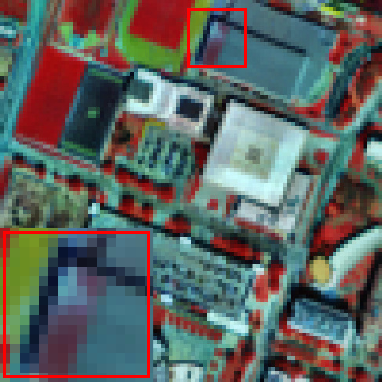} &
		\includegraphics[width=0.38\figurewidth]{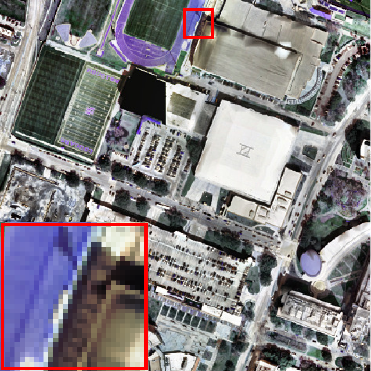} &
		\includegraphics[width=0.38\figurewidth]{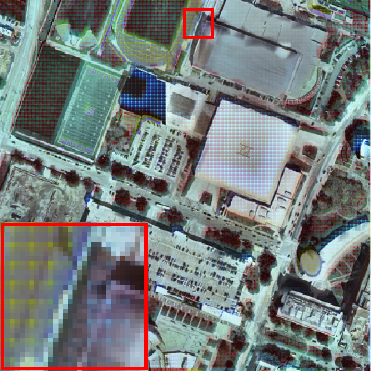}  &
		\includegraphics[width=0.38\figurewidth]{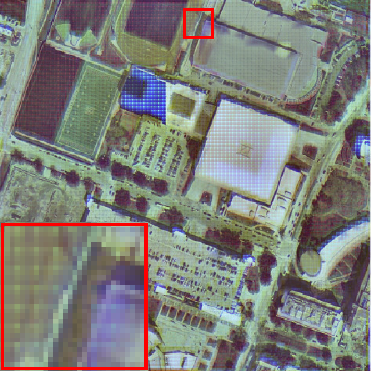}   &
		\includegraphics[width=0.38\figurewidth]{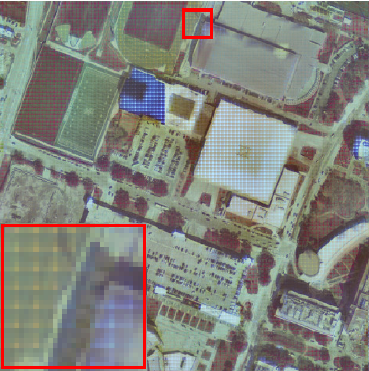}\\
		\multicolumn{1}{c}{HSI}&\multicolumn{1}{c}{MSI}                                       &\multicolumn{1}{c}{\footnotesize{Hysure}}                                       &  \multicolumn{1}{c}{\footnotesize{LTMR}} & \multicolumn{1}{c}{\footnotesize{LRTA}}
		\\
		\includegraphics[width=0.38\figurewidth]{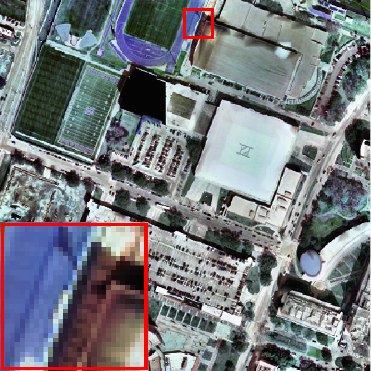} &
		\includegraphics[width=0.38\figurewidth]{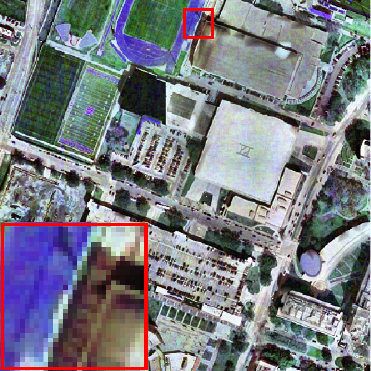}  &
		\includegraphics[width=0.38\figurewidth]{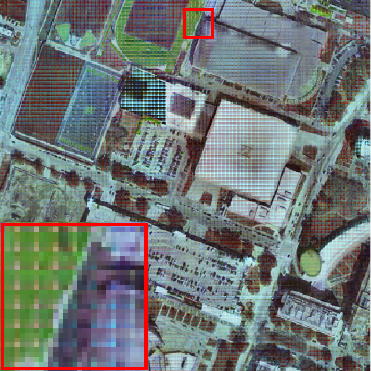}    &
		\includegraphics[width=0.38\figurewidth]{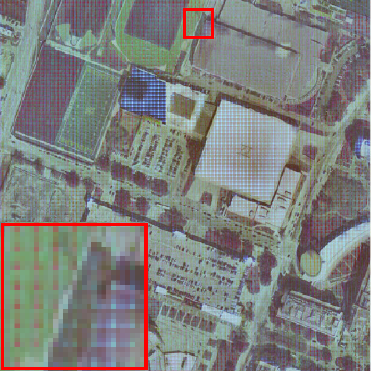}   &
		\includegraphics[width=0.38\figurewidth]{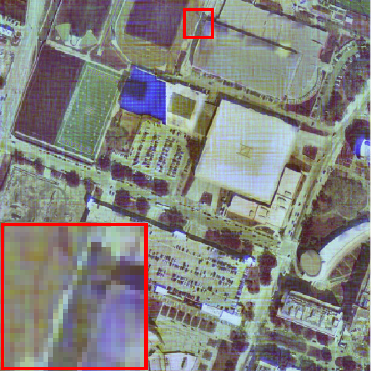} \\
		\multicolumn{1}{c}{\footnotesize{ZSL}}                                       &\multicolumn{1}{c}{\footnotesize{B-STE}}                                       & \multicolumn{1}{c}{\footnotesize{CSTF}} & \multicolumn{1}{c}{\footnotesize{FSTRD}} & \multicolumn{1}{c}{\footnotesize{BGS-GTF-HSR}} 
	\end{tabular}
	\caption{\label{fig:Houston Real visualization} Fusion results on the Houston2018 dataset; pseudo-color is composed of bands 30, 20 and 10.}
\end{figure*}

\begin{figure}[bthp!]
	\centering
	\setlength{\tabcolsep}{0.25mm}
	\begin{tabular}{cm{0.45\figurewidth}cm{0.45\figurewidth}}\rotatebox[origin=c]{90}{\footnotesize{Reflectance}}&\includegraphics[width=0.45\figurewidth]{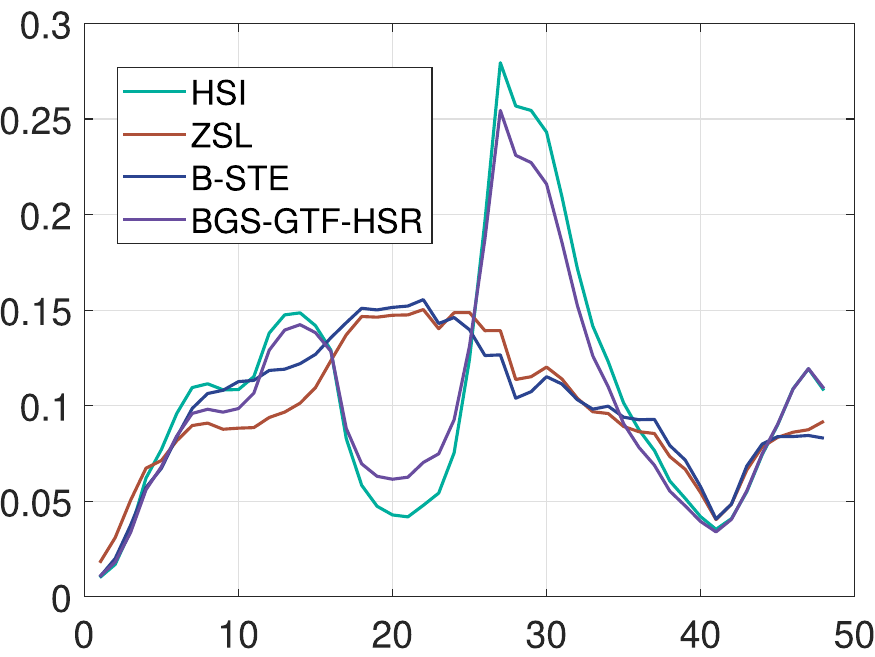}&\rotatebox[origin=c]{90}{\footnotesize{Reflectance}}&
		\includegraphics[width=0.45\figurewidth]{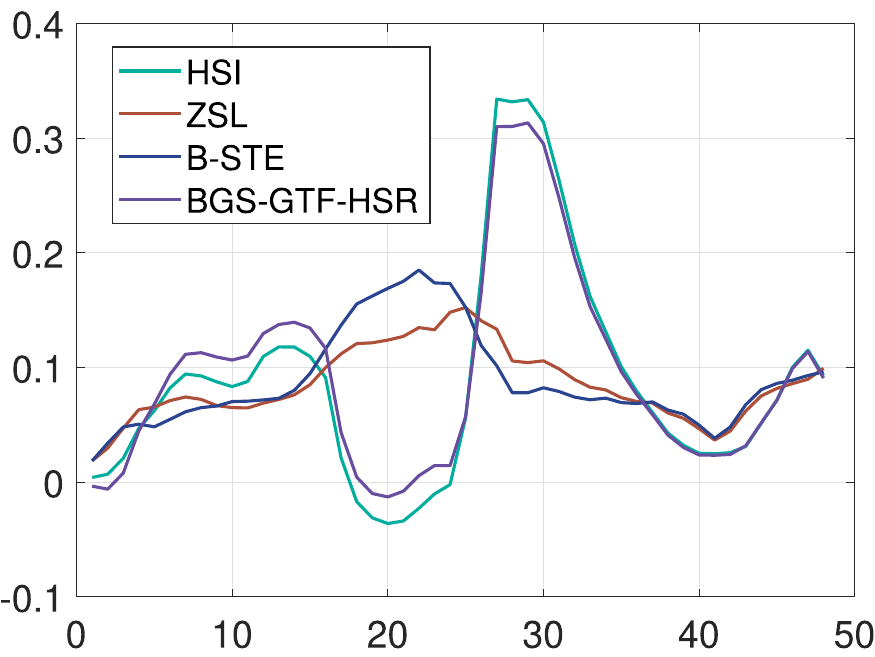} \\\quad&
		\multicolumn{1}{c}{\footnotesize{Band}}&\quad&\multicolumn{1}{c}{\footnotesize{Band}}
	\end{tabular}
	\caption{\label{fig:Spectral curves} Two spectral curves from the Houston2018 dataset.}
\end{figure}

The fusion results are depicted in Fig.~\ref{fig:Houston Real
	visualization} as pseudo-color images for the real Houston2018
HSI/MSI pair. Since there exists no ground truth, we can compare only
the visual results from the perspectives of spatial enhancement and
spectral fidelity. On the one hand, one can observe strong artifacts
from the zoomed-in areas in Fig.~\ref{fig:Houston Real visualization}
for Hysure, LTMR, LRTA, CSTF, and FSTRD, yet, such artifacts are much
less obvious for BGS-GTF-HSR.  If, on the other hand, the pseudo-color
results of ZSL and B-STE appear finer and smoother, it is only because
they have actually overfit the spatial information of the MSI but
failed to maintain the spectral information of the HSI. To show this,
we graph some spectral-pixel curves in Fig.~\ref{fig:Spectral curves}
and see that the curves for ZSL and B-STE severely deviate from those
of the HSI whereas the curves for BGS-GTF-HSR are much closer to their
HSI counterparts.

\section{Conclusion}
\label{sec:conclusions}
In this paper, we proposed a generalization of the TF-based HSR
problem in the form of GTF-HSR. We demonstrated
that this generalization can model arbitrary forms of blurring
kernels, whereas previous TF-HSR approaches cannot handle situations
wherein the blurring kernel was not rank-1. We also addressed the
recoverability of the proposed GTF-HSR, showing that exact recovery is
guaranteed.  To establish an algorithmic framework for practical
HSR, we
proposed a blockwise-sparse regularizer to fully exploit the group
sparsity of higher-order tensors, which is achieved through a
block-based unfolding strategy and the $l_{2,0}$-norm. With the
nonconvex surrogate imposed, the overall problem was optimized via
ADMM in a divide-and-conquer manner to ease the intrinsic difficulty
of large-scale nonconvex optimization for multi-source data
reconstruction. We tested our proposed algorithmic framework, called
BGS-GTF-HSR, on simulated datasets under traditional isotropic
Gaussian blurring as well as more realistic anisotropic Gaussian
blurring and also on real HSI-MSI pairs. Experimental results
demonstrated that our proposed BGS-GTF-HSR outperformed the TF-HSR
methods considered under anisotropic blurring in simulated problems
for both blind as well as non-blind HSR. Additionally, superior
results were observed for BGS-GTF-HSR for real-world data as well.
\allowdisplaybreaks[4]
\section{Supplemental Material}
\subsection{Proofs of Propositions, Lemmas, Theorems, and Corollaries in the Paper}
In this section, we provided detailed proofs of propositions, lemmas,
theorems, and corollaries found in the main text.  In this
supplemental material, $\mathbf{A}^{\dagger}$ represents the
Moore-Penrose pseudo-inverse of $\mathbf{A}$, and $\oplus$ denotes the
direct sum of two spaces. For a $3$-way tensor
$\mathcal{T}\in\mathbb{R}^{M\times N\times L}$, $\mathbf{T}_{M\times
	N,l}$ denotes its $l^{\text{th}}$ frontal slice, and
$\mathcal{T}[m,n,:]$ denotes its pixel vector at spatial location
$(m,n)$. Furthermore, $\mathbf{T}[m,n]$ denotes the $(m,n)$ element of
matrix $\mathbf{T}$, $\mathbf{t}[m]$ denotes the $m^{\text{th}}$
element of vector $\mathbf{t}$, and $\mathbf{t}_m$ denotes the
$m^{\text{th}}$ column vector of matrix $\mathbf{T}$. We use
$\mathbf{I}_N$ and $\mathbf{L}_N$ to indicate the $N \times N$
identity and commutation matrices, respectively.  Finally,
$\mathsf{P}(\cdot)$ stands for the probability function.
\subsection{Proof of Prop.~\ref{Prop:Kr-Rank}}
In proving Prop.~\ref{Prop:Kr-Rank},
we first introduce Lemma~\ref{Le:Lemma for Kronecker}
followed by Corollary \ref{coro:Kr-Rank}.
\begin{lemma}
	\textit{For arbitrary matrices $\mathbf{M}_1\in \mathbb{R} ^{J_1\times K_1}$ and $\mathbf{M}_2\in \mathbb{R} ^{J_2\times K_2}$, there exists a reversible linear mapping $\mathsf{F}(\cdot|J_1,J_2,K_1,K_2)$, such that
		\begin{equation}
			\mathbf{M}_1\otimes\mathbf{M}_2=\mathsf{F}\left(\mathsf{Vec}\left(\mathbf{M}_1\right)\mathsf{Vec}^T\left(\mathbf{M}_2\right)|J_1,J_2,K_1,K_2 \right)
		\end{equation}
		and
		\begin{equation}
			\mathsf{Vec}\left(\mathbf{M}_1\right)\mathsf{Vec}^T\left(\mathbf{M}_2\right)=\mathsf{F}^{-1}\left(\mathbf{M}_1\otimes\mathbf{M}_2|J_1,J_2,K_1,K_2\right)
		\end{equation}
		where
		\begin{align}
			\mathsf{F}&\left(\cdot |J_1,J_2,K_1,K_2\right) =\notag \\&\mathsf{Unv}\left[\left(\mathbf{I}_{K_1}\otimes\mathbf{L}_{K_2J_1}\otimes\mathbf{I}_{J_2}\right)\mathsf{Vec}\left(\left(\cdot\right)^T \right)|J_1J_2,K_1K_2 \right]   , \notag\\
			\mathsf{F}^{-1}&\left(\cdot |J_1,J_2,K_1,K_2\right)=\notag\\&\mathsf{Unv}^T\left[\left(\mathbf{I}_{K_1}\otimes\mathbf{L}_{J_1K_2}\otimes\mathbf{I}_{J_2}\right)\mathsf{Vec}\left(\cdot \right)|J_1K_1,J_2K_2 \right].
		\end{align}
	}
	\label{Le:Lemma for Kronecker}
\end{lemma}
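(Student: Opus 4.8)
The plan is to reduce the statement to the classical Van Loan--Pitsianis rearrangement, which says that $\mathsf{Vec}(\mathbf{A}\otimes\mathbf{B})$ is a fixed permutation of $\mathsf{Vec}(\mathbf{A})\otimes\mathsf{Vec}(\mathbf{B})$, and then to recognize that permutation as exactly the commutation-matrix operator appearing in the statement. First I would collect the standard vectorization identities (see, e.g., \cite{GV2013} and standard matrix-calculus references), all under the column-major convention consistent with \textsc{Matlab}'s \texttt{reshape}: for conformable matrices $\mathsf{Vec}(\mathbf{A}\mathbf{X}\mathbf{B})=(\mathbf{B}^T\otimes\mathbf{A})\mathsf{Vec}(\mathbf{X})$; for vectors $\mathsf{Vec}(\mathbf{a}\mathbf{b}^T)=\mathbf{b}\otimes\mathbf{a}$, equivalently $\mathbf{a}\otimes\mathbf{b}=\mathsf{Vec}(\mathbf{b}\mathbf{a}^T)$; the commutation matrix $\mathbf{L}_{mn}$ satisfies $\mathbf{L}_{mn}\mathsf{Vec}(\mathbf{C})=\mathsf{Vec}(\mathbf{C}^T)$ for $\mathbf{C}\in\mathbb{R}^{m\times n}$ and is an orthogonal permutation with $\mathbf{L}_{mn}^{-1}=\mathbf{L}_{mn}^T=\mathbf{L}_{nm}$; and the mixed-product rule $(\mathbf{A}\otimes\mathbf{B})(\mathbf{C}\otimes\mathbf{D})=\mathbf{A}\mathbf{C}\otimes\mathbf{B}\mathbf{D}$.

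From these ingredients I would establish, for $\mathbf{M}_1\in\mathbb{R}^{J_1\times K_1}$ and $\mathbf{M}_2\in\mathbb{R}^{J_2\times K_2}$, the block-permutation identity
\[
\mathsf{Vec}\left(\mathbf{M}_1\otimes\mathbf{M}_2\right)=\left(\mathbf{I}_{K_1}\otimes\mathbf{L}_{K_2J_1}\otimes\mathbf{I}_{J_2}\right)\left(\mathsf{Vec}\left(\mathbf{M}_1\right)\otimes\mathsf{Vec}\left(\mathbf{M}_2\right)\right),
\]
which is the column-major form of the Van Loan--Pitsianis rearrangement; it can be checked directly by comparing a representative entry on the two sides, since both enumerate the same products $\mathbf{M}_1[i,j]\,\mathbf{M}_2[l',l]$ with $i\in[J_1]$, $j\in[K_1]$, $l'\in[J_2]$, $l\in[K_2]$, just in different orders. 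Next, using $\mathbf{a}\otimes\mathbf{b}=\mathsf{Vec}(\mathbf{b}\mathbf{a}^T)$ with $\mathbf{a}=\mathsf{Vec}(\mathbf{M}_1)$, $\mathbf{b}=\mathsf{Vec}(\mathbf{M}_2)$, I would rewrite the right-hand factor as
\[
\mathsf{Vec}\left(\mathbf{M}_1\right)\otimes\mathsf{Vec}\left(\mathbf{M}_2\right)=\mathsf{Vec}\left(\mathsf{Vec}\left(\mathbf{M}_2\right)\mathsf{Vec}^T\left(\mathbf{M}_1\right)\right)=\mathsf{Vec}\left(\left(\mathsf{Vec}\left(\mathbf{M}_1\right)\mathsf{Vec}^T\left(\mathbf{M}_2\right)\right)^T\right).
\]
Substituting and then applying $\mathsf{Unv}(\,\cdot\,|J_1J_2,K_1K_2)$ to both sides yields $\mathbf{M}_1\otimes\mathbf{M}_2$ on the left (since $\mathbf{M}_1\otimes\mathbf{M}_2\in\mathbb{R}^{J_1J_2\times K_1K_2}$ and $\mathsf{Unv}$ inverts $\mathsf{Vec}$ at the matching shape) and, on the right, precisely $\mathsf{F}\big(\mathsf{Vec}(\mathbf{M}_1)\mathsf{Vec}^T(\mathbf{M}_2)\,\big|\,J_1,J_2,K_1,K_2\big)$ as defined. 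This proves the forward formula.

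For reversibility and the explicit inverse, I would observe that $\mathsf{F}$ is a composition of the transpose map, $\mathsf{Vec}$, left multiplication by the permutation matrix $\mathbf{Q}\triangleq\mathbf{I}_{K_1}\otimes\mathbf{L}_{K_2J_1}\otimes\mathbf{I}_{J_2}$, and $\mathsf{Unv}$, each of which is a linear bijection between spaces of dimension $J_1J_2K_1K_2$; hence $\mathsf{F}$ is itself a linear bijection. Inverting the composition in reverse order and using the mixed-product rule together with $\mathbf{L}_{K_2J_1}^{-1}=\mathbf{L}_{J_1K_2}$ gives $\mathbf{Q}^{-1}=\mathbf{I}_{K_1}\otimes\mathbf{L}_{J_1K_2}\otimes\mathbf{I}_{J_2}$; combining this with the fact that undoing $\mathsf{Unv}$ at shape $(J_1J_2,K_1K_2)$ and the final transpose amounts to $\mathsf{Unv}^T$ at shape $(J_1K_1,J_2K_2)$ recovers the displayed expression for $\mathsf{F}^{-1}$, and evaluating $\mathsf{F}^{-1}$ at $\mathbf{M}_1\otimes\mathbf{M}_2$ returns $\mathsf{Vec}(\mathbf{M}_1)\mathsf{Vec}^T(\mathbf{M}_2)$. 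The main obstacle here is not conceptual but clerical and genuinely error-prone: one must fix once and for all the column-major convention (matching \textsc{Matlab}'s \texttt{reshape}) and the precise definition of $\mathbf{L}_{mn}$, and then verify that the particular nesting $\mathbf{I}_{K_1}\otimes\mathbf{L}_{K_2J_1}\otimes\mathbf{I}_{J_2}$ — rather than a different ordering of the three factors or a transposed commutation matrix — is the one realizing the Van Loan--Pitsianis permutation under these conventions. I would settle this point by an explicit index chase on a single entry, which leaves no ambiguity, rather than by abstract Kronecker manipulations.
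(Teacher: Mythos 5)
Your proposal is correct and follows essentially the same route as the paper: both rest on the rearrangement identity $\mathsf{Vec}\left(\mathbf{M}_1\otimes\mathbf{M}_2\right)=\left(\mathbf{I}_{K_1}\otimes\mathbf{L}_{K_2J_1}\otimes\mathbf{I}_{J_2}\right)\left(\mathsf{Vec}\left(\mathbf{M}_1\right)\otimes\mathsf{Vec}\left(\mathbf{M}_2\right)\right)$ combined with $\mathbf{a}\otimes\mathbf{b}=\mathsf{Vec}\left(\mathbf{b}\mathbf{a}^T\right)$, with linearity and invertibility inherited from the constituent maps (vectorization, reshaping, and permutation by the commutation matrix). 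Your version merely spells out the index-level verification and the inversion of the permutation more explicitly than the paper, which simply asserts them.
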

\begin{proof}
	One can see the existence of this map and its reversibility by verifying that
	\begin{align}
		\mathsf{Vec}&\left(\mathbf{M}_1\otimes\mathbf{M}_2\right)\notag\\&=\left(\mathbf{I}_{K_1}\otimes\mathbf{L}_{K_2J_1}\otimes\mathbf{I}_{J_2}\right)\left(\mathsf{Vec}\left(\mathbf{M}_1\right)\otimes\mathsf{Vec}\left(\mathbf{M}_2\right)\right)\nonumber\\&=\left(\mathbf{I}_{K_1}\otimes\mathbf{L}_{K_2J_1}\otimes\mathbf{I}_{J_2}\right)\mathsf{Vec}\left(\mathsf{Vec}\left(\mathbf{M}_2\right)\mathsf{Vec}^T\left(\mathbf{M}_1\right)\right),
	\end{align}
	and its linearity simply follows from the linearity of $\mathsf{Vec}\left(\cdot\right),\,\mathsf{Unv}\left(\cdot\right)$, and matrix multiplication.
\end{proof}
\begin{corollary}
	\textit{For any matrix $\mathbf{W}\in\mathbb{R}^{J_1J_2\times
			K_1K_2}$, its Kronecker rank is
		as
		\begin{equation}
			\mathsf{kr}_{\mathbf{W}}=\operatorname{rank}\left\{\mathsf{F}^{-1}\left(\mathbf{W}|J_1,J_2,K_1,K_2\right)\right\}.
	\end{equation}}
	\label{coro:Kr-Rank}
\end{corollary}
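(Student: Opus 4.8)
The plan is to use the reversible linear mapping $\mathsf{F}$ supplied by Lemma~\ref{Le:Lemma for Kronecker} to translate the Kronecker-product structure of $\mathbf{W}$ into an ordinary outer-product (rank-one) decomposition of $\mathsf{F}^{-1}(\mathbf{W})$, and then to observe that the minimal number of terms in such a decomposition is exactly the matrix rank. Accordingly, I would establish the two inequalities $\operatorname{rank}\{\mathsf{F}^{-1}(\mathbf{W})\}\leq\mathsf{kr}_{\mathbf{W}}$ and $\mathsf{kr}_{\mathbf{W}}\leq\operatorname{rank}\{\mathsf{F}^{-1}(\mathbf{W})\}$ separately, after which the corollary follows immediately.

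For the first inequality, I would start from any Kronecker decomposition $\mathbf{W}=\sum_{r=1}^{R}\mathbf{M}_1^{(r)}\otimes\mathbf{M}_2^{(r)}$ attaining the minimum $R=\mathsf{kr}_{\mathbf{W}}$, whose existence is guaranteed by Thm.~\ref{Th: Kronecker Decomposition}. Applying $\mathsf{F}^{-1}$ and invoking its linearity together with the identity $\mathsf{F}^{-1}(\mathbf{M}_1\otimes\mathbf{M}_2)=\mathsf{Vec}(\mathbf{M}_1)\mathsf{Vec}^T(\mathbf{M}_2)$ from Lemma~\ref{Le:Lemma for Kronecker}, I obtain $\mathsf{F}^{-1}(\mathbf{W})=\sum_{r=1}^{R}\mathsf{Vec}(\mathbf{M}_1^{(r)})\mathsf{Vec}^T(\mathbf{M}_2^{(r)})$, a sum of $R$ matrices each of rank at most one; hence its rank is at most $R=\mathsf{kr}_{\mathbf{W}}$.

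For the converse, I would take a minimal rank-one decomposition $\mathsf{F}^{-1}(\mathbf{W})=\sum_{r=1}^{\rho}\mathbf{a}^{(r)}(\mathbf{b}^{(r)})^T$ with $\rho=\operatorname{rank}\{\mathsf{F}^{-1}(\mathbf{W})\}$ (for instance from the SVD), noting that $\mathsf{F}^{-1}(\mathbf{W})\in\mathbb{R}^{J_1K_1\times J_2K_2}$ so that $\mathbf{a}^{(r)}\in\mathbb{R}^{J_1K_1}$ and $\mathbf{b}^{(r)}\in\mathbb{R}^{J_2K_2}$. Setting $\mathbf{M}_1^{(r)}=\mathsf{Unv}(\mathbf{a}^{(r)}|J_1,K_1)$ and $\mathbf{M}_2^{(r)}=\mathsf{Unv}(\mathbf{b}^{(r)}|J_2,K_2)$, the bijectivity of $\mathsf{Vec}/\mathsf{Unv}$ gives $\mathsf{Vec}(\mathbf{M}_1^{(r)})=\mathbf{a}^{(r)}$ and $\mathsf{Vec}(\mathbf{M}_2^{(r)})=\mathbf{b}^{(r)}$. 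Applying $\mathsf{F}$ and using linearity together with the identity $\mathsf{F}(\mathsf{Vec}(\mathbf{M}_1)\mathsf{Vec}^T(\mathbf{M}_2))=\mathbf{M}_1\otimes\mathbf{M}_2$, I recover $\mathbf{W}=\mathsf{F}(\mathsf{F}^{-1}(\mathbf{W}))=\sum_{r=1}^{\rho}\mathbf{M}_1^{(r)}\otimes\mathbf{M}_2^{(r)}$, a valid Kronecker decomposition with $\rho$ terms, whence $\mathsf{kr}_{\mathbf{W}}\leq\rho$. Combining the two bounds proves the claim.

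I do not anticipate any serious obstacle: the statement is essentially a change of variables once Lemma~\ref{Le:Lemma for Kronecker} is available. The only points that warrant a little care are verifying that the dimensions of $\mathbf{a}^{(r)}$ and $\mathbf{b}^{(r)}$ are compatible with the unvectorization shapes $\mathsf{Unv}(\cdot|J_1,K_1)$ and $\mathsf{Unv}(\cdot|J_2,K_2)$---which follows directly from the stated codomain of $\mathsf{F}^{-1}$---and recalling the elementary fact that a minimal outer-product expansion of a matrix uses exactly as many terms as its rank (with the degenerate case $\mathbf{W}=\mathbf{0}$, where both quantities are zero, handled by the same reasoning applied to empty sums).
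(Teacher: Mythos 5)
Your proposal is correct and follows essentially the same route as the paper's own proof: both directions rest on the linearity and invertibility of $\mathsf{F}$ from Lemma~\ref{Le:Lemma for Kronecker}, converting a minimal Kronecker decomposition into a sum of outer products (giving $\operatorname{rank}\left\{\mathsf{F}^{-1}(\mathbf{W})\right\}\leq\mathsf{kr}_{\mathbf{W}}$) and converting a minimal rank factorization of $\mathsf{F}^{-1}(\mathbf{W})$ back into a Kronecker decomposition via unvectorization (giving $\mathsf{kr}_{\mathbf{W}}\leq\operatorname{rank}\left\{\mathsf{F}^{-1}(\mathbf{W})\right\}$). The only difference is the order in which the two inequalities are established, which is immaterial.
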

\begin{proof}
	Letting
	$R=\operatorname{rank}\left\{\mathsf{F}^{-1}(\mathbf{W}|J_1,J_2,K_1,K_2)\right\}$,
	there exist $\mathbf{A}\in\mathbb{R}^{J_1K_1\times R}$ and
	$\mathbf{B}\in\mathbb{R}^{ J_2K_2\times R}$ such
	that
	\begin{equation}
		\mathsf{F}^{-1}(\mathbf{W}|J_1,J_2,K_1,K_2)=\mathbf{A}\mathbf{B}^T.
	\end{equation}
	Let
	$\mathbf{M}_1^{(r)}=\mathsf{Unv}\left(\mathbf{a}_r|J_1,K_1\right)$
	and
	$\mathbf{M}_2^{(r)}=\mathsf{Unv}\left(\mathbf{b}_r|J_2,K_2\right)$,
	where $\mathbf{a}_r$ and $\mathbf{b}_r$ are the $r^{\text{th}}$ columns of
	matrices $\mathbf{A}$ and $\mathbf{B}$, respectively. We then have
	\begin{align}
		\mathbf{W}&=\mathsf{F}\left(\mathsf{F}^{-1}(\mathbf{W}|J_1,J_2,K_1,K_2)|J_1,J_2,K_1,K_2\right)\nonumber\\&=\mathsf{F}\left(\mathbf{A}\mathbf{B}^T|J_1,J_2,K_1,K_2\right)\nonumber\nonumber\\&=\mathsf{F}\left(\sum_{r=1}^R\mathbf{a}_r\mathbf{b}_r^T|J_1,J_2,K_1,K_2\right)\nonumber\\&=\sum_{r=1}^R\mathsf{F}\left(\mathbf{a}_r\mathbf{b}_r^T|J_1,J_2,K_1,K_2\right)\nonumber\\&=\sum_{r=1}^R\mathbf{M}_1^{(r)}\otimes\mathbf{M}_2^{(r)} ,
	\end{align}
	which implies that $\mathsf{kr}_{\mathbf{W}}\leq
	\operatorname{rank}\left\{\mathsf{F}^{-1}(\mathbf{W}|J_1,J_2,K_1,K_2)\right\}$.
	On the other hand, supposing the ``tightest'' KD of $\mathbf{W}$ is
	$\mathbf{W}=\sum_{r=1}^{\mathsf{kr}_{\mathbf{W}}}\mathbf{M}_1^{(r)}\otimes\mathbf{M}_2^{(r)}$,
	we then have
	\begin{align}
		\mathsf{F}^{-1}(\mathbf{W}&|J_1,J_2,K_1,K_2)\nonumber\\&=\mathsf{F}^{-1}(\sum_{r=1}^{\mathsf{kr}_{\mathbf{W}}}\mathbf{M}_1^{(r)}\otimes\mathbf{M}_2^{(r)}|J_1,J_2,K_1,K_2)\nonumber\\&=\sum_{r=1}^{\mathsf{kr}_{\mathbf{W}}}\mathsf{F}^{-1}(\mathbf{M}_1^{(r)}\otimes\mathbf{M}_2^{(r)}|J_1,J_2,K_1,K_2)\nonumber\\&=\sum_{r=1}^{\mathsf{kr}_{\mathbf{W}}}\mathsf{Vec}\left(\mathbf{M}_1^{(r)}\right)\mathsf{Vec}^T\left(\mathbf{M}_2^{(r)}\right)
	\end{align}
	which implies that $\mathsf{kr}_{\mathbf{W}}\geq
	\operatorname{rank}\left\{\mathsf{F}^{-1}(\mathbf{W}|J_1,J_2,K_1,K_2)\right\}$.
	Thus,
	it is proven that
	$\mathsf{kr}_{\mathbf{W}}=\operatorname{rank}\left\{\mathsf{F}^{-1}\left(\mathbf{W}|J_1,J_2,K_1,K_2\right)\right\}$.
\end{proof}

\noindent
Next, we start to prove Prop.~\ref{Prop:Kr-Rank}. We first derive
the explicit expression of spatial-degradation matrix $\mathbf{D}$
then complete the proof by taking~Cor.~\ref{coro:Kr-Rank} into
account.
\setcounter{proposition}{0} 
\renewcommand{\theproposition}{\ref{Prop:Kr-Rank}}
\begin{proposition}
	\textit{For the spatial-degradation matrix $\mathbf{D}$ in
		\eqref{eq: Matrix Formulation}, since it is physically modeled as
		\eqref{eq:Physical Expression}, we
		have
		\begin{equation}
			\mathsf{kr}_{\mathbf{D}}=\operatorname{rank}\left\{\mathbf{\Phi}\right\}.
		\end{equation}
	}
\end{proposition}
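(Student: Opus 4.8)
The plan is to make the Kronecker structure of $\mathbf{D}$ explicit from the physical model \eqref{eq:Physical Expression} and then read off $\mathsf{kr}_{\mathbf{D}}$ via Cor.~\ref{coro:Kr-Rank}. First I would write the periodic 2D convolution with $\mathbf{\Phi}$ as a block-circulant matrix $\mathbf{C}_{\mathbf{\Phi}}$ acting on the (column-major) vectorization of an $M_1\times M_2$ image, and the uniform downsampling as a row-selection matrix of the form $\mathbf{S}_2\otimes\mathbf{S}_1$, with $\mathbf{S}_i\in\mathbb{R}^{m_i\times M_i}$ the downsampling selection along mode~$i$. Matching this against the defining relation $\mathbf{X}_{[3]}=\mathbf{Z}_{[3]}\mathbf{D}$ in \eqref{eq: Matrix Formulation}, whose common index is the spatial index, identifies $\mathbf{D}$ with the transpose of $(\mathbf{S}_2\otimes\mathbf{S}_1)\mathbf{C}_{\mathbf{\Phi}}$, up to the fixed permutation induced by the mode-$3$ unfolding convention.

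For the inequality $\mathsf{kr}_{\mathbf{D}}\le\operatorname{rank}\{\mathbf{\Phi}\}$, fix a singular-value decomposition $\mathbf{\Phi}=\sum_{r=1}^{\operatorname{rank}\{\mathbf{\Phi}\}}\sigma_r\mathbf{u}_r\mathbf{v}_r^T$. Each rank-one summand $\mathbf{u}_r\mathbf{v}_r^T$ is a \emph{separable} kernel, so periodic convolution by it factors as a circular convolution along mode~$1$ with $\mathbf{u}_r$ composed with a circular convolution along mode~$2$ with $\mathbf{v}_r$, contributing a term $\mathbf{V}_r\otimes\mathbf{U}_r$ with $\mathbf{U}_r,\mathbf{V}_r$ circulant; hence $\mathbf{C}_{\mathbf{\Phi}}=\sum_r\sigma_r(\mathbf{V}_r\otimes\mathbf{U}_r)$. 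Pushing $\mathbf{S}_2\otimes\mathbf{S}_1$ through the sum gives $\mathbf{D}=\sum_r(\mathbf{P}_2^{(r)}\otimes\mathbf{P}_1^{(r)})^T$ with $\mathbf{P}_1^{(r)}=\mathbf{S}_1\mathbf{U}_r$ and $\mathbf{P}_2^{(r)}=\sigma_r\mathbf{S}_2\mathbf{V}_r$, which is precisely the expansion \eqref{eq:generalizedD}; so $\mathbf{D}$ admits a Kronecker decomposition with $\operatorname{rank}\{\mathbf{\Phi}\}$ terms.

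For the reverse inequality I would invoke Cor.~\ref{coro:Kr-Rank} to reduce $\mathsf{kr}_{\mathbf{D}}$ to $\operatorname{rank}\{\mathsf{F}^{-1}(\mathbf{D})\}$, and then apply Lemma~\ref{Le:Lemma for Kronecker} termwise to the expansion just obtained, which yields $\mathsf{F}^{-1}(\mathbf{D})=\sum_r\mathsf{Vec}((\mathbf{P}_2^{(r)})^T)\,\mathsf{Vec}^T((\mathbf{P}_1^{(r)})^T)$ (with the two factors possibly interchanged, depending on the unfolding convention). This is a sum of $\operatorname{rank}\{\mathbf{\Phi}\}$ rank-one matrices, and it has rank exactly $\operatorname{rank}\{\mathbf{\Phi}\}$ provided $\{\mathbf{P}_1^{(r)}\}_r$ and $\{\mathbf{P}_2^{(r)}\}_r$ are each linearly independent. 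This last fact I would get from the SVD: $\{\mathbf{u}_r\}_r$ and $\{\mathbf{v}_r\}_r$ are linearly independent, and the map carrying a kernel factor to the first (always-retained) row of the corresponding downsampled circulant is linear and injective, so it sends an independent set to an independent set. Combined with the upper bound, this yields $\mathsf{kr}_{\mathbf{D}}=\operatorname{rank}\{\mathbf{\Phi}\}$.

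I expect the main obstacle to be bookkeeping rather than a deep idea: fixing the exact column ordering of the mode-$3$ unfolding, tracking the kernel reflection and wraparound inherent in circular convolution, and---the one genuinely delicate point---verifying that the fat selection matrices $\mathbf{S}_i$ do not collapse the linear independence of the circulant factors, which is exactly what the equality step leans on and is resolved by reading that independence off the first retained row. Finally, as the footnote indicates, the aperiodic (zero-padding) variant follows the same template, with the circulant blocks replaced by the corresponding Toeplitz blocks whose first retained row still reveals $\mathbf{\Phi}$.
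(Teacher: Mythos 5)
Your proposal is correct and follows essentially the same route as the paper's proof: it expands $\mathbf{\Phi}$ by a rank (SVD) decomposition, realizes the periodic blurring plus uniform downsampling as $\mathbf{P}_1^{(r)}=\mathbf{S}_1\mathbf{T}^{\mathbf{u}_r}$, $\mathbf{P}_2^{(r)}=\mathbf{S}_2\mathbf{T}^{\mathbf{v}_r}$ to obtain the KD of $\mathbf{D}$ with $\operatorname{rank}\{\mathbf{\Phi}\}$ terms, and then pins down $\mathsf{kr}_{\mathbf{D}}$ via Cor.~\ref{coro:Kr-Rank} by computing $\operatorname{rank}\{\mathsf{F}^{-1}(\mathbf{D})\}$. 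The only cosmetic difference is that you argue the exactness through linear independence of $\{\mathbf{P}_i^{(r)}\}_r$ read off a fixed retained row, whereas the paper equivalently collapses the stacked circulant rows to $\operatorname{rank}\{\sum_r\mathbf{v}_r\mathbf{u}_r^T\}=\operatorname{rank}\{\mathbf{\Phi}\}$; both rest on the same injectivity of the kernel-to-circulant-row map.
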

\begin{proof}
	To prove Prop.~\ref{Prop:Kr-Rank}, we give an
	explicit expression of $\mathbf{D}$ in terms of the blurring kernel
	$\mathbf{\Phi}$ and the specific downsampling strategy. First,
	suppose the rank decomposition of $\mathbf{\Phi}$ is
	\begin{equation}
		\mathbf{\Phi}=\mathbf{U}\mathbf{V}^T=\sum_{r=1}^{R}\mathbf{u}_r\mathbf{v}^T_r ,
	\end{equation}
	where
	$\mathbf{U},\mathbf{V}\subseteq\mathbb{R}^{\phi\times
		R}$ are matrices with full column rank, and
	$R=\operatorname{rank}\left\{\mathbf{\Phi}\right\}$.
	Then, let $\mathbf{u}_r$ and
	$\mathbf{v}_r$ be the $r^{\text{th}}$ column vectors of $\mathbf{U}$ and
	$\mathbf{V}$, respectively. Subsequently, defining
	$\widetilde{\mathcal{Z}}=\mathcal{Z}\ast\mathbf{\Phi}$, we have
	\begin{align}
		\widetilde{\mathcal{Z}}[m,n,:]&=\sum_{l,s=1}^{\phi,\phi}\,\mathbf{\Phi}[l,s]\mathcal{Z}[m-m'+l,n-n'+s,:]\nonumber\\&=\sum_{l,s=1}^{\phi,\phi}\,\sum_{r=1}^{R}\mathbf{u}_r[l]\mathbf{v}_r[s]\mathcal{Z}[m-m'+l,n-n'+s,:]\nonumber\\&=\sum_{r=1}^{R}\sum_{l,s=1}^{\phi,\phi}\,\mathbf{u}_r[l]\mathcal{Z}[m-m'+l,n-n'+s,:]\mathbf{v}_r[s] ,
	\end{align}
	where $m',n'$ are the shift parameters {\cite{KFS2018}}. This actually implies
	that
	\begin{equation}
		\widetilde{\mathcal{Z}}=\mathcal{Z}\ast
		\mathbf{\Phi}=\sum_{r=1}^{R}\mathcal{Z}\ast\mathbf{u}_r\ast\mathbf{v}^T_r=\sum_{r=1}^{R}\mathcal{Z}\times_1\mathbf{T}^{\mathbf{u}_r}\times_2\mathbf{T}^{\mathbf{v}_r} ,
	\end{equation}
	in which
	$\{\mathbf{T}^{\mathbf{u}_r},\mathbf{T}^{\mathbf{v}_r}\}_{r=1}^R$ are
	circulant matrices generated according to
	$\{\mathbf{u}_r,\mathbf{v}_r\}_{r=1}^R$.  More
	concretely,
	\begin{align}
		\mathbf{T}^{\mathbf{u}_r}=\sum_{l=1}^{\phi}\mathbf{u}_r[l]\mathbf{J}_{M_1}^{M_1-m'+l}, \\
		\mathbf{T}^{\mathbf{v}_r}=\sum_{s=1}^{\phi}\mathbf{v}_r[s]\mathbf{J}_{M_2}^{M_2-n'+s} ,
	\end{align}
	and $\mathbf{J}_M$ denotes the $M\times M$ basic circulant
	matrix,
	\begin{equation}
		\mathbf{J}_M=\begin{bmatrix}
			0&1&0&\cdots&0\\
			0&0&1&\cdots&0\\
			\vdots&\vdots&\vdots&\ddots&\vdots\\
			0&0&0&\cdots&1\\
			1&0&0&\cdots&0
		\end{bmatrix}.
	\end{equation}
	Having expressed the blurring aspect of the spatial degradation,
	we now consider the subsequent
	uniform downsampling, yielding an expression for
	the whole spatial-degradation process,
	\begin{equation}
		\mathcal{X}=(\mathcal{Z}\ast\mathbf{\Phi})_{\downarrow}=\sum_{r=1}^{R}\mathcal{Z}\times_1\mathbf{P}^{(r)}_1\times_2\mathbf{P}^{(r)}_2 ,
	\end{equation}
	where
	\begin{align}
		\mathbf{P}^{(r)}_1=\mathbf{S}_1\mathbf{T}^{\mathbf{u}_r},\\
		\mathbf{P}^{(r)}_2=\mathbf{S}_2\mathbf{T}^{\mathbf{v}_r} ,
	\end{align}
	and the row vectors of $\mathbf{S}_1\in\mathbb{R}^{m_1\times M_1}$ and
	$\mathbf{S}_2\in\mathbb{R}^{m_2\times M_2}$ are sampled from those of
	$\mathbf{I}_{M_1}$ and $\mathbf{I}_{M_2}$, respectively. Combining
	with \eqref{eq:Physical Expression}, we now
	have
	\begin{equation}
		\mathbf{X}_{[3]}=\mathbf{Z}_{[3]}\mathbf{D}=\mathbf{Z}_{[3]}\sum_{r=1}^{R}\left(\mathbf{P}_2^{(r)}\otimes\mathbf{P}_1^{(r)}\right)^T.
	\end{equation}
	Since this equation should hold for any $\cal{Z}$, it is concluded
	that
	\begin{equation}
		\mathbf{D}=\sum_{r=1}^{R}\left(\mathbf{P}_2^{(r)}\otimes\mathbf{P}_1^{(r)}\right)^T.
	\end{equation}
	To proceed, from Cor.~\ref{coro:Kr-Rank}, we have
	\begin{equation}
		\begin{aligned}
			\mathsf{kr}_{\mathbf{D}}&=\operatorname{rank}\left\{\mathsf{F}^{-1}(\mathbf{D}|M_2,M_1,m_2,m_1)\right\}\\&=\operatorname{rank}\left\{\mathsf{F}^{-1}(\sum_{r=1}^{R}\left(\mathbf{P}_2^{(r)}\otimes\mathbf{P}_1^{(r)}\right)^T|M_2,M_1,m_2,m_1)\right\}\\&=\operatorname{rank}\left\{\sum_{r=1}^{R}\mathsf{Vec}\left((\mathbf{P}_2^{(r)})^T\right)\mathsf{Vec}^T\left((\mathbf{P}_1^{(r)})^T\right)\right\}.
		\end{aligned}
	\end{equation}
	Denoting $\left\{\mathbf{q}_j^{(r)}\right\}_{j=1}^{m_2}$ and
	$\left\{\mathbf{p}_i^{(r)}\right\}_{i=1}^{m_1}$ to be the column
	vectors of $(\mathbf{P}_2^{(r)})^T$ and $(\mathbf{P}_1^{(r)})^T$,
	respectively, we can further derive that
	\begin{align}
		\sum_{r=1}^{R}\mathsf{Vec}\left((\mathbf{P}_2^{(r)})^T\right)\mathsf{Vec}^T\left((\mathbf{P}_1^{(r)})^T\right)\nonumber\\=\sum_{r=1}^{R}\begin{bmatrix}
			\mathbf{q}_1^{(r)}\\\mathbf{q}_2^{(r)}\\\vdots\\\mathbf{q}_{m_2}^{(r)}
		\end{bmatrix}\begin{bmatrix}
			{\mathbf{p}^{(r)}_1}^T&{\mathbf{p}_2^{(r)}}^T&\cdots&{\mathbf{p}_{m_2}^{(r)}}^T
		\end{bmatrix}.
	\end{align}
	Due to the fact that
	$\left\{{\mathbf{q}_j^{(r)}}^T\right\}_{j=1}^{m_2}$ and
	$\left\{{\mathbf{p}_i^{(r)}}^T\right\}_{i=1}^{m_1}$ are all sampled
	from the rows of circulant matrices, we further have
	\begin{equation}
		\begin{aligned}
			\operatorname{rank}&\left\{\sum_{r=1}^{R}\begin{bmatrix}
				\mathbf{q}_1^{(r)}\\\mathbf{q}_2^{(r)}\\\vdots\\\mathbf{q}_{m_2}^{(r)}
			\end{bmatrix}\begin{bmatrix}
				{\mathbf{p}^{(r)}_1}^T&{\mathbf{p}_2^{(r)}}^T&\cdots&{\mathbf{p}_{m_2}^{(r)}}^T
			\end{bmatrix}\right\}\\&=\operatorname{rank}\left\{\sum_{r=1}^{R}\begin{bmatrix}
				\mathbf{q}_1^{(r)}
			\end{bmatrix}\begin{bmatrix}
				{\mathbf{p}^{(r)}_1}^T&{\mathbf{p}_2^{(r)}}^T&\cdots&{\mathbf{p}_{m_2}^{(r)}}^T
			\end{bmatrix}\right\}\\&=\operatorname{rank}\left\{\sum_{r=1}^{R}
			\mathbf{q}_1^{(r)}
			{\mathbf{p}^{(r)}_1}^T
			\right\}\\&=\operatorname{rank}\left\{\sum_{r=1}^{R}
			\mathbf{v}_r
			\mathbf{u}_r^T
			\right\}\\&=\operatorname{rank}\left\{\mathbf{\Phi}
			\right\} ,
		\end{aligned}
	\end{equation}
	which completes the proof.
\end{proof}
\subsection{Proof of Thm.~\ref{Th:Recovery Guarantee}}
To prove Thm.~\ref{Th:Recovery Guarantee}, we first introduce the
following lemma.
\begin{lemma}
	\textit{Let an arbitrary three-way tensor $\mathcal{T}\in
		\mathbb{R}^{M\times N\times L}$ be decomposed
		as
		\begin{equation}
			\mathcal{T}=\sum_{r=1}^R\mathcal{C}\times_1\mathbf{A}_r\times_2\mathbf{B}_r ,
		\end{equation}
		where $\mathcal{C}\in\mathbb{R}^{I\times J\times L}$ is drawn from
		an absolutely continuous distribution, and
		$\mathbf{A}_r\in\mathbb{R}^{M\times I}$ and
		$\mathbf{B}_r\in\mathbb{R}^{N\times J}$. Suppose that
		\begin{equation}
			\begin{aligned}
				&I\leq JL,\, J\leq IL,\, L\geq 3,\\
				& \operatorname{rank}\left\lbrace \mathbf{A} \right\rbrace=IR,\\
				& \operatorname{rank}\left\lbrace \mathbf{B} \right\rbrace=JR ,
			\end{aligned}
		\end{equation}
		where 
		\begin{equation}
			\begin{aligned}
				\mathbf{A}&=\left[\mathbf{A}_1,\mathbf{A}_2,\cdots,\mathbf{A}_R \right] ,\\
				\mathbf{B}&=\left[\mathbf{B}_1,\mathbf{B}_2,\cdots,\mathbf{B}_R \right].
			\end{aligned}
		\end{equation}
		Then this decomposition of $\mathcal{T}$ in terms of
		$\mathcal{C}$, $\mathbf{A}$, and $\mathbf{B}$ is essentially unique with
		probability~1.  }
	\label{Le: Essential Uniqueness}
\end{lemma}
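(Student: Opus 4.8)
The plan is to view the given decomposition of $\mathcal{T}$ as a block-term decomposition (BTD) with a common identity factor in the third mode, to import from \cite{Del2008} the structural fact that the only ambiguities in such a decomposition are blockwise, and then to exploit the extra structure here --- that \emph{all} $R$ block cores equal the single generic tensor $\mathcal{C}$ --- to collapse those blockwise ambiguities down to a permutation together with scalings. To this end, introduce the tensor $\widetilde{\mathcal{C}}\in\mathbb{R}^{IR\times JR\times L}$ that is block-diagonal across its first two modes with every diagonal block equal to $\mathcal{C}$; then $\mathcal{T}=\widetilde{\mathcal{C}}\times_1\mathbf{A}\times_2\mathbf{B}$.

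First I would show that the two spatial column spaces are intrinsic. Mode-$1$ unfolding gives $\mathbf{T}_{[1]}=\mathbf{A}\,\widetilde{\mathbf{C}}_{[1]}\,(\mathbf{I}_L\otimes\mathbf{B})^T$. Since $\operatorname{rank}\{\mathbf{B}\}=JR$, the factor $\mathbf{I}_L\otimes\mathbf{B}$ has full column rank; since $\mathcal{C}$ is drawn from an absolutely continuous distribution and $I\leq JL$, the matrix $\widetilde{\mathbf{C}}_{[1]}\in\mathbb{R}^{IR\times JRL}$ has full row rank $IR$ with probability $1$, because its block pattern inherits the generic full row rank of $\mathbf{C}_{[1]}$. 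Hence $\mathsf{col}\{\mathbf{T}_{[1]}\}=\mathsf{col}\{\mathbf{A}\}$, and symmetrically, using $J\leq IL$, $\mathsf{col}\{\mathbf{T}_{[2]}\}=\mathsf{col}\{\mathbf{B}\}$. Now let $\mathcal{T}=\sum_r\mathcal{C}'\times_1\mathbf{A}'_r\times_2\mathbf{B}'_r$ be any other decomposition obeying the same hypotheses, with aggregates $\mathbf{A}'=[\mathbf{A}'_1,\dots,\mathbf{A}'_R]$, $\mathbf{B}'=[\mathbf{B}'_1,\dots,\mathbf{B}'_R]$ and block-diagonal core $\widetilde{\mathcal{C}'}$. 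Since $\mathbf{T}_{[1]}$ factors through $\mathbf{A}'$ and $\operatorname{rank}\{\mathbf{A}'\}=IR=\dim\mathsf{col}\{\mathbf{A}\}$, we get $\mathsf{col}\{\mathbf{A}'\}=\mathsf{col}\{\mathbf{A}\}$, so $\mathbf{A}'=\mathbf{A}\mathbf{F}$ for some invertible $\mathbf{F}$; likewise $\mathbf{B}'=\mathbf{B}\mathbf{G}$. Substituting into $\mathcal{T}=\widetilde{\mathcal{C}'}\times_1\mathbf{A}'\times_2\mathbf{B}'$ and applying $\mathbf{A}^{\dagger}$ and $\mathbf{B}^{\dagger}$ in the first two modes gives $\widetilde{\mathcal{C}}=\widetilde{\mathcal{C}'}\times_1\mathbf{F}\times_2\mathbf{G}$.

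The decisive step, and the one I expect to be the main obstacle, is to show that $\mathbf{F}$ and $\mathbf{G}$ are \emph{block-monomial}: that there is a single permutation $\sigma$ of $\{1,\dots,R\}$ and invertible matrices $\mathbf{F}_r\in\mathbb{R}^{I\times I}$, $\mathbf{G}_r\in\mathbb{R}^{J\times J}$ such that $\mathbf{F}$ and $\mathbf{G}$ both carry block $r$ onto block $\sigma(r)$ through $\mathbf{F}_r$ and $\mathbf{G}_r$ (that the two permutations must coincide follows from requiring $\widetilde{\mathcal{C}'}\times_1\mathbf{F}\times_2\mathbf{G}$ to be block-diagonal again). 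A general invertible mixing of the $R$ generic blocks need not preserve the block-diagonal pattern, so this is exactly the hard core of BTD uniqueness, and I would invoke the corresponding essential-uniqueness result of \cite{Del2008} rather than reprove it. The hypotheses $I\leq JL$, $J\leq IL$, and $L\geq3$, together with the genericity of $\mathcal{C}$ --- which makes each diagonal block of full multilinear rank and the associated matrix pencils regular --- are precisely what that result needs in order to exclude any genuine block mixing; I would only need to check that the full-column-rank conditions on $\mathbf{A}$ and $\mathbf{B}$ and the genericity of the cores match its assumptions.

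Finally I would use that every diagonal block of $\widetilde{\mathcal{C}'}$ equals $\mathcal{C}'$: comparing the block of $\widetilde{\mathcal{C}}=\widetilde{\mathcal{C}'}\times_1\mathbf{F}\times_2\mathbf{G}$ in position $\sigma(r)$ yields $\mathcal{C}=\mathcal{C}'\times_1\mathbf{F}_r\times_2\mathbf{G}_r$ for every $r$. Since the left-hand side is independent of $r$, $\mathcal{C}'=\mathcal{C}'\times_1(\mathbf{F}_s^{-1}\mathbf{F}_r)\times_2(\mathbf{G}_s^{-1}\mathbf{G}_r)$ for all $r,s$; and $\mathcal{C}'=\mathcal{C}\times_1\mathbf{F}_1^{-1}\times_2\mathbf{G}_1^{-1}$ is an invertible transform of the generic $\mathcal{C}$, hence itself generic. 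Under the dimension conditions above, the only invertible mode-$1$ and mode-$2$ transforms that fix such a generic three-way tensor are the scalar pairs $(\lambda\mathbf{I}_I,\lambda^{-1}\mathbf{I}_J)$ --- with $L\geq3$ frontal slices, equality under a common row/column change of basis forces that change to commute with two independent generic pencils, hence to be scalar. Thus all the $\mathbf{F}_r$ coincide up to scalars and so do all the $\mathbf{G}_r$, and the second decomposition differs from the first only by permuting the $R$ summands and rescaling each summand (with a compensating scalar on $\mathcal{C}$) --- the essential, unavoidable ambiguity of the BTD. Hence $\mathbf{A}$, $\mathbf{B}$, and $\mathcal{C}$ are determined by $\mathcal{T}$ with probability $1$, which is the claim.
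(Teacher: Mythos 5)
Your overall skeleton is reasonable, and the outer steps are sound: the identification $\mathsf{col}\{\mathbf{T}_{[1]}\}=\mathsf{col}\{\mathbf{A}\}$, $\mathsf{col}\{\mathbf{T}_{[2]}\}=\mathsf{col}\{\mathbf{B}\}$ (using genericity of $\mathcal{C}$, $I\leq JL$, $J\leq IL$, and full column rank of the aggregates), the reduction to $\widetilde{\mathcal{C}}=\widetilde{\mathcal{C}'}\times_1\mathbf{F}\times_2\mathbf{G}$, and the closing argument that an invertible mode-1/mode-2 pair fixing a generic three-way tensor with $L\geq 3$ slices must be scalar (which parallels the paper's use of slice-quotient matrices and its block-permutation consistency step) are all fine. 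The problem is the step you yourself identify as decisive: showing that $\mathbf{F}$ and $\mathbf{G}$ are block-monomial, i.e., that no genuine mixing across the $R$ blocks can occur. You do not prove this; you delegate it to the essential-uniqueness theorem of \cite{Del2008} and promise to ``check that the hypotheses match.'' They do not match, and the paper says so explicitly: its remark after the lemma states that the genericity supposition made here requires a separate proof precisely because Theorem~6.1 of \cite{Del2008} is not directly applicable. The structural reason is that in this lemma all $R$ block cores are constrained to equal the \emph{same} tensor $\mathcal{C}$; from the viewpoint of a general block-term decomposition, where the cores vary independently, this is a measure-zero configuration, so generic-uniqueness results for the BTD cannot be invoked off the shelf, and the deterministic conditions of \cite{Del2008} are not verified by the lemma's hypotheses ($I\leq JL$, $J\leq IL$, $L\geq 3$, full column rank of $\mathbf{A}$ and $\mathbf{B}$) without using the genericity of $\mathcal{C}$ in an essential way.

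The paper fills exactly this hole with a self-contained argument: it forms $\mathbf{E}_{\mathcal{T}}=(\mathbf{T}_{M\times N,3}-\mathbf{T}_{M\times N,2})\mathbf{T}^{\dagger}_{M\times N,1}=\mathbf{A}\left(\mathbf{I}_R\otimes\mathbf{E}_{\mathcal{C}}\right)\mathbf{A}^{\dagger}$, observes that $\mathsf{col}\{\mathbf{E}_{\mathcal{T}}\}=\mathsf{col}\{\mathbf{A}_1\}\oplus\cdots\oplus\mathsf{col}\{\mathbf{A}_R\}$ and likewise for any competing decomposition, and then rules out partial mixing by a counting argument: if some $\hat{\mathbf{A}}_{\hat{r}}$ drew columns from two different $\mathbf{A}_r$-spaces, the relation $\mathbf{E}_{\mathcal{T}}\mathbf{A}_r=\mathbf{A}_r\mathbf{E}_{\mathcal{C}}$ would force zero entries in every column of $\mathbf{E}_{\mathcal{C}}$, an event of probability $0$ because $\mathbf{E}_{\mathcal{C}}$ inherits genericity from $\mathcal{C}$. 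Only after this no-mixing step does the mode-2 unfolding argument pin down a common permutation and yield the per-block nonsingular transforms consistent with a single core. To repair your proof you would need to supply an argument of this kind (or verify, in detail, a deterministic uniqueness condition from \cite{Del2008} that genuinely covers the shared-core, generic-$\mathcal{C}$ setting); as written, the central claim of the lemma is assumed rather than established.
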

\begin{remark}
	Here, essential uniqueness means that we can find alternative
	$\hat{\mathcal{C}}\in\mathbb{R}^{I\times J\times L}$,
	$\hat{\mathbf{A}}_r\in\mathbb{R}^{M\times I}$, and
	$\hat{\mathbf{B}}_r\in\mathbb{R}^{N\times J}$ such
	that
	\begin{equation}
		\mathcal{T}=\sum_{r=1}^R\hat{\mathcal{C}}\times_1\hat{\mathbf{A}}_r\times_2\hat{\mathbf{B}}_r,
	\end{equation}
	only if
	\begin{align}
		\hat{\mathbf{A}}_r&=\mathbf{A}_r\mathbf{\Psi}_{\mathbf{A}}^{(r)}, \\
		\hat{\mathbf{B}}_r&=\mathbf{B}_r\mathbf{\Psi}_{\mathbf{B}}^{(r)},
	\end{align}
	and
	\begin{equation}
		\hat{\mathcal{C}}=\mathcal{C}\times_1 \left(\mathbf{\Psi}_{\mathbf{A}}^{(r)}\right)^{-1}\times_2 \left(\mathbf{\Psi}_{\mathbf{B}}^{(r)}\right)^{-1} ,
	\end{equation}
	where $\mathbf{\Psi}_{\mathbf{A}}^{(r)}\in\mathbb{R}^{I\times I}$, and
	$\mathbf{\Psi}_{\mathbf{B}}^{(r)}\in\mathbb{R}^{J\times J}$ are
	nonsingular matrices. We note that Lemma~\ref{Le: Essential
		Uniqueness} can be considered to be a variant of Theorem~6.1 in
	\cite{Del2008}. However, the different supposition made here on the
	genericity of $\mathcal{C}$ requires a separate proof, which we
	present now.
\end{remark}
\begin{proof}
	To begin, it is easy to verify that
	\begin{align}
		& \mathbf{T}_{M\times N,2}\mathbf{T}^{\dagger}_{M\times N,1}=\mathbf{A}(\mathbf{I}_R\otimes  \mathbf{C}_{I\times J,2}\mathbf{C}^{\dagger}_{I\times J,1})\mathbf{A}^{\dagger},\\& \mathbf{T}_{M\times N,3}\mathbf{T}^{\dagger}_{M\times N,1}=\mathbf{A}(\mathbf{I}_R\otimes  \mathbf{C}_{I\times J,3}\mathbf{C}^{\dagger}_{I\times J,1})\mathbf{A}^{\dagger}.
	\end{align}
	Subtracting these two equations, we obtain
	\begin{equation}
		\mathbf{E}_{\mathcal{T}}\triangleq(\mathbf{T}_{M\times N,3}-\mathbf{T}_{M\times N,2})\mathbf{T}^{\dagger}_{M\times N,1}=\mathbf{A}(\mathbf{I}_R\otimes\mathbf{E}_\mathcal{C})\mathbf{A}^{\dagger} ,
	\end{equation}
	where $\mathbf{E}_\mathcal{C}\triangleq(\mathbf{C}_{M\times
		N,3}-\mathbf{C}_{M\times N,2})\mathbf{C}^{\dagger}_{M\times
		N,1}$. Benefiting from the genericity of $\mathcal{C}$,
	$\mathbf{E}_\mathcal{C}$ is also generic. Moreover, it is implied that
	the column space of each $\mathbf{A}_r$ is an invariant subspace of
	$\mathbf{E}_\mathcal{T}$, which means that
	\begin{equation}
		\mathsf{col}\left\lbrace
		\mathbf{E}_\mathcal{T}\right\rbrace
		=\mathsf{col}\left\{\mathbf{A}_1\right\}\oplus\mathsf{col}\left\{\mathbf{A}_2\right\}\oplus\cdots\oplus\mathsf{col}\left\{\mathbf{A}_R\right\}.
	\end{equation}
	Now, if $\mathcal{T}$ can be alternatively decomposed into
	\begin{equation}
		\mathcal{T}=\sum_{r=1}^R\hat{\mathcal{C}}\times_1\hat{\mathbf{A}}_r\times_2\hat{\mathbf{B}}_r ,
	\end{equation}
	where $\hat{\mathcal{C}}\in\mathbb{R}^{I\times J\times L}$,
	$\hat{\mathbf{A}}_r\in\mathbb{R}^{M\times I}$, and
	$\hat{\mathbf{B}}_r\in\mathbb{R}^{N\times J}$, we could similarly
	derive that
	\begin{equation}
		\mathsf{col}\left\lbrace
		\mathbf{E}_\mathcal{T}\right\rbrace
		=\mathsf{col}\left\{\hat{\mathbf{A}}_1\right\}\oplus\mathsf{col}\left\{\hat{\mathbf{A}}_2\right\}\oplus\cdots\oplus\mathsf{col}\left\{\hat{\mathbf{A}}_R\right\}.
	\end{equation}
	Denoting $V^{(r)}_{\hat{r}}$ to indicate the number of columns in
	$\mathbf{A}_r$ that belong to
	$\mathsf{col}\left\{\hat{\mathbf{A}}_{\hat{r}}\right\},
	\hat{r}=1,2,\dots,R,$ we have that $V^{(r)}_{\hat{r}}\geq 0$, and
	$\sum_{\hat{r}=1}^RV^{(r)}_{\hat{r}}=I$. If
	$\max_{\hat{r}}\left\{V^{(r)}_{\hat{r}}\right\}<I$, then, because
	$\mathsf{col}\left\{\hat{\mathbf{A}}_{\hat{r}_1}\right\}\bigcap\mathsf{col}\left\{\hat{\mathbf{A}}_{\hat{r}_2}\right\}=\emptyset$,
	$\forall {\hat{r}_1}\neq{\hat{r}_2}$, it can be concluded that there
	exists at least one zero element in each column of
	$\mathbf{E}_\mathcal{C}$ since
	\begin{equation}
		\mathbf{E}_{\mathcal{T}}\mathbf{A}_r=\mathbf{A}_r\mathbf{E}_\mathcal{C}.
	\end{equation}
	Thus $\left\Vert\mathbf{E}_\mathcal{C}\right\Vert_0\leq I^2-I$, and
	there exist finite mappings to rearrange the elements of
	$\mathbf{E}_\mathcal{C}$ to get $\mathbf{E}^{'}_\mathcal{C}$ whose
	last row is all zero. However, due to the genericity of
	$\mathbf{E}_\mathcal{C}$, $\mathbf{E}^{'}_\mathcal{C}$ is also
	generic. Thus, based on Prop.~2.7 in \cite{WM2022},
	\begin{equation}
		\mathsf{P}\left(\operatorname{rank}\left\{\mathbf{E}^{'}_\mathcal{C}\right\}\leq\left\Vert(\mathbf{E}^{'}_\mathcal{C})^T\right\Vert_{2,0}
		< I\right)=0.
	\end{equation}
	Therefore, we are able to conclude that, with probability~1,
	\begin{equation}
		\max_{\hat{r}}\left\{V^{(r)}_{\hat{r}}\right\}=I,
	\end{equation}
	which implies that, $\forall {\hat{r}}$, $\exists r_{\hat{r}}\in
	\left\{1,2,\cdots,R\right\}$ such that
	\begin{equation}
		\mathsf{col}\left\{\hat{\mathbf{A}}_{\hat{r}}\right\}=\mathsf{col}\left\{{\mathbf{A}}_{r_{\hat{r}}}\right\} ,
	\end{equation}
	and, $\forall \hat{r}_1\neq \hat{r}_2$, we have $r_{\hat{r}_1}\neq
	r_{\hat{r}_2}$.
	Without loss of generality, suppose
	$\mathsf{col}\left\{\hat{\mathbf{A}}_{r}\right\}=\mathsf{col}\left\{{\mathbf{A}}_{r}\right\}$,
	$\forall r\in \left\{1,2,\dots,R\right\}$ (or we could permute the
	order of $\left\{\hat{\mathbf{A}}_{\hat{r}}\right\}_{\hat{r}=1}^R$ to
	have this be true).  Then it follows
	that
	\begin{equation}
		\hat{\mathbf{A}}_{r}={\mathbf{A}}_{r}\mathbf{\Psi}_{\mathbf{A}}^{(r)} ,
	\end{equation}
	where $\mathbf{\Psi}_{\mathbf{A}}^{(r)}\in \mathbb{R}^{I\times I}$ is
	some nonsingular matrix.
	
	Applying the same analysis above on the second dimension of
	$\mathcal{T}$, it can be deduced that, with probability~1, $\forall
	{\hat{r}}$, there exist $r_{\hat{r}}\in \left\{1,2,\dots,R\right\}$ and
	nonsingular $\mathbf{\Psi}_{\mathbf{B}}^{({\hat{r}})}\in
	\mathbb{R}^{J\times J}$ such that
	\begin{equation}
		\hat{\mathbf{B}}_{\hat{r}}={\mathbf{B}}_{r_{\hat{r}}}\mathbf{\Psi}_{\mathbf{B}}^{({\hat{r}})}.
	\end{equation}
	Consequently, we have 
	\begin{align}
		\mathcal{T}=\sum_{\hat{r}=1}^R{\mathcal{C}}\times_1{\mathbf{A}}_{\hat{r}}\times_2{\mathbf{B}}_{\hat{r}}&=\sum_{\hat{r}=1}^R\hat{\mathcal{C}}\times_1\hat{\mathbf{A}}_{\hat{r}}\times_2\hat{\mathbf{B}}_{\hat{r}}\nonumber\\&=\sum_{\hat{r}=1}^R\hat{\mathcal{C}}\times_1{\mathbf{A}}_{\hat{r}}\mathbf{\Psi}_{\mathbf{A}}^{(\hat{r})}\times_2{\mathbf{B}}_{r_{\hat{r}}}\mathbf{\Psi}_{\mathbf{B}}^{({\hat{r}})}\nonumber\\&=\sum_{\hat{r}=1}^R\hat{\mathcal{C}}^{({\hat{r}})}\times_1{\mathbf{A}}_{\hat{r}}\times_2{\mathbf{B}}_{r_{\hat{r}}} ,
	\end{align}
	where $\hat{\mathcal{C}}^{({\hat{r}})}\triangleq
	\hat{\mathcal{C}}\times_1\mathbf{\Psi}_{\mathbf{A}}^{(\hat{r})}\times_2\mathbf{\Psi}_{\mathbf{B}}^{({\hat{r}})}$. Performing
	mode-2 unfolding, we further have
	\begin{align}
		\mathbf{T}_{[2]}&=\begin{bmatrix}\mathbf{B}_1&\mathbf{B}_2&\cdots&\mathbf{B}_R\end{bmatrix}\begin{bmatrix}
			\mathbf{C}_{[2]}&\quad&\quad&\quad\\
			\quad&\mathbf{C}_{[2]}&\quad&\quad\\
			\quad&\quad&\ddots&\quad\\
			\quad&\quad&\quad&\mathbf{C}_{[2]}
		\end{bmatrix}\nonumber\\&\times\left(\mathbf{I}_{LR}\otimes\mathbf{A}\right)^T\nonumber\\&=\begin{bmatrix}\mathbf{B}_{r_{1}}&\mathbf{B}_{r_{2}}&\cdots&\mathbf{B}_{r_{R}}\end{bmatrix}\begin{bmatrix}
			\hat{\mathbf{C}}^{({{1}})}_{[2]}&\quad&\quad&\quad\\
			\quad&\hat{\mathbf{C}}^{{({2})}}_{[2]}&\quad&\quad\\
			\quad&\quad&\ddots&\quad\\
			\quad&\quad&\quad&\hat{\mathbf{C}}^{{({R})}}_{[2]}
		\end{bmatrix}\nonumber\\&\times\left(\mathbf{I}_{LR}\otimes\mathbf{A}\right)^T\nonumber\\&=\begin{bmatrix}\mathbf{B}_1&\mathbf{B}_2&\cdots&\mathbf{B}_R\end{bmatrix}\mathbf{\Pi}_{\mathbf{B}}\begin{bmatrix}
			\hat{\mathbf{C}}^{({{1}})}_{[2]}&\quad&\quad&\quad\\
			\quad&\hat{\mathbf{C}}^{{({2})}}_{[2]}&\quad&\quad\\
			\quad&\quad&\ddots&\quad\\
			\quad&\quad&\quad&\hat{\mathbf{C}}^{{({R})}}_{[2]}
		\end{bmatrix}\nonumber\\&\times\left(\mathbf{I}_{LR}\otimes\mathbf{A}\right)^T ,
	\end{align}
	where $\mathbf{\Pi}_{\mathbf{B}}\in\mathbb{R}^{JR\times JR}$ is a
	block permutation matrix. Because
	$\begin{bmatrix}\mathbf{B}_1&\mathbf{B}_2&\cdots&\mathbf{B}_R\end{bmatrix}$ has full column rank, and $\left(\mathbf{I}_{LR}\otimes\mathbf{A}\right)^T$ is
	of full row rank, we have that
	\begin{equation}
		\begin{bmatrix}
			\mathbf{C}_{[2]}&\quad&\quad&\quad\\
			\quad&\mathbf{C}_{[2]}&\quad&\quad\\
			\quad&\quad&\ddots&\quad\\
			\quad&\quad&\quad&\mathbf{C}_{[2]}
		\end{bmatrix}=\mathbf{\Pi}_{\mathbf{B}}\begin{bmatrix}
			\hat{\mathbf{C}}^{({{1}})}_{[2]}&\quad&\quad&\quad\\
			\quad&\hat{\mathbf{C}}^{{({2})}}_{[2]}&\quad&\quad\\
			\quad&\quad&\ddots&\quad\\
			\quad&\quad&\quad&\hat{\mathbf{C}}^{{({R})}}_{[2]}
		\end{bmatrix}.
	\end{equation}
	Thus, $\mathbf{\Pi}_{\mathbf{B}}$ block-wisely permutes a block diagonal matrix into another block diagonal matrix, which happens if and only if
	\begin{equation}
		\mathbf{\Pi}_{\mathbf{B}}=\mathbf{I}_{JR}.
	\end{equation}
	It then follows directly that $r_{\hat{r}}=\hat{r}$, and, more importantly, 
	\begin{align}
		\hat{\mathbf{A}}_1&=\mathbf{A}_1\mathbf{\Psi}_{\mathbf{A}}^{(1)},\nonumber\\
		\hat{\mathbf{A}}_2&=\mathbf{A}_2\mathbf{\Psi}_{\mathbf{A}}^{(2)},\nonumber\\
		&\vdots\nonumber\\
		\hat{\mathbf{A}}_R&=\mathbf{A}_R\mathbf{\Psi}_{\mathbf{A}}^{(R)},\\
		\hat{\mathbf{B}}_1&=\mathbf{B}_1\mathbf{\Psi}_{\mathbf{B}}^{(1)},\nonumber\\
		\hat{\mathbf{B}}_2&=\mathbf{B}_2\mathbf{\Psi}_{\mathbf{B}}^{(2)},\nonumber\\
		&\vdots\nonumber\\
		\hat{\mathbf{B}}_R&=\mathbf{B}_R\mathbf{\Psi}_{\mathbf{B}}^{(R)},
		\\
		\mathcal{C}&=\hat{\mathcal{C}}\times_1\mathbf{\Psi}_{\mathbf{A}}^{(1)}\times_2\mathbf{\Psi}_{\mathbf{B}}^{(1)}=\hat{\mathcal{C}}\times_1\mathbf{\Psi}_{\mathbf{A}}^{(2)}\times_2\mathbf{\Psi}_{\mathbf{B}}^{(2)}\nonumber\\&\times=\dots=\hat{\mathcal{C}}\times_1\mathbf{\Psi}_{\mathbf{A}}^{(R)}\times_2\mathbf{\Psi}_{\mathbf{B}}^{(R)},
	\end{align}
	which completes the proof.
\end{proof}

With Lemma~\ref{Le: Essential Uniqueness} in hand, we are now ready to
prove Thm.~\ref{Th:Recovery Guarantee}.
\setcounter{theorem}{1} 
\renewcommand{\thetheorem}{\ref{Th:Recovery Guarantee}}
\begin{theorem}
	\textit{Suppose the SRI $\mathcal{Z}\in\mathbb{R}^{M_1\times
			M_2\times S}$, HSI $\mathcal{X}\in \mathbb{R} ^{m_1\times
			m_2\times S}$, and MSI $\mathcal{Y}\in \mathbb{R} ^{M_1\times
			M_2\times s}$ satisfy relationship \eqref{eq: Matrix
			Formulation}. Suppose further that the Tucker decomposition of
		$\mathcal{Z}$ and the KD of spatial-degradation matrix
		$\mathbf{D}\in\mathbb{R}^{M_1M_2\times m_1m_2}$ are 
		\begin{align}
			\mathcal{Z}&=\mathcal{G}\times_1 \mathbf{U}_1\times_2 \mathbf{U}_2\times_3 \mathbf{U}_3,\\
			\mathbf{D}&=\sum_{r = 1}^{\mathsf{kr}_{\mathbf{D}}}  (\mathbf{P}_2^{(r)}\otimes \mathbf{P}_1^{(r)})^T ,
		\end{align}
		where $\mathcal{G}\in \mathbb{R}^{L_1\times L_2\times C}$ is drawn from an absolutely continuous distribution; and
		$\mathbf{U}_1$, $\mathbf{U}_2$, and $\mathbf{U}_3$ have
		full column rank. Then, if it is true that
		\begin{equation}
			\begin{aligned}
				&L_1\leq L_2C, \,L_2\leq L_1C,\, S\geq 3\\ & \operatorname{rank}\left\{\left[\mathbf{P}_1^{(1)}\mathbf{U}_1,\cdots,\mathbf{P}_1^{(\mathsf{kr}_{\mathbf{D}})}\mathbf{U}_1\right]\right\}=L_1\mathsf{kr}_{\mathbf{D}},\\
				& \operatorname{rank}\left\{\left[\mathbf{P}_2^{(1)}\mathbf{U}_2,\cdots,\mathbf{P}_2^{(\mathsf{kr}_{\mathbf{D}})}\mathbf{U}_2\right]\right\}=L_2\mathsf{kr}_{\mathbf{D}},\\ &
				\operatorname{rank}\left\{\mathbf{Y}_{[1]}\right\}=L_1, \\ &
				\operatorname{rank}\left\{\mathbf{Y}_{[2]}\right\}=L_2,
			\end{aligned}
		\end{equation}
		any solution of Tucker-rank at most $(L_1,L_2,S)$ to GTF-HSR
		recovers SRI $\mathcal{Z}$ with probability 1.}
\end{theorem}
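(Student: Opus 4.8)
The plan is to prove that the only Tucker-rank-at-most-$(L_1,L_2,S)$ tensor $\hat{\mathcal{Z}}\in\mathbb{R}^{M_1\times M_2\times S}$ solving the GTF-HSR system \eqref{eq:Reformulation} is $\mathcal{Z}$ itself, working one observation equation at a time. First I would exploit the MSI equation $\mathcal{Y}=\hat{\mathcal{Z}}\times_3\mathbf{R}$ to fix the two spatial subspaces: mode-$i$ unfolding gives $\mathbf{Y}_{[i]}=\hat{\mathbf{Z}}_{[i]}(\mathbf{R}\otimes\mathbf{I})^{T}$ for $i=1,2$, so $\mathsf{col}\{\mathbf{Y}_{[i]}\}\subseteq\mathsf{col}\{\hat{\mathbf{Z}}_{[i]}\}$; combined with the Tucker-rank bound $\operatorname{rank}\{\hat{\mathbf{Z}}_{[i]}\}\le L_i$ and the hypothesis $\operatorname{rank}\{\mathbf{Y}_{[i]}\}=L_i$ from \eqref{eq:Rank Conditions 2}, this forces $\mathsf{col}\{\hat{\mathbf{Z}}_{[i]}\}=\mathsf{col}\{\mathbf{Y}_{[i]}\}$. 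The same computation on the true decomposition yields $\mathsf{col}\{\mathbf{Y}_{[i]}\}=\mathsf{col}\{\mathbf{U}_i\}$, so $\mathsf{col}\{\hat{\mathbf{Z}}_{[i]}\}=\mathsf{col}\{\mathbf{U}_i\}$ and hence $\hat{\mathcal{Z}}=\hat{\mathcal{H}}\times_1\mathbf{U}_1\times_2\mathbf{U}_2$ for the unambiguous core $\hat{\mathcal{H}}\triangleq\hat{\mathcal{Z}}\times_1\mathbf{U}_1^{\dagger}\times_2\mathbf{U}_2^{\dagger}\in\mathbb{R}^{L_1\times L_2\times S}$.

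Second I would substitute this into the HSI equation. Both $\mathcal{Z}$---via \eqref{eq: Matrix Formulation} and the assumed Kronecker decomposition $\mathbf{D}=\sum_{r}(\mathbf{P}_2^{(r)}\otimes\mathbf{P}_1^{(r)})^{T}$---and the candidate $\hat{\mathcal{Z}}$ satisfy $\mathcal{X}=\sum_{r}\mathcal{C}\times_1(\mathbf{P}_1^{(r)}\mathbf{U}_1)\times_2(\mathbf{P}_2^{(r)}\mathbf{U}_2)$, with $\mathcal{C}=\hat{\mathcal{H}}$ for the candidate and $\mathcal{C}=\mathcal{G}\times_3\mathbf{U}_3$ for $\mathcal{Z}$. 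Subtracting gives $\sum_{r}\mathcal{E}\times_1\mathbf{A}_r\times_2\mathbf{B}_r=\mathbf{0}$, where $\mathcal{E}\triangleq\hat{\mathcal{H}}-\mathcal{G}\times_3\mathbf{U}_3$, $\mathbf{A}_r\triangleq\mathbf{P}_1^{(r)}\mathbf{U}_1$, $\mathbf{B}_r\triangleq\mathbf{P}_2^{(r)}\mathbf{U}_2$. This is exactly the block-term setting of Lemma~\ref{Le: Essential Uniqueness}: the two stacking-rank conditions of \eqref{eq:Rank Conditions 2} say that $[\mathbf{A}_1,\dots,\mathbf{A}_{\mathsf{kr}_{\mathbf{D}}}]$ and $[\mathbf{B}_1,\dots,\mathbf{B}_{\mathsf{kr}_{\mathbf{D}}}]$ have full column rank, and $L_1\le L_2C$, $L_2\le L_1C$, $S\ge3$ are that lemma's dimension hypotheses with $\mathcal{G}$ as the generic core. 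Invoking Lemma~\ref{Le: Essential Uniqueness} (its essential-uniqueness conclusion forces $\mathcal{E}=\mathbf{0}$ here, since the $\mathbf{A}_r,\mathbf{B}_r$ are the same on both sides and of full column rank, so $\mathcal{C}\mapsto\sum_{r}\mathcal{C}\times_1\mathbf{A}_r\times_2\mathbf{B}_r$ is injective) gives $\hat{\mathcal{H}}=\mathcal{G}\times_3\mathbf{U}_3$. Putting the two stages together, $\hat{\mathcal{Z}}=\hat{\mathcal{H}}\times_1\mathbf{U}_1\times_2\mathbf{U}_2=\mathcal{G}\times_1\mathbf{U}_1\times_2\mathbf{U}_2\times_3\mathbf{U}_3=\mathcal{Z}$, and the ``probability~1'' is inherited from the genericity of $\mathcal{G}$ used by Lemma~\ref{Le: Essential Uniqueness}.

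I expect the main obstacle to be the mode-3 bookkeeping in the second stage. The object $\mathcal{G}\times_3\mathbf{U}_3$ that must equal $\hat{\mathcal{H}}$ lives in the $C$-dimensional subspace $\mathsf{col}\{\mathbf{U}_3\}$, not in all of $\mathbb{R}^{S}$, so it is not a generic element of $\mathbb{R}^{L_1\times L_2\times S}$ and Lemma~\ref{Le: Essential Uniqueness} cannot be quoted verbatim in ambient dimension $S$. I would bridge this either by first deducing from $\mathbf{X}_{[3]}=\hat{\mathbf{Z}}_{[3]}\mathbf{D}$ and the generic rank of $\mathbf{X}_{[3]}$ that $\mathsf{col}\{\hat{\mathbf{Z}}_{[3]}\}\subseteq\mathsf{col}\{\mathbf{U}_3\}$---contracting mode~3 down to dimension $C$ before the lemma is applied---or by sidestepping genericity altogether and proving directly that the two stacking-rank conditions make the slicewise map $\mathbf{E}_l\mapsto\sum_{r}\mathbf{A}_r\mathbf{E}_l\mathbf{B}_r^{T}=[\mathbf{A}_1,\dots,\mathbf{A}_{\mathsf{kr}_{\mathbf{D}}}]\,(\mathbf{I}_{\mathsf{kr}_{\mathbf{D}}}\otimes\mathbf{E}_l)\,[\mathbf{B}_1,\dots,\mathbf{B}_{\mathsf{kr}_{\mathbf{D}}}]^{T}$ injective by one left and one right cancellation of a full-rank stack. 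That cancellation is routine, but it is precisely where one must be careful about which stack and which side carries the full-rank property; everything else reduces to bookkeeping with mode products and pseudo-inverses.
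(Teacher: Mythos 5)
Your proposal is correct, and it takes a genuinely different route from the paper. The paper's proof first establishes the \emph{essential uniqueness} of the block-term decomposition
$\mathcal{X}=\sum_{r}\mathcal{G}\times_1\mathbf{P}_1^{(r)}\mathbf{U}_1\times_2\mathbf{P}_2^{(r)}\mathbf{U}_2\times_3\mathbf{U}_3$
via the supplemental lemma (a variant of De Lathauwer's Theorem~6.1, which is exactly where the genericity of $\mathcal{G}$ and the conditions $L_1\leq L_2C$, $L_2\leq L_1C$, $S\geq 3$ enter), obtains the candidate's Tucker factors only up to nonsingular ambiguity matrices $\mathbf{\Psi}_1^{(r)},\mathbf{\Psi}_2^{(r)},\mathbf{\Psi}_3$, and then uses $\operatorname{rank}\{\mathbf{Y}_{[1]}\}=L_1$, $\operatorname{rank}\{\mathbf{Y}_{[2]}\}=L_2$ together with the full column rank of $\mathbf{P}_i^{(r)}\mathbf{U}_i$ to cancel those ambiguities. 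You reverse the architecture: the MSI rank conditions pin the spatial subspaces of the candidate outright, so that $\hat{\mathcal{Z}}=\hat{\mathcal{H}}\times_1\mathbf{U}_1\times_2\mathbf{U}_2$ with the \emph{same} $\mathbf{U}_1,\mathbf{U}_2$ as the truth, and the HSI equation then reduces to a homogeneous linear system in $\mathcal{E}=\hat{\mathcal{H}}-\mathcal{G}\times_3\mathbf{U}_3$ whose coefficient map is injective by the two stacking-rank conditions (your slicewise identity $\sum_r\mathbf{A}_r\mathbf{E}_l\mathbf{B}_r^T=\mathbf{A}\left(\mathbf{I}_{\mathsf{kr}_{\mathbf{D}}}\otimes\mathbf{E}_l\right)\mathbf{B}^T$ followed by left and right cancellation of the full-column-rank stacks is exactly right). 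As you correctly anticipate in your final paragraph, the essential-uniqueness lemma cannot be quoted verbatim because $\mathcal{G}\times_3\mathbf{U}_3$ is not generic in $\mathbb{R}^{L_1\times L_2\times S}$; but your second bridge (direct cancellation) closes this cleanly and should be stated as the actual argument, with no appeal to the lemma or to the first bridge via $\mathbf{X}_{[3]}$. What your route buys is economy and strength: it needs neither the genericity of $\mathcal{G}$ nor $L_1\leq L_2C$, $L_2\leq L_1C$, $S\geq 3$, and it yields exact recovery deterministically rather than with probability~1, and without any constraint on the candidate's mode-3 rank (matching the theorem's bound $(L_1,L_2,S)$, whereas the paper's proof informally works with $(L_1,L_2,C)$). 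What the paper's heavier machinery buys is reuse: the essential-uniqueness lemma is also what drives Cor.~\ref{Cro:TeF Inability}, where the competing decomposition has \emph{different} factor matrices and your shared-factor cancellation argument would not apply.
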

\begin{proof}
	Combining \eqref{eq:Reformulation}
	with the Tucker decomposition of $\mathcal{Z}$, we have
	\begin{align}
		\mathcal{X}&=\sum_{r=1}^{\mathsf{kr}_{\mathbf{D}}}\mathcal{G}\times_1 \mathbf{P}_1^{(r)}\mathbf{U}_1\times_2 \mathbf{P}_2^{(r)}\mathbf{U}_2\times_3 \mathbf{U}_3, \label{eq:hsidecomp}\\
		\mathcal{Y}&=\mathcal{G}\times_1 \mathbf{U}_1\times_2 \mathbf{U}_2\times_3 \mathbf{R}\mathbf{U}_3. \label{eq:msidecomp}
	\end{align}
	Denoting that
	\begin{align}
		\mathbf{P}_1&\triangleq \left[\mathbf{P}_1^{(1)}\mathbf{U}_1,\mathbf{P}_1^{(2)}\mathbf{U}_1,\cdots,\mathbf{P}_1^{(R)}\mathbf{U}_1\right],\\
		\mathbf{P}_2&\triangleq \left[\mathbf{P}_2^{(1)}\mathbf{U}_2,\mathbf{P}_2^{(2)}\mathbf{U}_2,\cdots,\mathbf{P}_2^{(R)}\mathbf{U}_2\right],
	\end{align}
	then $\mathbf{P}_1$ and
	$\mathbf{P}_2$ are of full column rank according to our conditions. Considering that
	$\mathbf{U}_3$ has full column rank along with the conditions $L_1\leq
	L_2C$, $L_2\leq L_1C$, and $S\geq 3$, we are now capable of invoking
	Lemma~\ref{Le: Essential Uniqueness} above and Thm~5.1 in \cite{Del2008}
	to conclude the essential uniqueness of the decomposition on
	$\mathcal{X}$ in the form of \eqref{eq:hsidecomp} with respect to
	$\mathcal{G}$, $\left\{\mathbf{P}_1^{(r)}\mathbf{U}_1\right\}_{r=1}^{\mathsf{kr}_{\mathbf{D}}}$, $\left\{\mathbf{P}_2^{(r)}\mathbf{U}_2\right\}_{r=1}^{\mathsf{kr}_{\mathbf{D}}}$, and $\mathbf{U}_3$.
	
	Now, let $\hat{\mathcal{Z}}\in\mathbb{R}^{M_1\times M_2\times S}$ be
	an arbitrary solution of Tucker-rank at most $(L_1,L_2,C)$ to the
	GTF-HSR problem of Def.~\ref{Def: Tensor ReFormulation}. Its Tucker
	decomposition then expands as
	\begin{equation}
		\hat{\mathcal{Z}}=\hat{\mathcal{G}}\times_1 \hat{\mathbf{U}}_1\times_2 \hat{\mathbf{U}}_2\times_3 \hat{\mathbf{U}}_3 ,
		\label{eq:expand}
	\end{equation}
	where $\mathcal{G}\in \mathbb{R}^{\hat{L}_1\times \hat{L}_2\times
		\hat{C}}$ is of Tucker-rank $\left(\hat{L}_1,\hat{L}_2,
	\hat{C}\right)$, and $\hat{\mathbf{U}}_1\in\mathbb{R}^{M_1\times
		\hat{L_1}}$, $\hat{\mathbf{U}}_2\in\mathbb{R}^{M_2\times
		\hat{L}_2}$, and $\hat{\mathbf{U}}_3\in\mathbb{R}^{S\times
		\hat{C}}$ all have full column rank. Substituting
	\eqref{eq:expand} into \eqref{eq:Reformulation}, we have
	\begin{align}
		\mathcal{X}&=\sum_{r=1}^{\mathsf{kr}_{\mathbf{D}}}\hat{\mathcal{G}}\times_1 \mathbf{P}_1^{(r)}\hat{\mathbf{U}}_1\times_2 \mathbf{P}_2^{(r)}\hat{\mathbf{U}}_2\times_3 \hat{\mathbf{U}}_3,
		\label{eq: HSI Decom Solution}\\
		\mathcal{Y}&=\hat{\mathcal{G}}\times_1 \hat{\mathbf{U}}_1\times_2 \hat{\mathbf{U}}_2\times_3 \mathbf{R}\hat{\mathbf{U}}_3.
		\label{eq: MSI Decom Solution}
	\end{align}
	Then we notice that, from \eqref{eq: HSI Decom Solution}, we can
	have only that $\hat{L}_1=L_1$, $\hat{L}_2=L_2$, and $\hat{C}=C$ which
	would otherwise contradict the uniqueness of \eqref{eq:hsidecomp} in
	terms of $\mathcal{G}$,
	$\left\{\mathbf{P}_1^{(r)}\mathbf{U}_1\right\}_{r=1}^{\mathsf{kr}_{\mathbf{D}}}$,
	$\left\{\mathbf{P}_2^{(r)}\mathbf{U}_2\right\}_{r=1}^{\mathsf{kr}_{\mathbf{D}}}$,
	and $\mathbf{U}_3$. It then follows from such uniqueness
	that,
	$\forall\,r\in\left\{1,2,\dots,\mathsf{kr}_{\mathbf{D}}\right\}$, there exist
	$\mathbf{\Psi}_{1}^{(r)}\in\mathbb{R}^{L_1\times
		L_1}$ and $\mathbf{\Psi}_{2}^{(r)}\in\mathbb{R}^{L_2\times L_2}$ that
	are nonsingular matrices such that
	\begin{align}
		\mathbf{P}_1^{(r)}\hat{\mathbf{U}}_1&=\mathbf{P}_1^{(r)}\mathbf{U}_1\mathbf{\Psi}_{1}^{(r)},\\
		\mathbf{P}_2^{(r)}\hat{\mathbf{U}}_2&=\mathbf{P}_2^{(r)}\mathbf{U}_2\mathbf{\Psi}_{2}^{(r)}, \\
		\hat{\mathbf{U}}_3&=\mathbf{U}_3\mathbf{\Psi}_{3}\\
		\hat{\mathcal{G}}&=\mathcal{G}\times_1 \left(\mathbf{\Psi}_{1}^{(r)}\right)^{-1}\times_2 \left(\mathbf{\Psi}_{2}^{(r)}\right)^{-1}\times_3 \left(\mathbf{\Psi}_{3}\right)^{-1} ,
	\end{align}
	where $\mathbf{\Psi}_3\in\mathbb{R}^{C\times C}$ is also
	nonsingular. Subsequently, unfolding \eqref{eq:msidecomp} and
	\eqref{eq: MSI Decom Solution}, we have
	\begin{align}
		\mathbf{Y}_{[1]}&=\mathbf{U}_1\mathbf{G}_{[1]}\left(\mathbf{P}_{3}^{(r)}\mathbf{U}_3\otimes \mathbf{U}_2\right)^T, \\
		\mathbf{Y}_{[2]}&=\mathbf{U}_2\mathbf{G}_{[2]}\left(\mathbf{P}_{3}^{(r)}\mathbf{U}_3\otimes \mathbf{U}_1\right)^T,\\
		\mathbf{Y}_{[1]}&=\hat{\mathbf{U}}_1\hat{\mathbf{G}}_{[1]}\left(\mathbf{P}_{3}^{(r)}\hat{\mathbf{U}}_3\otimes \hat{\mathbf{U}}_2\right)^T, \\
		\mathbf{Y}_{[2]}&=\hat{\mathbf{U}}_2\hat{\mathbf{G}}_{[2]}\left(\mathbf{P}_{3}^{(r)}\hat{\mathbf{U}}_3\otimes \hat{\mathbf{U}}_1\right)^T.
	\end{align}
	Since $\operatorname{rank}\left\{\mathbf{Y}_{[1]}\right\}=L_1$ and
	$\operatorname{rank}\left\{\mathbf{Y}_{[2]}\right\}=L_2$, it is
	concluded
	that
	\begin{align}
		\mathsf{col}\left(\mathbf{Y}_{[1]}\right)&=\mathsf{col}\left(\mathbf{U}_{1}\right)=\mathsf{col}\left(\hat{\mathbf{U}}_{1}\right), \\
		\mathsf{col}\left(\mathbf{Y}_{[2]}\right)&=\mathsf{col}\left(\mathbf{U}_{2}\right)=\mathsf{col}\left(\hat{\mathbf{U}}_{2}\right) ,
	\end{align}
	which indicates the existence of nonsingular
	$\mathbf{Q}_1\in\mathbb{R}^{L_1\times L_1}$ and
	$\mathbf{Q}_2\in\mathbb{R}^{L_2\times L_2}$ such that
	\begin{align}
		\hat{\mathbf{U}}_1&=\mathbf{U}_1\mathbf{Q}_1, \\
		\hat{\mathbf{U}}_2&=\mathbf{U}_2\mathbf{Q}_2.
	\end{align}
	We then have that
	\begin{align}
		\mathbf{P}_1^{(r)}\hat{\mathbf{U}}_1&=\mathbf{P}_1^{(r)}\mathbf{U}_1\mathbf{Q}_1=\mathbf{P}_1^{(r)}\mathbf{U}_1\mathbf{\Psi}_{1}^{(r)},\\\mathbf{P}_2^{(r)}\hat{\mathbf{U}}_2&=\mathbf{P}_2^{(r)}\mathbf{U}_2\mathbf{Q}_2=\mathbf{P}_2^{(r)}\mathbf{U}_2\mathbf{\Psi}_{2}^{(r)}.
	\end{align}
	Since $\mathbf{P}_1^{(r)}\mathbf{U}_1$ has full column rank, it is
	then true that, $\forall
	r\in\left\{1,2,\dots,\mathsf{kr}_{\mathbf{D}}\right\}$,
	\begin{align}
		\mathbf{Q}_1&=\mathbf{\Psi}_{1}^{(r)}, \\
		\mathbf{Q}_2&=\mathbf{\Psi}_{2}^{(r)} .
	\end{align}
	It finally follows that
	\begin{align}
		\hat{\mathcal{Z}}&=\hat{\mathcal{G}}\times_1 \hat{\mathbf{U}}_1\times_2 \hat{\mathbf{U}}_2\times_3 \hat{\mathbf{U}}_3\nonumber\\
		&=\hat{\mathcal{G}}\times_1 {\mathbf{U}}_1\mathbf{Q}_1\times_2 {\mathbf{U}}_2\mathbf{Q}_2\times_3 {\mathbf{U}}_3\mathbf{\Psi}_3\nonumber\\
		&=\mathcal{G}\times_1 \left(\mathbf{\Psi}_{1}^{(r)}\right)^{-1}\times_2 \left(\mathbf{\Psi}_{2}^{(r)}\right)^{-1}\times_3 \left(\mathbf{\Psi}_{3}\right)^{-1}\nonumber\\&\quad\quad\times_1 {\mathbf{U}}_1\mathbf{Q}_1\times_2 {\mathbf{U}}_2\mathbf{Q}_2\times_3 {\mathbf{U}}_3\mathbf{\Psi}_3\nonumber\\
		&=\mathcal{G}\times_1 \left(\mathbf{Q}_{1}\right)^{-1}\times_2 \left(\mathbf{Q}_{2}\right)^{-1}\times_3 \left(\mathbf{\Psi}_{3}\right)^{-1}\nonumber\\&\quad\quad\times_1 {\mathbf{U}}_1\mathbf{Q}_1\times_2 {\mathbf{U}}_2\mathbf{Q}_2\times_3 {\mathbf{U}}_3\mathbf{\Psi}_3\nonumber\\
		&=\mathcal{G}\times_1 {\mathbf{U}}_1\mathbf{Q}_1\left(\mathbf{Q}_{1}\right)^{-1}\times_2 {\mathbf{U}}_2\mathbf{Q}_2\left(\mathbf{Q}_{2}\right)^{-1}\times_3{\mathbf{U}}_3\mathbf{\Psi}_3 \left(\mathbf{\Psi}_{3}\right)^{-1}\nonumber\\
		&=\mathcal{G}\times_1 \mathbf{U}_1\times_2 \mathbf{U}_2\times_3 \mathbf{U}_3\nonumber\\
		&=\mathcal{Z} ,
	\end{align}
	which completes the proof.
\end{proof}
\subsection{Proof of Cor.~\ref{Cro:TeF Inability}}
\setcounter{corollary}{0} 
\renewcommand{\thecorollary}{3.\arabic{corollary}}
\begin{corollary}
	\textit{Under the conditions of Thm.~\ref{Th:Recovery Guarantee},
		if it is true that
		\begin{equation}
			\begin{aligned}
				&L_1\leq L_2C, \,L_2\leq L_1C,\, S\geq 3,\\
				& \operatorname{rank}\left\{\left[\mathbf{P}_1^{(1)}\mathbf{U}_1,\cdots,\mathbf{P}_1^{(\mathsf{kr}_{\mathbf{D}})}\mathbf{U}_1\right]\right\}=L_1\mathsf{kr}_{\mathbf{D}},\\
				& \operatorname{rank}\left\{\left[\mathbf{P}_2^{(1)}\mathbf{U}_2,\cdots,\mathbf{P}_2^{(\mathsf{kr}_{\mathbf{D}})}\mathbf{U}_2\right]\right\}=L_2\mathsf{kr}_{\mathbf{D}},
			\end{aligned}
		\end{equation}
		then any solution to TF-HSR recovers SRI $\mathcal{Z}$ with
		probability 0 when $\mathsf{kr}_{\mathbf{D}}>1$.}
\end{corollary}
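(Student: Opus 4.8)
The plan is to derive a contradiction from two incompatible evaluations of $\operatorname{rank}\{\mathbf{X}_{[1]}\}$: the small value that $\mathcal{Z}$ being a TF-HSR solution would force, versus the large value that $\mathcal{X}$ generically has on account of its genuine $\mathsf{kr}_{\mathbf{D}}$-term structure. Throughout, ``with probability $1$'' refers to the randomness of the generic core $\mathcal{G}$.

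First I would record what recovery would force. Suppose some solution of the TF-HSR problem of Def.~\ref{Def: Tensor Formulation} equals $\mathcal{Z}$; then there exist $\mathbf{P}_1\in\mathbb{R}^{m_1\times M_1}$ and $\mathbf{P}_2\in\mathbb{R}^{m_2\times M_2}$ with $\mathcal{X}=\mathcal{Z}\times_1\mathbf{P}_1\times_2\mathbf{P}_2$ (the equation $\mathcal{Y}=\mathcal{Z}\times_3\mathbf{R}$ holds automatically from \eqref{eq: Matrix Formulation}). Unfolding along mode~$1$ and using $\mathbf{Z}_{[1]}=\mathbf{U}_1\mathbf{G}_{[1]}(\mathbf{U}_3\otimes\mathbf{U}_2)^T$ gives $\mathbf{X}_{[1]}=\mathbf{P}_1\mathbf{Z}_{[1]}(\mathbf{I}_S\otimes\mathbf{P}_2)^T$, so $\operatorname{rank}\{\mathbf{X}_{[1]}\}\le\operatorname{rank}\{\mathbf{Z}_{[1]}\}\le L_1$, because $\mathbf{U}_1$ has only $L_1$ columns.

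Next I would compute the true mode-$1$ rank. Combining \eqref{eq:Reformulation} with the Tucker decomposition of $\mathcal{Z}$ and unfolding along mode~$1$ yields $\mathbf{X}_{[1]}=\mathbf{A}\,\mathbf{\Gamma}\,\mathbf{C}^T$, where $\mathbf{A}=\left[\mathbf{P}_1^{(1)}\mathbf{U}_1,\dots,\mathbf{P}_1^{(\mathsf{kr}_{\mathbf{D}})}\mathbf{U}_1\right]$, $\mathbf{\Gamma}$ is block-diagonal with $\mathsf{kr}_{\mathbf{D}}$ diagonal blocks all equal to $\mathbf{G}_{[1]}$, and $\mathbf{C}=\left[\mathbf{U}_3\otimes\mathbf{P}_2^{(1)}\mathbf{U}_2,\dots,\mathbf{U}_3\otimes\mathbf{P}_2^{(\mathsf{kr}_{\mathbf{D}})}\mathbf{U}_2\right]$. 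By hypothesis $\operatorname{rank}\{\mathbf{A}\}=L_1\mathsf{kr}_{\mathbf{D}}$. Since $\mathcal{G}$ is generic and $L_1\le L_2C$, the linear rearrangement $\mathbf{G}_{[1]}\in\mathbb{R}^{L_1\times L_2C}$ has rank $L_1$ with probability~$1$, so $\mathbf{\Gamma}$ has full row rank $L_1\mathsf{kr}_{\mathbf{D}}$. Moreover $\mathbf{C}$ has full column rank $L_2C\,\mathsf{kr}_{\mathbf{D}}$: the relation $\sum_r(\mathbf{U}_3\otimes\mathbf{P}_2^{(r)}\mathbf{U}_2)\mathsf{Vec}(\mathbf{W}_r)=\mathbf{0}$ rewrites as $\sum_r\mathbf{P}_2^{(r)}\mathbf{U}_2\mathbf{W}_r\mathbf{U}_3^{T}=\mathbf{0}$, and, since $\mathbf{U}_3^T$ has full row rank and $\left[\mathbf{P}_2^{(1)}\mathbf{U}_2,\dots,\mathbf{P}_2^{(\mathsf{kr}_{\mathbf{D}})}\mathbf{U}_2\right]$ has full column rank $L_2\mathsf{kr}_{\mathbf{D}}$ by hypothesis, this forces every $\mathbf{W}_r=\mathbf{0}$. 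Because right-multiplication by a full-row-rank matrix preserves rank, $\operatorname{rank}\{\mathbf{X}_{[1]}\}=\operatorname{rank}\{\mathbf{A}\mathbf{\Gamma}\mathbf{C}^T\}=\operatorname{rank}\{\mathbf{A}\}=L_1\mathsf{kr}_{\mathbf{D}}$ with probability~$1$. On this probability-$1$ event, recovery by TF-HSR would require $L_1\mathsf{kr}_{\mathbf{D}}\le L_1$, impossible since $\mathsf{kr}_{\mathbf{D}}>1$; hence $\mathsf{P}\!\left(\text{a TF-HSR solution coincides with }\mathcal{Z}\right)=0$.

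The step I expect to be the crux is the exact determination of $\operatorname{rank}\{\mathbf{X}_{[1]}\}$ in the previous paragraph (rather than a mere bound): this is precisely where genericity of $\mathcal{G}$ (each block $\mathbf{G}_{[1]}$ attains maximal rank $L_1$, using $L_1\le L_2C$) and the two hypothesized rank identities on the stacked matrices $\left[\mathbf{P}_i^{(r)}\mathbf{U}_i\right]_r$ are consumed. It is also what explains why the weaker-looking ``escape route'' \eqref{eq:Estimation Criterion} fails to rescue TF-HSR: $\mathsf{null}\{\mathbf{Z}_{[3]}\}$ is generically of large dimension $M_1M_2-C$, so a crude dimension count does not obstruct the columns of $\mathbf{D}-(\mathbf{P}_2\otimes\mathbf{P}_1)^T$ lying in it — only the mode-$1$ (equivalently mode-$2$) rank identity above does. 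An essentially equivalent route would be to invoke the essential uniqueness of the $\mathsf{kr}_{\mathbf{D}}$-term block-term decomposition of $\mathcal{X}$ established in the proof of Thm.~\ref{Th:Recovery Guarantee}, which a one-term decomposition coming from a recovered $\mathcal{Z}$ would violate; I prefer the unfolding-rank argument as it is shorter and self-contained.
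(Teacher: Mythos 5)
Your proof is correct, but it follows a genuinely different route from the paper's. The paper proves Cor.~\ref{Cro:TeF Inability} by reusing the machinery built for Thm.~\ref{Th:Recovery Guarantee}: under the stated rank and genericity conditions, the $\mathsf{kr}_{\mathbf{D}}$-term decomposition \eqref{eq:hsidecomp} of $\mathcal{X}$ is essentially unique almost surely (Lemma~\ref{Le: Essential Uniqueness} together with the block-term results of \cite{Del2008}), and if $\mathcal{Z}$ solved TF-HSR then $\mathcal{X}$ would also admit the single-term expansion $\mathcal{X}=\mathcal{G}\times_1 \mathbf{P}_1\mathbf{U}_1\times_2 \mathbf{P}_2\mathbf{U}_2\times_3\mathbf{U}_3$, contradicting that uniqueness. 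You instead argue by a direct multilinear-rank count on the mode-$1$ unfolding: writing $\mathbf{X}_{[1]}=\mathbf{A}\bigl(\mathbf{I}_{\mathsf{kr}_{\mathbf{D}}}\otimes\mathbf{G}_{[1]}\bigr)\mathbf{C}^T$, the hypothesized rank of $\mathbf{A}=\bigl[\mathbf{P}_1^{(1)}\mathbf{U}_1,\cdots,\mathbf{P}_1^{(\mathsf{kr}_{\mathbf{D}})}\mathbf{U}_1\bigr]$, the generic full row rank of $\mathbf{G}_{[1]}$ (using $L_1\leq L_2C$), and the full column rank of $\mathbf{C}$ (which you correctly reduce, via the vectorization identity, to the full column rank of $\mathbf{U}_3$ and of the stacked $\mathbf{P}_2^{(r)}\mathbf{U}_2$) give $\operatorname{rank}\{\mathbf{X}_{[1]}\}=L_1\mathsf{kr}_{\mathbf{D}}$ almost surely, whereas TF-HSR recovery of $\mathcal{Z}$ would force $\operatorname{rank}\{\mathbf{X}_{[1]}\}\leq L_1$; these steps all check out. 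The trade-off: the paper's argument is two lines once the uniqueness lemma is in place and keeps the corollary inside the block-term-decomposition framework, whereas yours is self-contained and more elementary, needs only a subset of the hypotheses (you never consume $L_2\leq L_1C$ or $S\geq 3$), and arguably makes the contradiction more airtight, since a one-term expansion is not literally an alternative $\mathsf{kr}_{\mathbf{D}}$-term decomposition of the same family, and your rank bound is exactly the quantitative fact that justifies calling it incompatible with the unique one.
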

\begin{proof}
	We first recall that, under these conditions, we have the essential
	uniqueness of the decomposition of the HSI $\mathcal{X}$ in the form
	of
	\begin{equation}
		\mathcal{X}=\sum_{r=1}^{\mathsf{kr}_{\mathbf{D}}}\mathcal{G}\times_1
		\mathbf{P}_1^{(r)}\mathbf{U}_1\times_2
		\mathbf{P}_2^{(r)}\mathbf{U}_2\times_3 \mathbf{U}_3
	\end{equation}
	almost surely. Thus, if SRI $\mathcal{Z}$ solves TF-HSR, there would
	exist an alternative decomposition of $\mathcal{X}$ in the form
	of
	\begin{equation}
		\mathcal{X}=\mathcal{G}\times_1 \mathbf{P}_1\mathbf{U}_1\times_2
		\mathbf{P}_2\mathbf{U}_2\times_3\mathbf{U}_3,
	\end{equation}
	which would contradict the uniqueness above. As such, we assert that
	SRI $\cal Z$ solves the TF-HSR problem with probability~0, thereby
	completing the proof.
	
\end{proof}
\section{Detailed ADMM Derivations for Algs.~\ref{alg:BGS-TeRF Part 1} and \ref{alg:BGS-TeRF Part 2}}
\subsection{ADMM for Alg.~\ref{alg:BGS-TeRF Part 1}}
ADMM is used in Alg.~\ref{alg:BGS-TeRF Part 1} to solve the
optimization in \eqref{eq:SubIden}, namely,
\begin{equation}
	\begin{aligned}
		\min_{\mathbf{U}_i^\mathcal{X},\mathbf{A}_i}\,\Big\lVert \mathbf{X}_{[i]}-
		&\begin{bmatrix}\mathbf{P}_i^{(1)} & \cdots & \mathbf{P}_i^{(\mathsf{kr}_{\mathbf{D}})}\end{bmatrix}\left(\mathbf{I}_{\mathsf{kr}_{\mathbf{D}}}\otimes \begin{bmatrix}\mathbf{U}_i^\mathcal{Y} & \mathbf{U}_i^\mathcal{X}\end{bmatrix}\right)\mathbf{B}_{i}\Big\rVert_F^2\\&+\mu\left\lVert \mathbf{A}_i\right\rVert_1 \\
		&\text{s.t.} \quad \mathbf{A}_i = \mathbf{B}_i ,
	\end{aligned}
\end{equation}
The augmented Lagrangian function is 
\begin{equation}
	\begin{aligned}
		\mathsf{L}&\left(\mathbf{U}_i^\mathcal{X},\mathbf{B}_{i},\mathbf{A}_i,\mathbf{M}_i\right)\\&\triangleq\left\lVert \mathbf{X}_{[i]}-\left[\mathbf{P}_i^{(1)},\cdots,\mathbf{P}_i^{(\mathsf{kr}_{\mathbf{D}})}\right]\left(\mathbf{I}_{\mathsf{kr}_{\mathbf{D}}}\otimes \left[\mathbf{U}_i^\mathcal{Y},\mathbf{U}_i^\mathcal{X}\right]\right)\mathbf{B}_{i}\right\rVert_F^2\\&+\mu\left\lVert \mathbf{A}_i\right\rVert_1 +\left\langle\mathbf{M}_i,\mathbf{A}_i-\mathbf{B}_i\right\rangle+\frac{\rho}{2}\left\Vert\mathbf{A}_i-\mathbf{B}_i\right\Vert_F^2\\&=\left\lVert \mathbf{X}_{[i]}-\left[\mathbf{P}_i^{(1)},\cdots,\mathbf{P}_i^{(\mathsf{kr}_{\mathbf{D}})}\right]\left(\mathbf{I}_{\mathsf{kr}_{\mathbf{D}}}\otimes \left[\mathbf{U}_i^\mathcal{Y},\mathbf{U}_i^\mathcal{X}\right]\right)\mathbf{B}_{i}\right\rVert_F^2\\&+\mu\left\lVert \mathbf{A}_i\right\rVert_1 +\frac{\rho}{2}\left\Vert\mathbf{A}_i-\mathbf{B}_i+\frac{\mathbf{M}_i}{\rho}\right\Vert_F^2-\frac{\left\Vert\mathbf{M}_i\right\Vert_F^2}{2\rho} ,
	\end{aligned}
\end{equation}
where the auxiliary variables $\mathbf{M}_i$ are of the same size as
$\mathbf{A}_i$.
\begin{itemize} 
	\item
	The $\mathbf{U}_i^\mathcal{X}$ subproblem:\newline
	Solving for $\mathbf{U}_i^\mathcal{X}$ proceeds by solving
	\begin{equation}
		\begin{aligned}
		\min_{\mathbf{U}_i^\mathcal{X}}\mathsf{L}_{\mathbf{U}_i^\mathcal{X}}\triangleq\Big\lVert \mathbf{X}_{[i]}-\begin{bmatrix}\mathbf{P}_i^{(1)}&\cdots&\mathbf{P}_i^{(\mathsf{kr}_{\mathbf{D}})}\end{bmatrix}\\\times\left(\mathbf{I}_{\mathsf{kr}_{\mathbf{D}}}\otimes \begin{bmatrix}\mathbf{U}_i^\mathcal{Y}&\mathbf{U}_i^\mathcal{X}\end{bmatrix}\right)\mathbf{B}_{i}\Big\rVert_F^2.
		\end{aligned}
		\label{eq:subprob1}
	\end{equation}
	Partitioning $\mathbf{B}_i$ into
	\begin{equation}
		\begin{aligned}
			\mathbf{B}_i=\Big[&
				(\mathbf{B}_i^{(1)\mathcal{Y}})^T 
				,(\mathbf{B}_i^{(1)\mathcal{X}})^T 
				,(\mathbf{B}_i^{(2)\mathcal{Y}})^T 
				,(\mathbf{B}_i^{(2)\mathcal{X}})^T 
				,\\&\cdots 
				(\mathbf{B}_i^{(\mathsf{kr}_{\mathbf{D}})\mathcal{Y}})^T 
				,(\mathbf{B}_i^{(\mathsf{kr}_{\mathbf{D}})\mathcal{X}})^T
			\Big]^T
		\end{aligned}
	\end{equation}
	where $\mathbf{B}_i^{(r)\mathcal{Y}}\in\mathbb{R}^{K_i\times
		m_{3-i}S}$ and
	$\mathbf{B}_i^{(r)\mathcal{X}}\in\mathbb{R}^{(L_i-K_i)\times
		m_{3-i}S}$, $\mathsf{L}_{\mathbf{U}_i^\mathcal{X}}$ is then
	reorganized as
	\begin{align}
		\mathsf{L}_{\mathbf{U}_i^\mathcal{X}}&=\Big\lVert \mathbf{X}_{[i]}-\begin{bmatrix}\mathbf{P}_i^{(1)}&\cdots&\mathbf{P}_i^{(\mathsf{kr}_{\mathbf{D}})}\end{bmatrix}\nonumber\\&\quad\times\left(\mathbf{I}_{\mathsf{kr}_{\mathbf{D}}}\otimes \begin{bmatrix}\mathbf{U}_i^\mathcal{Y}&\mathbf{U}_i^\mathcal{X}\end{bmatrix}\right)\mathbf{B}_{i}\Big\rVert_F^2\nonumber\\&=\left\lVert \mathbf{X}_{[i]}-\sum_{r=1}^{\mathsf{kr}_{\mathbf{D}}}\mathbf{P}_i^{(r)}\begin{bmatrix}\mathbf{U}_i^\mathcal{Y}&\mathbf{U}_i^\mathcal{X}\end{bmatrix}\begin{bmatrix}
			\mathbf{B}_i^{(r)\mathcal{Y}}\nonumber\\
			\mathbf{B}_i^{(r)\mathcal{X}}
		\end{bmatrix}\right\rVert_F^2\nonumber\\&=\Bigg\lVert \mathbf{X}_{[i]}-\sum_{r=1}^{\mathsf{kr}_{\mathbf{D}}}\mathbf{P}_i^{(r)}\mathbf{U}_i^\mathcal{Y}
		\mathbf{B}_i^{(r)\mathcal{Y}}\nonumber\\&\quad-\sum_{r=1}^{\mathsf{kr}_{\mathbf{D}}}\mathbf{P}_i^{(r)}\mathbf{U}_i^\mathcal{X}
		\mathbf{B}_i^{(r)\mathcal{X}}\Bigg\rVert_F^2.
	\end{align}
	Thus, the gradient
	$\nabla\mathsf{L}_{\mathbf{U}_i^\mathcal{X}}$ is becomes
	\begin{equation}
		\begin{aligned}
			\nabla\mathsf{L}_{\mathbf{U}_i^\mathcal{X}}&=2\Bigg(\sum_{r_1=1}^{\mathsf{kr}_{\mathbf{D}}}\sum_{r_2=1}^{\mathsf{kr}_{\mathbf{D}}}\left(\mathbf{P}_i^{r_1}\right)^T\mathbf{P}_i^{r_2}\mathbf{U}_i^\mathcal{X}\mathbf{B}_i^{(r_2)\cal X}\\&\times\left(\mathbf{B}_i^{(r_1)\cal X}\right)^T-\sum_{r=1}^{\mathsf{kr}_{\mathbf{D}}}\left(\mathbf{P}_i^{r}\right)^T\widetilde{\mathbf{X}}_{[i]}\left(\mathbf{B}_i^{(r)\cal X}\right)^T\Bigg) ,
		\end{aligned}
	\end{equation}
	where
	$\widetilde{\mathbf{X}}_{[i]}\triangleq\mathbf{X}_{[i]}-\sum_{r=1}^{\mathsf{kr}_{\mathbf{D}}}\mathbf{P}_i^{(r)}\mathbf{U}_i^\mathcal{Y}
	\mathbf{B}_i^{(r)\mathcal{Y}}.$ Then \eqref{eq:subprob1}
	is solved by setting
	$\nabla\mathsf{L}_{\mathbf{U}_i^\mathcal{X}}=0$ and applying CG \cite{GV2013}.
	
	\item
	The $\mathbf{B}_i$ subproblem:\newline
	Solving for $\mathbf{B}_i$ proceeds by solving
	\begin{equation}
		\begin{aligned}
			\min_{\mathbf{B}_{i}}\Bigg\lVert& \mathbf{X}_{[i]}-\begin{bmatrix}\mathbf{P}_i^{(1)}&\cdots&\mathbf{P}_i^{(\mathsf{kr}_{\mathbf{D}})}\end{bmatrix}\left(\mathbf{I}_{\mathsf{kr}_{\mathbf{D}}}\otimes \begin{bmatrix}\mathbf{U}_i^\mathcal{Y}&\mathbf{U}_i^\mathcal{X}\end{bmatrix}\right)\\&\times\mathbf{B}_{i}\Bigg\rVert_F^2+\frac{\rho}{2}\left\Vert\mathbf{A}_i-\mathbf{B}_i+\frac{\mathbf{M}_i}{\rho}\right\Vert_F^2.
		\end{aligned}
	\end{equation}
	The corresponding objective function is strongly convex
	and has the unique solution
	\begin{equation}
		\begin{aligned}				\mathbf{B}_i=&\left(\mathbf{D}_i^T\mathbf{D}_i+\frac{\rho}{2}\mathbf{I}_{L_i\mathsf{kr}_{\mathbf{D}}}\right)^{-1}\\&\times\left(\mathbf{D}_i^T\mathbf{X}_{[i]}+\frac{\rho}{2}\left(\mathbf{A}_i+\frac{\mathbf{M}_i}{\rho}\right)\right) ,
		\end{aligned}
	\end{equation}
	where
	$$\mathbf{D}_i\triangleq\begin{bmatrix}\mathbf{P}_i^{(1)}&\cdots&\mathbf{P}_i^{(\mathsf{kr}_{\mathbf{D}})}\end{bmatrix}\left(\mathbf{I}_{\mathsf{kr}_{\mathbf{D}}}\otimes
	\begin{bmatrix}\mathbf{U}_i^\mathcal{Y}&\mathbf{U}_i^\mathcal{X}\end{bmatrix}\right).$$
	\item
	The $\mathbf{A}_i$subproblem:\newline
	The solution to 
	\begin{equation}
		\min_{\mathbf{A}_i}\mu\left\lVert \mathbf{A}_i\right\rVert_1 +\frac{\rho}{2}\left\Vert\mathbf{A}_i-\mathbf{B}_i+\frac{\mathbf{M}_i}{\rho}\right\Vert_F^2
	\end{equation}
	is well-known to be
	\begin{equation}
		\mathbf{A}_i=\mathtt{soft}\left(\mathbf{B}_i-\frac{\mathbf{M}_i}{\rho},\frac{\mu}{\rho}\right).
	\end{equation}
	
	\item
	Updating $\mathbf{M}_i$:\newline
	The final step is the updating of the auxiliary variables $\mathbf{M}_i$
	which is done as
	\begin{equation}
		\mathbf{M}_i\leftarrow\mathbf{M}_i+\rho(\mathbf{A}_i-\mathbf{B}_i).
	\end{equation}
	
\end{itemize}

\subsection{ADMM for Alg.~\ref{alg:BGS-TeRF Part 2}}
ADMM is used in Alg.~\ref{alg:BGS-TeRF Part 2} to solve the
optimization in \eqref{eq:BGS subproblem}, namely
\begin{equation}
	\begin{gathered}  
		\min_{{\left\{\mathcal{G}_r\right\}_{r=1}^{\mathsf{kr}_{\mathbf{D}}}
			},\mathcal{G},\hat{\mathbf{G}}} \left\lVert
		\hat{\mathbf{G}}\right\rVert_{2,\gamma} \\
		\begin{aligned}
			\text{s.t.}\quad \mathcal{X} & =\sum_{r =
				1}^{\mathsf{kr}_{\mathbf{D}}}
			\mathcal{G}_r\times_1\mathbf{P}_1^{(r)}\mathbf{U}_1\times_2\mathbf{P}_2^{(r)}\mathbf{U}_2\times_3\mathbf{U}_3,
			\\ \mathcal{Y} &
			=\mathcal{G}\times_1\mathbf{U}_1\times_2\mathbf{U}_2\times_3\mathbf{R}\mathbf{U}_3,
			\\ \hat{\mathbf{G}} & =\mathbf{G}_{[\mathbf{t}]}, \\ \mathcal{G}
			& =\mathcal{G}_r, \,\, r=1,2,\dots,\mathsf{kr}_{\mathbf{D}}.
		\end{aligned}
	\end{gathered}
\end{equation}
The augmented Lagrangian function is
\begin{align}
	\mathsf{L}&\Big({\left\{\mathcal{G}_r\right\}_{r=1}^{\mathsf{kr}_{\mathbf{D}}} },\mathcal{G},\hat{\mathbf{G}},\mathcal{P}^\mathcal{X},\mathcal{P}^\mathcal{Y},\mathbf{W},\left\{\mathcal{P}_r\right\}_{r=1}^{\mathsf{kr}_{\mathbf{D}}}\Big)\nonumber\\&\triangleq {}\left\lVert \hat{\mathbf{G}}\right\rVert_{2,\gamma}+\left\langle\mathcal{P}^\mathcal{X},\mathcal{X}-\sum_{r = 1}^{\mathsf{kr}_{\mathbf{D}}}  \mathcal{G}_r\times_1\mathbf{P}_1^{(r)}\mathbf{U}_1\times_2\mathbf{P}_2^{(r)}\mathbf{U}_2\times_3\mathbf{U}_3\right\rangle\nonumber\\&\quad+\left\langle\mathcal{P}^\mathcal{Y},\mathcal{Y}-\mathcal{G}\times_1\mathbf{U}_1\times_2\mathbf{U}_2\times_3\mathbf{R}\mathbf{U}_3\right\rangle+\left\langle\mathbf{W},\hat{\mathbf{G}}-\mathbf{G}_{[B;\mathbf{t}]}\right\rangle\nonumber\\&\quad+\sum_{r=1}^{\mathsf{kr}_{\mathbf{D}}}\left\langle\mathcal{P}_r,\mathcal{G}-\mathcal{G}_r\right\rangle\nonumber\\&\quad+\frac{\rho}{2}\Bigg(\left\Vert\mathcal{X}-\sum_{r = 1}^{\mathsf{kr}_{\mathbf{D}}}  \mathcal{G}_r\times_1\mathbf{P}_1^{(r)}\mathbf{U}_1\times_2\mathbf{P}_2^{(r)}\mathbf{U}_2\times_3\mathbf{U}_3\right\Vert_F^2\nonumber\\&\quad\quad\quad+\left\Vert\mathcal{Y}-\mathcal{G}\times_1\mathbf{U}_1\times_2\mathbf{U}_2\times_3\mathbf{R}\mathbf{U}_3\right\Vert_F^2\nonumber\\&\quad\quad\quad+\left\Vert\hat{\mathbf{G}}-\mathbf{G}_{[B;\mathbf{t}]}\right\Vert_F^2+\sum_{r = 1}^{\mathsf{kr}_{\mathbf{D}}}\left\Vert\mathcal{G}-\mathcal{G}_r\right\Vert_F^2\Bigg)={}\left\lVert \hat{\mathbf{G}}\right\rVert_{2,\gamma}\nonumber\\&\quad+\frac{\rho}{2}\Bigg(\left\Vert\mathcal{X}-\sum_{r = 1}^{\mathsf{kr}_{\mathbf{D}}}  \mathcal{G}_r\times_1\mathbf{P}_1^{(r)}\mathbf{U}_1\times_2\mathbf{P}_2^{(r)}\mathbf{U}_2\times_3\mathbf{U}_3+\frac{\mathcal{P}^{\cal{X}}}{\rho}\right\Vert_F^2\nonumber\\&\quad\quad\quad+\left\Vert\mathcal{Y}-\mathcal{G}\times_1\mathbf{U}_1\times_2\mathbf{U}_2\times_3\mathbf{R}\mathbf{U}_3+\frac{\mathcal{P}^{\cal{Y}}}{\rho}\right\Vert_F^2\nonumber\\&\quad\quad\quad+\left\Vert\hat{\mathbf{G}}-\mathbf{G}_{[B;\mathbf{t}]}+\frac{\mathbf{W}}{\rho}\right\Vert_F^2+\sum_{r = 1}^{\mathsf{kr}_{\mathbf{D}}}\left\Vert\mathcal{G}-\mathcal{G}_r+\frac{\mathcal{P}_r}{\rho}\right\Vert_F^2\Bigg)\nonumber\\&\quad\quad\quad-\frac{1}{2\rho}\Bigg(\left\Vert\mathcal{P}^\mathcal{X}\right\Vert_F^2+\left\Vert\mathcal{P}^\mathcal{Y}\right\Vert_F^2+\left\Vert\mathbf{W}\right\Vert_F^2+\sum_{r = 1}^{\mathsf{kr}_{\mathbf{D}}}\left\Vert\mathcal{P}_r\right\Vert_F^2\Bigg)
\end{align}
where $\mathcal{P}^\mathcal{X}$, $\mathcal{P}^\mathcal{Y}$,
$\mathbf{W}$, and
$\left\{\mathcal{P}_r\right\}_{r=1}^{\mathsf{kr}_{\mathbf{D}}}$ are
auxiliary variables.
\begin{itemize}
	\item
	The $\mathcal{G}_r$ subproblem:\newline
	Solving for $\mathcal{G}_r$ proceeds by solving
	\begin{align}
		\min_{\mathcal{G}_r}\,\mathsf{L}_{\mathcal{G}_r}&\triangleq\Bigg\Vert\mathcal{X}-\sum_{r^* = 1}^{\mathsf{kr}_{\mathbf{D}}}  \mathcal{G}_{r^*}\times_1\mathbf{P}_1^{(r^*)}\mathbf{U}_1\times_2\mathbf{P}_2^{(r^*)}\mathbf{U}_2\nonumber\\&\quad\times_3\mathbf{U}_3+\frac{\mathcal{P}^{\cal{X}}}{\rho}\Bigg\Vert_F^2+\left\Vert\mathcal{G}-\mathcal{G}_r+\frac{\mathcal{P}_r}{\rho}\right\Vert_F^2\nonumber\\&=\Bigg\Vert\mathcal{X}+\frac{\mathcal{P}^{\cal{X}}}{\rho}-\sum_{r^*\neq r}  \mathcal{G}_{r^*}\times_1\mathbf{P}_1^{(r^*)}\mathbf{U}_1\nonumber\\&\quad\times_2\mathbf{P}_2^{(r^*)}\mathbf{U}_2\times_3\mathbf{U}_3-\mathcal{G}_{r}\times_1\mathbf{P}_1^{(r)}\mathbf{U}_1\nonumber\\&\quad\times_2\mathbf{P}_2^{(r)}\mathbf{U}_2\times_3\mathbf{U}_3\Bigg\Vert_F^2\nonumber\\&+\left\Vert\mathcal{G}_r-\left(\mathcal{G}+\frac{\mathcal{P}_r}{\rho}\right)\right\Vert_F^2 .
		\label{eq:grsub}
	\end{align}
	By defining
	\begin{align}
		\mathcal{H}&=\mathcal{X}+\frac{\mathcal{P}^{\cal{X}}}{\rho}\nonumber\\&\quad-\sum_{r^*\neq r}  \mathcal{G}_{r^*}\times_1\mathbf{P}_1^{(r^*)}\mathbf{U}_1\times_2\mathbf{P}_2^{(r^*)}\mathbf{U}_2\times_3\mathbf{U}_3,\\
		\mathbf{Q}_1&=\mathbf{P}_1^{(r)}\mathbf{U}_1,\\
		\mathbf{Q}_2&=\mathbf{P}_2^{(r)}\mathbf{U}_2,\\
		\mathbf{Q}_3&=\mathbf{U}_3,\\\mathcal{S}&=\mathcal{G}_r,\\
		\mathcal{K}&=\mathcal{G}+\frac{\mathcal{P}_r}{\rho},
	\end{align}
	\eqref{eq:grsub} then falls into the form
	\begin{equation}
		\min_{\mathcal{S}}\,\left\Vert\mathcal{H}-\mathcal{S}\times_1\mathbf{Q}_1\times_2\mathbf{Q}_2\times_3\mathbf{Q}_3\right\Vert_F^2+\tau\left\Vert\mathcal{S}-\mathcal{K}\right\Vert_F^2 ,
		\label{eq:grsub2}
	\end{equation}
	where $\tau=1$.  To optimize \eqref{eq:grsub2}, we first denote the
	eigenvalue decompositions of $\mathbf{Q}_1^T\mathbf{Q}_1$,
	$\mathbf{Q}_2^T\mathbf{Q}_2$, and $\mathbf{Q}_3^T\mathbf{Q}_3$
	as
	\begin{align}
		\mathbf{Q}_1^T\mathbf{Q}_1&=\mathbf{V}_1\mathbf{\Sigma}_1\mathbf{V}_1^T\\
		\mathbf{Q}_2^T\mathbf{Q}_2&=\mathbf{V}_2\mathbf{\Sigma}_2\mathbf{V}_2^T\\
		\mathbf{Q}_3^T\mathbf{Q}_3&=\mathbf{V}_3\mathbf{\Sigma}_3\mathbf{V}_3^T,
	\end{align}
	respectively.  Then, letting
	$\mathcal{T}=\mathcal{H}\times_1\mathbf{Q}_1^T\times_2\mathbf{Q}_2^T\times_3\mathbf{Q}_3^T+\tau\mathcal{K}$
	and
	$\mathcal{T}'=\mathcal{T}\times_1\mathbf{V}_1^T\times_2\mathbf{V}_2^T\times_3\mathbf{V}_3^T$,
	the optimal solution is obtained via
	\begin{equation}
		\mathcal{S}=\mathcal{T}{''}\times_1\mathbf{V}_1\times_2\mathbf{V}_2\times_3\mathbf{V}_3 ,
	\end{equation}
	where
	$$\mathsf{Vec}\left(\mathcal{T}''\right)=\left(\mathbf{\Sigma}_3\otimes\mathbf{\Sigma}_2\otimes\mathbf{\Sigma}_1+\tau\mathbf{I}_{L_1L_2C}\right)^{-1}\mathsf{Vec}\left(\mathcal{T}'\right).$$
	
	\item
	The $\hat{\mathbf{G}}$ subproblem:\newline
	Solving for $\hat{\mathbf{G}}$ requires solving
	\begin{equation}
		\min_{\hat{\mathbf{G}}}\,\frac{\rho}{2}\left\Vert\hat{\mathbf{G}}-\mathbf{G}_{[B;\mathbf{t}]}+\frac{\mathbf{W}}{\rho}\right\Vert_F^2+{}\left\lVert \hat{\mathbf{G}}\right\rVert_{2,\gamma} ,
	\end{equation}
	which is a nonconvex, sparsity-inducing problem. We resort to the recently developed GAI \cite{ZZW2023} for an iterative solution.
	
	\item
	The $\mathcal{G}$ subproblem:\newline
	To find $\mathcal{G}$, we solve
	\begin{equation}
		\begin{aligned}
			\min_\mathcal{G}\,&\left\Vert\mathcal{Y}-\mathcal{G}\times_1\mathbf{U}_1\times_2\mathbf{U}_2\times_3\mathbf{R}\mathbf{U}_3+\frac{\mathcal{P}^{\cal{Y}}}{\rho}\right\Vert_F^2\nonumber\\&+\left\Vert\hat{\mathbf{G}}-\mathbf{G}_{[B;\mathbf{t}]}+\frac{\mathbf{W}}{\rho}\right\Vert_F^2+\sum_{r = 1}^{\mathsf{kr}_{\mathbf{D}}}\left\Vert\mathcal{G}-\mathcal{G}_r+\frac{\mathcal{P}_r}{\rho}\right\Vert_F^2.
		\end{aligned}
		\label{eq:G-subproblem}
	\end{equation}
	Introducing the variables
	\begin{align}
		\mathcal{H}&=\mathcal{Y}+\frac{\mathcal{P}^{\cal{Y}}}{\rho},\\
		\mathcal{S}&=\mathcal{G},\\
		\mathbf{Q}_1&=\mathbf{U}_1,\\
		\mathbf{Q}_2&=\mathbf{U}_2,\\
		\mathbf{Q}_3&=\mathbf{R}\mathbf{U}_3,\\
		\mathcal{K}&=\frac{\left(\mathcal{G}^{\mathbf{W}}+\sum_{r=1}^{\mathsf{kr}_{\mathbf{D}}}\mathcal{G}_r-\frac{\mathcal{P}_r}{\rho}\right)}{{\mathsf{kr}_{\mathbf{D}}}+1},\\
		\tau&={\mathsf{kr}_{\mathbf{D}}}+1 ,
	\end{align}
	where
	$\mathbf{G}^{\mathbf{W}}_{[\mathbf{t}]}\triangleq\hat{\mathbf{G}}+\frac{\mathbf{W}}{\rho}$
	is the B-unfolding of $\mathcal{G}^{\mathbf{W}}$, problem
	\eqref{eq:G-subproblem} is equivalent to \eqref{eq:grsub2}
	and can be optimized similarly.
	
	\item
	Updating $\mathcal{P}^\mathcal{X}$, $\mathcal{P}^\mathcal{Y}$,
	$\mathbf{W},\left\{\mathcal{P}_r\right\}_{r=1}^{\mathsf{kr}_{\mathbf{D}}}$:\newline
	Updating the auxiliary variables is done as
	\begin{align}
		\mathcal{P}^\mathcal{X}&\leftarrow\mathcal{P}^\mathcal{X}+\rho\Bigg(\mathcal{X}-\sum_{r = 1}^{\mathsf{kr}_{\mathbf{D}}}  \mathcal{G}_r\nonumber\\&\quad\times_1\mathbf{P}_1^{(r)}\mathbf{U}_1\times_2\mathbf{P}_2^{(r)}\mathbf{U}_2\times_3\mathbf{U}_3\Bigg),\\
		\mathcal{P}^\mathcal{Y}&\leftarrow\mathcal{P}^\mathcal{Y}+\rho\left(\mathcal{Y}-\mathcal{G}\times_1\mathbf{U}_1\times_2\mathbf{U}_2\times_3\mathbf{R}\mathbf{U}_3\right),\\
		\mathbf{W}&\leftarrow\mathbf{W}+\rho\left(\hat{\mathbf{G}}-\mathbf{G}_{[\mathbf{t}]}\right),\\
		\mathcal{P}_r&\leftarrow\mathcal{P}_r+\rho\left(\mathcal{G}-\mathcal{G}_r\right),\quad r=1,2,\dots,{\mathsf{kr}_{\mathbf{D}}}.
	\end{align}
	
\end{itemize}

\bibliographystyle{IEEEtran}
\bibliography{fowler,bibfile}

\begin{thebibliography}{10}
\providecommand{\url}[1]{#1}
\csname url@samestyle\endcsname
\providecommand{\newblock}{\relax}
\providecommand{\bibinfo}[2]{#2}
\providecommand{\BIBentrySTDinterwordspacing}{\spaceskip=0pt\relax}
\providecommand{\BIBentryALTinterwordstretchfactor}{4}
\providecommand{\BIBentryALTinterwordspacing}{\spaceskip=\fontdimen2\font plus
\BIBentryALTinterwordstretchfactor\fontdimen3\font minus
  \fontdimen4\font\relax}
\providecommand{\BIBforeignlanguage}[2]{{%
\expandafter\ifx\csname l@#1\endcsname\relax
\typeout{** WARNING: IEEEtran.bst: No hyphenation pattern has been}%
\typeout{** loaded for the language `#1'. Using the pattern for}%
\typeout{** the default language instead.}%
\else
\language=\csname l@#1\endcsname
\fi
#2}}
\providecommand{\BIBdecl}{\relax}
\BIBdecl

\bibitem{7946218}
N.~Yokoya, C.~Grohnfeldt, and J.~Chanussot, ``Hyperspectral and multispectral
  data fusion: A comparative review of the recent literature,'' \emph{IEEE
  Geoscience and Remote Sensing Magazine}, vol.~5, no.~2, pp. 29--56, 2017.

\bibitem{10252045}
S.~Chen, L.~Zhang, and L.~Zhang, ``Msdformer: Multiscale deformable transformer
  for hyperspectral image super-resolution,'' \emph{IEEE Transactions on
  Geoscience and Remote Sensing}, vol.~61, pp. 1--14, 2023.

\bibitem{MKM1999}
R.~Molina, A.~K. Katsaggelos, and J.~Mateos, ``{B}ayesian and regularization
  methods for hyperparameter estimation in image restoration,'' vol.~8, no.~2,
  pp. 231--246, Feb. 1999.

\bibitem{HEW2004}
R.~C. Hardie, M.~T. Eismann, and G.~L. Wilson, ``{MAP} estimation for
  hyperspectral image resolution enhancement using an auxiliary sensor,''
  vol.~13, no.~9, pp. 1174--1184, Sep. 2004.

\bibitem{KMW2011}
R.~Kawakami, Y.~Matsushita, J.~Wright, M.~Ben-Ezra, Y.-W. Tai, and K.~Ikeuchi,
  ``High-resolution hyperspectral imaging via matrix factorization,'' Colorado
  Springs, CO, Jun. 2011, pp. 2329--2336.

\bibitem{YYI2012}
N.~Yokoya, T.~Yairi, and A.~Iwasaki, ``Coupled nonnegative matrix factorization
  unmixing for hyperspectral and multispectral data fusion,'' vol.~50, no.~2,
  pp. 528--537, Feb. 2012.

\bibitem{LBS2015}
C.~Lanaras, E.~Baltsavias, and K.~Schindler, ``Hyperspectral super-resolution
  by coupled spectral unmixing,'' Santiago, Chile, Dec. 2015, pp. 3586--3594.

\bibitem{WBD2016}
Q.~Wei, J.~M. Bioucas-Dias, N.~Dobigeon, J.-Y. Tourneret, M.~Chen, and
  S.~Godsill, ``Multiband image fusion based on spectral unmixing,'' vol.~54,
  no.~12, pp. 7236--7249, Dec. 2016.

\bibitem{LMC2018}
C.-H. Lin, F.~Ma, C.-Y. Chi, and C.-H. Hsieh, ``A convex optimization-based
  coupled nonnegative matrix factorization algorithm for hyperspectral and
  multispectral data fusion,'' vol.~56, no.~3, pp. 1652--1667, Mar. 2018.

\bibitem{ASM2014}
N.~Akhtar, F.~Shafait, and A.~Mian, ``Sparse spatio-spectral representation for
  hyperspectral image super-resolution,'' Zurich, Switzerland, Sep. 2014, pp.
  63--78.

\bibitem{DFS2016}
W.~Dong, F.~Fu, G.~Shi, X.~Cao, J.~Wu, G.~Li, and X.~Li, ``Hyperspectral image
  super-resolution via non-negative structured sparse representation,''
  vol.~25, no.~5, pp. 2337--2352, May 2016.

\bibitem{HSZ2018}
X.-H. Han, B.~Shi, and Y.~Zheng, ``Self-similarity constrained sparse
  representation for hyperspectral image super-resolution,'' vol.~27, no.~11,
  pp. 5625--5637, Nov. 2018.

\bibitem{SBA2015}
M.~Sim{\~o}es, J.~Bioucas-Dias, L.~B. Almeida, and J.~Chanussot, ``A convex
  formulation for hyperspectral image superresolution via subspace-based
  regularization,'' vol.~53, no.~6, pp. 3373--3388, Jun. 2015.

\bibitem{HYX2020}
X.~Han, J.~Yu, J.-H. Xue, and W.~Sun, ``Hyperspectral and multispectral image
  fusion using optimized twin dictionaries,'' vol.~29, pp. 4709--4720.

\bibitem{XZB2021}
J.~Xue, Y.-Q. Zhao, Y.~Bu, W.~Liao, J.~C.-W. Chan, and W.~Philips,
  ``Spatial-spectral structured sparse low-rank representation for
  hyperspectral image super-resolution,'' vol.~30, pp. 3084--3097, 2021.

\bibitem{DL2019}
R.~Dian and S.~Li, ``Hyperspectral image super-resolution via subspace-based
  low tensor multi-rank regularization,'' vol.~28, no.~10, pp. 5135--5146, Oct.
  2019.

\bibitem{LLL2021}
N.~Liu, L.~Li, W.~Li, R.~Tao, J.~E. Fowler, and J.~Chanussot, ``Hyperspectral
  restoration and fusion with multispectral imagery by recasting low-rank
  tensor-approximation,'' vol.~59, no.~9, pp. 7817--7830, Sep. 2021.

\bibitem{HYL2022}
W.~He, Q.~Yao, C.~Li, N.~Yokoya, Q.~Zhao, H.~Zhang, and L.~Zhang, ``Non-local
  meets global: {A}n iterative paradigm for hyperspectral image restoration,''
  vol.~44, no.~4, pp. 2089--2107, Apr. 2022.

\bibitem{ASM2015}
N.~Akhtar, F.~Shafait, and A.~Mian, ``{B}ayesian sparse representation for
  hyperspectral image super resolution,'' Boston, MA, Jun. 2015, pp.
  3631--3640.

\bibitem{WDT2015}
Q.~Wei, N.~Dobigeon, and J.-Y. Tourneret, ``Fast fusion of multi-band images
  based on solving a {S}ylvester equation,'' vol.~24, no.~11, pp. 4109--4121,
  Nov. 2015.

\bibitem{ZCZ2022}
Z.~Wang, B.~Chen, H.~Zhang, and H.~Liu, ``Unsupervised hyperspectral and
  multispectral images fusion based on nonlinear variational probabilistic
  generative model,'' vol.~33, no.~2, pp. 721--735, Feb. 2022.

\bibitem{DQX2023}
W.~Dong, J.~Qu, S.~Xiao, T.~Zhang, Y.~Li, and X.~Jia, ``Noise prior knowledge
  informed {B}ayesian inference network for hyperspectral super-resolution,''
  vol.~32, pp. 3121--3135.

\bibitem{NUS2022}
H.~V. Nguyen, M.~O. Ulfarsson, J.~R. Sveinsson, and M.~Dalla~Mura, ``Deep
  {SURE} for unsupervised remote sensing image fusion,'' vol.~60, 2022.

\bibitem{DGL2023}
R.~Dian, A.~Guo, and S.~Li, ``Zero-shot hyperspectral sharpening,'' vol.~45,
  no.~10, pp. 12\,650--12\,666, Oct. 2023.

\bibitem{XZZ2022}
Q.~Xie, M.~Zhou, Q.~Zhao, Z.~Xu, and D.~Meng, ``{MHF}-{N}et: {A}n interpretable
  deep network for multispectral and hyperspectral image fusion,'' vol.~44,
  no.~3, pp. 1457--1473, Mar. 2022.

\bibitem{DFL2017}
R.~Dian, L.~Fang, and S.~Li, ``Hyperspectral image super-resolution via
  non-local sparse tensor factorization,'' Honolulu, HI, Jul. 2017, pp.
  3862--3871.

\bibitem{CHH2020}
Y.~Chen, T.-Z. Huang, W.~He, N.~Yokoya, and X.-L. Zhao, ``Hyperspectral image
  compressive sensing reconstruction using subspace-based nonlocal tensor ring
  decomposition,'' vol.~29, pp. 6813--6828, 2020.

\bibitem{KFS2018}
C.~I. Kanatsoulis, X.~Fu, N.~D. Sidiropoulos, and W.-K. Ma, ``Hyperspectral
  super-resolution: {A} coupled tensor factorization approach,'' vol.~66,
  no.~24, pp. 6503--6517, Dec. 2018.

\bibitem{LDF2018}
S.~Li, R.~Dian, L.~Fang, and J.~M. Bioucas-Dias, ``Fusing hyperspectral and
  multispectral images via coupled sparse tensor factorization,'' vol.~27,
  no.~8, pp. 4118--4130, Aug. 2018.

\bibitem{PUC2020}
C.~Pr{\'e}vost, K.~Usevich, P.~Comon, and D.~Brie, ``Hyperspectral
  super-resolution with coupled {T}ucker approximation: {R}ecoverability and
  {SVD}-based algorithms,'' vol.~68, pp. 931--946, 2020.

\bibitem{DFH2021}
M.~Ding, X.~Fu, T.-Z. Huang, J.~Wang, and X.-L. Zhao, ``Hyperspectral
  super-resolution via interpretable block-term tensor modeling,'' vol.~15,
  no.~4, pp. 641--656, Apr. 2021.

\bibitem{BZX2021}
Y.~Bu, Y.~Zhao, J.~Xue, J.~C.-W. Chan, S.~G. Kong, C.~Yi, J.~Wen, and B.~Wang,
  ``Hyperspectral and multispectral image fusion via graph {L}aplacian-guided
  coupled tensor decomposition,'' vol.~21, no.~1, pp. 648--662, Jan. 2021.

\bibitem{WGH2020}
W.~Wan, W.~Guo, H.~Huang, and J.~Liu, ``Nonnegative and nonlocal sparse tensor
  factorization-based hyperspectral image super-resolution,'' vol.~58, no.~12,
  pp. 8384--8394, Dec. 2020.

\bibitem{XWC2020a}
Y.~Xu, Z.~Wu, J.~Chanussot, and Z.~Wei, ``Hyperspectral images super-resolution
  via learning high-order coupled tensor ring representation,'' \emph{IEEE
  Transactions on Neural Networks and Learning Systems}, vol.~31, no.~11, pp.
  4747--4760, Nov. 2020.

\bibitem{YXZ2023}
J.~Yang, L.~Xiao, Y.-Q. Zhao, and J.~C.-W. Chan, ``Unsupervised deep tensor
  network for hyperspectral-multispectral image fusion,'' to appear.

\bibitem{XWC2020}
Y.~Xu, Z.~Wu, J.~Chanussot, P.~Comon, and Z.~Wei, ``Nonlocal coupled tensor
  {CP} decomposition for hyperspectral and multispectral image fusion,''
  vol.~58, no.~1, pp. 348--362, Jan. 2020.

\bibitem{YZX2022}
Z.~Yue, Q.~Zhao, J.~Xie, L.~Zhang, D.~Meng, and K.-Y.~K. Wong, ``Blind image
  super-resolution with elaborate degradation modeling on noise and kernel,''
  New Orleans, LA, Jun. 2022, pp. 2118--2128.

\bibitem{HCP2023}
Y.~Huang, E.~Chouzenoux, and J.-C. Pesquet, ``Unrolled variational bayesian
  algorithm for image blind deconvolution,'' vol.~32, pp. 430--445, 2023.

\bibitem{GX2023}
S.~Gao and X.~Zhuang, ``{B}ayesian image super-resolution with deep modeling of
  image statistics,'' vol.~45, no.~2, pp. 1405--1423, Feb. 2023.

\bibitem{ZVC2017}
M.~Zhang, B.~Vozel, K.~Chehdi, M.~Uss, S.~Abramov, and V.~Lukin, ``Blind
  estimation of blur in hyperspectral images,'' in \emph{Image and Signal
  Processing for Remote Sensing XXIII}.\hskip 1em plus 0.5em minus 0.4em\relax
  Warsaw, Poland: Proc.~SPIE 10427, Oct. 2017.

\bibitem{WM2018}
N.~Wurst and J.~Meola, ``Impact of platform motion on hyperspectral imaging
  target detection and ground resolution distance,'' in \emph{Algorithms and
  Technologies for Multispectral, Hyperspectral, and Ultraspectral Imagery
  XXIV}.\hskip 1em plus 0.5em minus 0.4em\relax Orlando, FL: Proc.~SPIE 10644,
  May 2018.

\bibitem{ZSJ2014}
H.~Zhao, H.~Shang, and G.~Jia, ``Simulation of remote sensing imaging motion
  blur based on image motion vector field,'' vol.~8, no.~1, Oct. 2014.

\bibitem{Del2008}
L.~De~Lathauwer, ``Decompositions of a higher-order tensor in block
  terms---{P}art {II}: {D}efinitions and uniqueness,'' \emph{{SIAM} Journal on
  Matrix Analysis and Applications}, vol.~30, no.~3, pp. 1033--1066, 2008.

\bibitem{KB2009}
T.~G. Kolda and B.~W. Bader, ``Tensor decompositions and applications,''
  \emph{{SIAM} Review}, vol.~51, no.~3, pp. 455--500, 2009.

\bibitem{GV2013}
G.~H. Golub and C.~F. Van~Loan, \emph{Matrix Computations}, 4th~ed.\hskip 1em
  plus 0.5em minus 0.4em\relax Baltimore, MD: The Johns Hopkins University
  Press, 2013.

\bibitem{DLF2020}
R.~Dian, S.~Li, L.~Fang, T.~Lu, and J.~M. Bioucas-Dias, ``Nonlocal sparse
  tensor factorization for semiblind hyperspectral and multispectral image
  fusion,'' vol.~50, no.~10, p. 4469, Oct. 2020.

\bibitem{CZH2022}
Y.~Chen, J.~Zeng, W.~He, X.-L. Zhao, and T.-Z. Huang, ``Hyperspectral and
  multispectral image fusion using factor smoothed tensor ring decomposition,''
  vol.~60, 2022.

\bibitem{ZZW2023}
X.~Zhang, J.~Zheng, D.~Wang, G.~Tang, Z.~Zhou, and Z.~Lin, ``Structured
  sparsity optimization with non-convex surrogates of $\ell_{2,0}$-norm: {A}
  unified algorithmic framework,'' vol.~45, no.~5, pp. 6386--6402, May 2023.

\bibitem{PMX2014}
Y.~Peng, D.~Meng, Z.~Xu, C.~Gao, Y.~Yang, and B.~Zhang, ``Decomposable nonlocal
  tensor dictionary learning for multispectral image denoising,'' Columbus, OH,
  Jun. 2014, pp. 2945--2956.

\bibitem{XZL2020}
J.~Xue, Y.~Zhao, W.~Liao, J.~C.-W. Chan, and S.~G. Kong, ``Enhanced sparsity
  prior model for low-rank tensor completion,'' vol.~31, no.~11, pp.
  4567--4581, Nov. 2020.

\bibitem{QZZ2022}
Y.~Qiu, G.~Zhou, Q.~Zhao, and S.~Xie, ``Noisy tensor completion via low-rank
  tensor ring,'' to appear.

\bibitem{CJL2021}
L.~Chen, X.~Jiang, X.~Liu, and Z.~Zhou, ``Logarithmic norm regularized low-rank
  factorization for matrix and tensor completion,'' vol.~30, pp. 3434--3449,
  2021.

\bibitem{XZM2017}
Q.~Xie, Q.~Zhao, D.~Meng, and Z.~Xu, ``{K}ronecker-basis-representation based
  tensor sparsity and its applications to tensor recovery,'' vol.~40, no.~8,
  pp. 1888--1902, Aug. 2017.

\bibitem{YLL2022a}
M.~Yang, Q.~Luo, W.~Li, and M.~Xiao, ``{3-D} array image data completion by
  tensor decomposition and nonconvex regularization approach,'' vol.~70, pp.
  4291--4304, 2022.

\bibitem{CGS2019}
W.~Chen, X.~Gong, and N.~Song, ``Nonconvex robust low-rank tensor
  reconstruction via an empirical {B}ayes method,'' vol.~67, no.~22, pp.
  5785--5797, Nov. 2019.

\bibitem{8698334}
L.~Zhang, L.~Song, B.~Du, and Y.~Zhang, ``Nonlocal low-rank tensor completion
  for visual data,'' \emph{IEEE Transactions on Cybernetics}, vol.~51, no.~2,
  pp. 673--685, 2021.

\bibitem{WM2022}
J.~Wright and Y.~Ma, \emph{High-Dimensional Data Analysis with Low-Dimensional
  Models}.\hskip 1em plus 0.5em minus 0.4em\relax Cambridge University Press,
  2022.

\end{thebibliography}

\end{document}